\RequirePackage[l2tabu,orthodox]{nag}
\documentclass
[11pt,letterpaper]
{article} 

\usepackage[notes=true,later=false,camera=false]{dtrt}
\usepackage[utf8]{inputenc}
\usepackage{ stmaryrd }
\usepackage{xspace,enumerate}
\usepackage[T1]{fontenc}
\usepackage[full]{textcomp}
\usepackage[american]{babel}
\usepackage{mathtools}

\renewcommand{\hat}[1]{\widehat{#1}}
\usepackage{amsthm}
\usepackage{thmtools}
\usepackage[capitalise,nameinlink]{cleveref}

\usepackage{empheq}

\definecolor{citeblue}{HTML}{0055cc}
\hypersetup{
colorlinks=true,
urlcolor=citeblue,
linkcolor=NavyBlue,
citecolor=PineGreen,
linktocpage=true,
}
\renewcommand*\backref[1]{\ifx#1\relax \else (pg. #1) \fi}

\renewcommand{\tilde}{\widetilde}

\usepackage{paralist}
\usepackage{turnstile}
\usepackage[framemethod=TikZ]{mdframed}
\mdfsetup{frametitlealignment=\center}
\usepackage{tikz}
\usepackage{caption}
\DeclareCaptionType{Algorithm}
\usepackage{newfloat}

\newtheorem{theorem}{Theorem}[section]
\newtheorem{lemma}[theorem]{Lemma}
\newtheorem*{lemma*}{Lemma}

\newtheorem{proposition}[theorem]{Proposition}
\newtheorem{fact}[theorem]{Fact}

\theoremstyle{definition}
\newtheorem{definition}[theorem]{Definition}
\newtheorem*{definition*}{Definition}
\newtheorem{remark}[theorem]{Remark}

\usepackage[linesnumbered,ruled]{algorithm2e}
\crefname{lemma}{Lemma}{Lemmas}
\crefname{fact}{Fact}{Facts}
\crefname{theorem}{Theorem}{Theorems}
\crefname{mtheorem}{Theorem}{Theorems}
\crefname{itheorem}{Theorem}{Theorems}
\crefname{corollary}{Corollary}{Corollaries}
\crefname{claim}{Claim}{Claims}
\crefname{example}{Example}{Examples}
\crefname{algorithm}{Algorithm}{Algorithms}
\crefname{problem}{Problem}{Problems}
\crefname{definition}{Definition}{Definitions}
\crefname{equation}{Eq.}{Eq.}
\crefname{strategy}{Strategy}{Strategies}
\crefname{observation}{Observation}{Observations}

\Crefname{algocf}{Algorithm}{Algorithms}

\usepackage[
letterpaper,
top=1.2in,
bottom=1.2in,
left=1in,
right=1in]{geometry}
\usepackage{newpxtext} 
\usepackage{textcomp} 
\usepackage[scr=rsfso]{mathalfa}
\usepackage{bm} 

\usepackage{microtype}

\usepackage{footnotebackref}

\renewcommand{\bar}{\overline}

\newcommand{\supp}{\operatorname{supp}}
\newcommand{\AND}{\operatorname{AND}}

\newcommand{\D}{\operatorname{D}}
\newcommand{\U}{\operatorname{U}}
\newcommand{\Sn}{\mathfrak{S}_n}

\allowdisplaybreaks
\newcommand{\FormatAuthor}[3]{
\begin{tabular}{c}
#1 \\ {\small\texttt{#2}} \\ {\small #3}
\end{tabular}
}

\newcommand{\R}{{\mathbb R}}

\newcommand{\eps}{\varepsilon}

\newcommand{\E}{{\mathbb E}}

\newcommand{\1}{\mathbf{1}}

\newcommand{\C}{\mathbb C}
\newcommand{\Bits}{\{0,1\}}
\newcommand{\zo}{\Bits}
\newcommand{\bra}[1]{\langle #1\rvert}
\newcommand{\ket}[1]{\lvert #1 \rangle}
\newcommand{\braket}{\ket}

\newcommand{\cH}{\mathcal H}

\newcommand{\mper}{\,.}
\newcommand{\mcom}{\,,}

\newcommand{\Id}{\operatorname{Id}}

\newcommand{\seq}{\subseteq}

\newcommand{\polylog}{\operatorname{polylog}}

\newcommand{\cK}{\mathcal K}

\newcommand{\cM}{\mathcal M}

\newcommand{\cL}{\mathcal L}

\renewcommand{\geq}{\geqslant}
\renewcommand{\ge}{\geqslant}
\renewcommand{\leq}{\leqslant}
\renewcommand{\le}{\leqslant}
\renewcommand{\preceq}{\preccurlyeq}
\renewcommand{\succeq}{\succcurlyeq}
\renewcommand{\epsilon}{\varepsilon}

\newcommand{\ignore}[1]{}

\newcommand{\id}{\operatorname{id}}
\newcommand{\dmix}{\operatorname{dmix}}
\newcommand{\cAK}{c_{\mathsf{AK}}}
\newcommand{\Vol}{\mathrm{Vol}}
\newcommand{\QC}{\mathrm{QC}}

\begin{document}

\title{Quantum Cut Sparsifiers}
\author{
 \begin{tabular}{cc}
 \FormatAuthor{Arpon Basu}{arpon.basu@princeton.edu}{Princeton University} &
 \FormatAuthor{Joshua Brakensiek\thanks{Supported by NSF Award DMS-2503280 and ONR Grant N00014-24-1-2491.}}{josh.brakensiek@berkeley.edu}{UC Berkeley} \\
  & \\
  \FormatAuthor{Pravesh K. Kothari}{kothari@cs.princeton.edu}{Princeton University} &
  \FormatAuthor{Aaron Putterman\thanks{Supported in part by the Simons Investigator Awards of Madhu Sudan and Salil Vadhan and AFOSR award FA9550-25-1-0112.}}{aputterman@g.harvard.edu}{Harvard University}
  \end{tabular}
  }
\maketitle

\begin{abstract}
In this paper, we continue a line of research initiated by Basu, Brakensiek, and Putterman [2026] studying the sparsifiability of Hamiltonians. We focus particularly on the sparsifiability of the widely-studied Quantum Cut (QC) Hamiltonians. Our main result is that in an $n$-qubit system, any $n$-qubit QC Hamiltonian can be sparsified to $\widetilde{O}(n /\epsilon^2)$ many terms while preserving the energy of every state up to a factor of $1 \pm \epsilon$. Our result can be interpreted as giving an importance sampling scheme for the edges of an arbitrary graph $G$ such that the \emph{Kikuchi} graph at level $\ell$ of the sampled graph is a spectral approximation to the Kikuchi graph of $G$. Importantly, the \emph{same} sampling scheme works simultaneously for all $\ell$. 

The natural approach of leverage score sampling, analyzed via matrix concentration inequalities, yields a polynomially worse bound in our setting because the underlying matrices have dimension $\sim 2^n$. Instead, our approach relies on decomposing the action of these matrices into invariant subspaces. Then, by using an operator-valued inequality of Alon and Kozma [Ann.~Henri Poincar\'e, 2020], itself building on an \emph{octopus inequality} of Caputo, Liggett, and Richthammer [J.~AMS, 2010], we extend our sparsification technique to all expander graphs. We then invoke expander decomposition to extend our sparsifier to all graphs.
\end{abstract}

\pagenumbering{gobble}

\clearpage

\pagebreak

\tableofcontents

\pagebreak

\pagenumbering{arabic}
\section{Introduction}

Sparsification is a fundamental algorithmic question, asking when large complex objects admit much smaller representations which still preserve key properties. In many settings it turns out that efficient, significant sparsification is possible, thereby dramatically reducing the size of objects. This paradigm was introduced in the seminal work of Bencz\'ur and Karger \cite{BenczurK96} who showed that arbitrary graphs $G = (V, E)$ admit $\eps$ \emph{cut-sparsifiers} with only $O\left (\frac{|V|\log|V|}{\eps^2} \right)$ many edges. These cut-sparsifiers preserve cut-sizes in the original graph $G$; i.e., for any set $S \subseteq V$, the weight of the cut $S$ in the cut-sparsifier is within a $(1 \pm \eps)$ factor of the weight of the same cut in $G$. Naturally, these cut-sparsifiers become an important algorithmic tool, as they allow for significant size-reduction in the input graphs to many algorithms. 

In this paper, we continue a line of research initiated by Basu, Brakensiek, and Putterman \cite{BasuBP26} studying the sparsifiability of \emph{Hamiltonians}. In particular, we study the sparsifiability of Quantum Cut (QC) Hamiltonians.\footnote{Note that often these are referred to as Quantum MaxCut Hamiltonians.} We consider the $2$-qubit Hamiltonian 
\[
M_{\QC} := (\ket{01}-\ket{10})(\bra{01}-\bra{10}) = \frac{1}{2}\left(\Id\otimes\Id - X\otimes X - Y\otimes Y - Z\otimes Z\right)\in\C^{2^2\times 2^2}, 
\]
where $$ \quad \Id = \begin{pmatrix} 1 & 0 \\ 0 & 1 \end{pmatrix}, \quad X = \begin{pmatrix} 0 & 1 \\ 1 & 0 \end{pmatrix}, \quad Y = \begin{pmatrix} 0 & -\sqrt{-1} \\ \sqrt{-1} & 0 \end{pmatrix}, \quad Z = \begin{pmatrix} 1 & 0 \\ 0 & -1 \end{pmatrix} $$
are the usual Pauli matrices in $\C^{2\times 2}$. 

For a weighted graph $G = (V, E, w)$ with vertex set $V$, edge set $E \subseteq \binom{V}{2}$, and edge weights $w : E \to \R_{\ge 0}$, we define the $2$-local Hamiltonian
\begin{equation}
\label{eq:qcpredicate}
    \cL_{G} := \sum_{e \in E} w_e \cL_e,
\end{equation}
where $\cL_e:= M_{\QC}|_{e} \otimes \Id_{V \setminus e}$ acts on the qubits in $e$. These QC Hamiltonians are very widely studied in the literature: for instance, a series of works \cite{brandao2013product, GharibianP19, anshu2020beyond, parekh2021application, parekh2022optimal, king2023improved, huber2024second, lee2024improved, apte2025improved, gribling2025improved} has focused on understanding the ability of efficient algorithms to approximate the \emph{maximum energy} achievable (AKA, the Max-CUT) of a QC instance (i.e, $\max_{\braket{\psi}: \Vert \braket{\psi}\Vert_2 = 1} \langle \psi | \cL_G |\psi \rangle$). Other works have focused on the \emph{hardness} of this problem, with \cite{hwang2023unique, piddock2025quantum} showing conditions under which it is hard to approximate the maximum energy. Even further, other works \cite{KallaugherP22} have studied streaming algorithms for approximating Quantum Max-CUT. 

Motivated by this extensive research into QC Hamiltonians, we seek to understand the true ``representation complexity'' of a QC Hamiltonian. More formally, our goal is to construct a \emph{sparsifier} of $\cL_G$. 
\begin{definition}[Quantum Cut Sparsifier]\label{def:qcs}
Given a graph $G = (V, E, w)$ and some $\eps > 0$, a graph $\widetilde{G} = (V, \widetilde{E}, \widetilde{w})$ is an $\eps$-Quantum Cut ($\eps$-QC) sparsifier of $G$ if for all states $\ket{\psi} \in \C^{\{0,1\}^V}$, we have that 
\begin{align}
    (1 - \eps) \bra{\psi}\cL_{G}\ket{\psi} \preceq \bra{\psi}\cL_{\widetilde{G}}\ket{\psi} \preceq (1 + \eps) \cL_{G} \bra{\psi}G\ket{\psi}.\label{eq:QCS}
\end{align}
\end{definition}
In other words,
\[
    (1 - \eps) \cL_{G} \preceq \cL_{\widetilde{G}} \preceq (1 + \eps) \cL_{G},
\]
where $A \preceq B$ is notation for $B - A$ being positive semidefinite. Importantly, this preserves the energy of the QC Hamiltonian \emph{on every single} state $\braket{\psi}$, and thus ensures a ``full-spectrum'' simulation of the original Hamiltonian \cite{zhou2021strongly, csahinouglu2021hamiltonian, zlokapa2024hamiltonian}.
As mentioned above, the goal of this work is to answer the following basic question for QC Hamiltonians:

\begin{center}
\emph{What is the smallest possible $\eps$-QC sparsifier for a given input graph?}
\end{center}

If we let $n = |V|$ be the number of vertices of our input graph $G = (V, E, w)$, then \emph{a priori} the answer could land anywhere between $\Omega(n)$ and $O(n^2)$. The work of Aharonov and Zhou~\cite{AZ19} gave evidence that the answer is $\Theta(n^2)$ (i.e., essentially no sparsification is possible) by constructing an explicit $2$-qubit Hamiltonian ($M_{\mathrm{AZ}} := \ket{11}\bra{11}$) for which sparsification (in a sense analogous to \cref{def:qcs}) is provably impossible. In fact, this result by \cite{AZ19} was cited by Kallaugher and Parekh~\cite{KallaugherP22} as a strong barrier to Quantum Cut sparsification.

However, a recent paper by Basu, Brakensiek, and Putterman~\cite{BasuBP26} puts doubts on the consensus opinion that Quantum Cut sparsification is impossible. Inspired by recent advances in the theory of (classical) CSP sparsification (e.g., \cite{KoganK15,FiltserK17,ButtiZ20,BessiereCK20,ChenKN20,ChenJP20,LagerkvistW20,KapralovKTY21,KapralovKTY21a,KhannaPS24,KhannaPS25,BrakensiekG25}), \cite{BasuBP26} introduced the notion of \emph{non-redundancy} of local Hamiltonians which measures how fragile the ground space of the Hamiltonian is to changes in the local terms. In particular, \cite{BasuBP26} shows that the $2$-qubit Hamiltonian $M_{\mathrm{AZ}}$ has $\Omega(n^2)$ non-redundancy in the sense that even in a complete graph deleting a single edge of the graph $G$ changes the ground state of the resulting Hamiltonian. On the other hand, \cite{BasuBP26} show for $M_{\QC}$ that no similar barrier exists--the non-redundancy of Quantum Cut is $O(n)$. As a result, \cite{BasuBP26} ask if there exist $\eps$-QC sparsifiers with $\widetilde{O}_{\eps}(n)$\footnote{Here $\widetilde{O}_{\eps}$ hides multiplicative factors of the form $f(\eps)(\log n)^{O(1)}$, where $f(\eps)$ is an arbitrary function of $\epsilon$.} edges.

\subsection{Our Results}

Our main result is a positive answer to a question left open in \cite{BasuBP26} by showing that near-linear Quantum Cut sparsifiers do indeed exist. 

\begin{theorem}[Main, Informal]\label{thm:main}
For any graph $G = (V, E, w)$ with $n = |V|$ and any $\eps > 0$, there exists an $\eps$-QC sparsifier $\widetilde{G} = (V, \widetilde{E}, \widetilde{w})$ with $|\widetilde{E}| \le \widetilde{O}(n/\eps^2)$. Furthermore, if $G$ is an unweighted graph, such a sparsifier can be computed in near-linear time via a \textbf{classical} algorithm.
\end{theorem}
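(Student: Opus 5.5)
The plan follows the roadmap in the abstract: decompose into invariant subspaces, handle expanders via the Alon--Kozma inequality, and reduce general graphs to expanders by expander decomposition.

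\textbf{Step 1: reduction to transpositions and invariant subspaces.} Write $M_{\QC} = \Id - \mathrm{SWAP}$, so $\cL_e = \Id - P_{(uv)}$ with $P_{(uv)}$ the permutation operator of the transposition $(uv)$ on $(\C^2)^{\otimes n}$. Hence $\cL_G$ is the image of the group-algebra element $\sum_e w_e(1-(uv))$ under the tensor representation of $\Sn$. This representation splits into Hamming-weight sectors $k=0,\dots,n$. The sector of weight $k$ is exactly the permutation action on $\binom{V}{k}$, so $\cL_G$ restricted to it is the Laplacian of the level-$k$ Kikuchi graph of $G$; this is the Kikuchi interpretation in the abstract. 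By Schur--Weyl duality, every irreducible that occurs is indexed by a two-row partition $\lambda=(n-j,j)$. Therefore it suffices to build $\widetilde G$ with $(1-\eps)\rho_\lambda(\Delta_G)\preceq \rho_\lambda(\Delta_{\widetilde G})\preceq(1+\eps)\rho_\lambda(\Delta_G)$ for every two-row $\lambda$ simultaneously, where $\Delta_G=\sum_e w_e(1-(uv))$.

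\textbf{Step 2: expanders.} Suppose $G$ is a $\phi$-expander, with the number of edges, weights and degrees roughly uniform after bucketing. I use the Alon--Kozma operator inequality, which descends from the octopus inequality: for any weighted graph $H$ and any representation, $\rho(\Delta_H)\succeq c\,\lambda_2(L_H)\cdot\rho(\Delta_{K_n})/n$, up to normalization.

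The key observation is that in each irreducible $\rho(\Delta_{K_n})$ is a scalar. It equals $\binom n2$ minus the content sum, which is $j(n-j+1)$ for $\lambda=(n-j,j)$. So the spectral gap of $\rho(\Delta_G)$ is comparable to its top eigenvalue $\le \sum_e w_e\cdot 2$ times a mixing factor. More precisely, I want a two-sided sandwich $c_1\phi^2\, \frac{W}{\binom n2}\rho(\Delta_{K_n})\preceq\rho(\Delta_G)\preceq c_2 \frac{d_{\max}}{\ldots}\rho(\Delta_{K_n})$, and I derive the upper bound via the same comparison reversed on degree-bounded graphs.

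Then I sample each edge independently with probability $p\approx \polylog(n)/(\eps^2\phi^{O(1)} d)$ and reweight by $1/p$. The error $\rho(\Delta_{\widetilde G})-\rho(\Delta_G)$ is a sum of independent operators whose norms are bounded by $2/p$ in every irreducible. The dimension of the irreducible can be $\sim 2^n$, so a matrix Chernoff bound per irrep is too weak. Instead I would apply matrix Bernstein in the group algebra, that is, in the regular representation restricted to two-row isotypics. A cleaner alternative is to bound the error graph $\widetilde G - G$ itself: with high probability the signed graph $\widetilde G-G$ has small spectral norm as a classical Laplacian, which is standard since it has only $n$ vertices. I then need a transfer statement, that small $\|L_{\widetilde G}-L_G\|$ together with Alon--Kozma implies small error in all irreps, uniformly in the irrep. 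This transfer is the main obstacle. Classical spectral closeness does not obviously lift, because the signed difference is not a Laplacian of a positive graph. What I would try first is to split the difference as $(\widetilde G - (1-\eps)G) $ and $((1+\eps)G-\widetilde G)$. Each should be a positive-weight graph that is itself an expander w.h.p. after refining by sampling in a structured way, for example by sampling a random regular sub-expander per vertex. Alon--Kozma applied to each piece, plus the scalar action of $\Delta_{K_n}$, then gives both inequalities with error $\eps$ relative to $\rho(\Delta_G)$.

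\textbf{Step 3: general graphs.} Apply a (near-linear time) expander decomposition to $G$ so that $G=\bigcup_i G_i\cup R$, with each $G_i$ a $\phi=1/\polylog(n)$ expander on vertex set $V_i$ and $R$ containing at most half the edges. Recurse on $R$ for $O(\log n)$ levels. Since operators $\rho(\Delta_{G_i})$ are PSD and sparsify additively ($A_i'\approx_\eps A_i$ implies $\sum A_i'\approx_\eps\sum A_i$), sparsifying each expander piece via Step 2 with $\widetilde O(|V_i|/\eps^2)$ edges gives a total of $\widetilde O(n/\eps^2)$ edges, since each level's vertex sets partition $V$. Weighted graphs are handled by bucketing weights into powers of two, at an extra $\log$ factor. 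For unweighted inputs, deterministic/randomized near-linear-time expander decomposition plus independent sampling yields the claimed classical near-linear-time algorithm.
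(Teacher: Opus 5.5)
\textbf{Gap 1: the concentration step in Step 2.} Your outline matches the paper's: the two-row isotypic decomposition, Alon--Kozma comparison with $K_n$ for expanders, and expander decomposition. But the core step, concentration for expanders, is left open, and the fix you propose cannot work.

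You drop per-component matrix Chernoff because dimensions reach $\sim 2^n$. Yet you already state the two facts that make it go through.
\begin{itemize}
\item The $(n-k,k)$-isotypic component $E_k\subseteq\C^{2^{[n]}}$ has dimension $(n-2k+1)\bigl(\binom nk-\binom n{k-1}\bigr)\le n^{k+2}$.
\item $\cL_{K_n}$ acts on $E_k$ as the scalar $k(n+1-k)\ge kn/2$.
\end{itemize}
For an $m$-edge $\phi$-expander with min/average degree ratio $\alpha_G$, Alon--Kozma then gives $\cL_G|_{E_k}\succeq \frac{2m\beta}{n^2}\,k(n+1-k)\,\Id_{E_k}$ with $\beta=\Omega\bigl(\phi^4\alpha_G^2/\log^2(n/\alpha_G)\bigr)$. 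Meanwhile $\cL_e\preceq 2\,\Id$. Sample at rate $p=\Theta\bigl(n\log n/(\beta\eps^2 m)\bigr)$. Then on $E_k$ you get $\cL_e/p\preceq R^{-1}\cL_G$ with $R\ge 50k\log n/\eps^2$. The Chernoff failure probability is therefore $2\dim(E_k)e^{-\eps^2R/3}\le n^{k+2}\cdot n^{-16k}$.

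So the eigenvalue's linear growth in $k$ exactly pays for $\log\dim E_k\approx k\log n$, and a union bound over $k$ finishes. This balance is the missing idea.

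Your replacement fails for a structural reason. $\widetilde G-(1-\eps)G$ has weight $-(1-\eps)w_e<0$ on every unsampled edge. $(1+\eps)G-\widetilde G$ is negative on every sampled edge once $p<1/(1+\eps)$. So for any scheme that actually sparsifies, neither piece is a nonnegative-weight graph, and Alon--Kozma says nothing about signed graphs.

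Note also that Alon--Kozma is not of the form $c\,\lambda_2(L_H)\rho(\Delta_{K_n})/n$. Its constant carries $\min_i w_i^2/\sum_i w_i$ and a mixing-time factor. Your decomposition must therefore guarantee, in every piece, minimum degree within $\polylog(n)$ of the average degree. Bucketing cannot supply this; the paper uses expander decompositions with that property.

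\textbf{Gap 2: weighted graphs.} Power-of-two bucketing multiplies the size by the number of nonempty buckets. That number is bounded only by $\min\{\binom n2,\log_2(w_{\max}/w_{\min})\}$, not by $\polylog(n)$.

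For example, let $G$ be a union of $\Theta(n)$ edge-disjoint Hamiltonian cycles whose weights are separated by factors $n^{30}$. Each bucket's sparsifier must still approximate a cycle at level $k=1$, so it must be connected and spanning. Bucket-by-bucket sparsification thus keeps $\Omega(n^2)$ edges, even though the heaviest cycle alone is already a $(1\pm\eps)$-sparsifier.

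Bucketing only recovers the paper's limited-aspect-ratio bound, with a $\log^2(n\,w_{\max}/w_{\min})$ factor. To remove it, the paper groups buckets separated by factors $n^{20}$ and uses Chen's inequality $R_G(u,v)\,\cL_G\succeq\cL_{uv}$.
\begin{itemize}
\item Light edges inside a connected component of the heavier graph are spectrally negligible.
\item Light edges between two components can be moved onto a single representative pair; this is the edge-moving lemma.
\end{itemize}
Each new weight level is then sparsified on the contracted graph at cost $f(C_t-C_{t+2})$, and these costs telescope to $f(n)$.
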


See \cref{thm:KikuchiSparsifierExistence} and \cref{thm:KikuchiSparsifierEfficient} for precise formulations of Theorem~\ref{thm:main}. The result of \cref{thm:main} can be stated in purely combinatorial terms using \emph{Kikuchi graphs}, see \cref{subsec:related-work} for discussion.

\subsection{Quantum Cuts and Kikuchi Graphs}
\label{subsubsec:kikuchiquantumconnection}
The quantum-cut Hamiltonian from \cref{eq:qcpredicate} has a particularly appealing combinatorial interpretation in terms of \emph{Kikuchi graphs}, a.k.a as \emph{token graphs}. 

Towards elucidating the connection, first consider any function $\psi:\zo^{[n]}\to\C$ (which can be interpreted as a vector in $\C^{\zo^{[n]}}$), and fix any $e = \{u, v\}\in\binom{[n]}{2}$. Then 
\[\langle\psi, \cL_e\psi\rangle = \langle\psi, (M_{\text{QC}})_e\otimes\Id_{[n]\setminus e}\psi\rangle = \sum_{z\in\zo^{[n]\setminus e}}\langle\psi_z, (\ket{01}-\ket{10})(\bra{01}-\bra{10})\psi_z\rangle,\]
where $\psi_z:\zo^{e}\to\C$ is the function defined as $\psi_z(y):= \psi(z, y)$, where $(z, y)\in\zo^{[n]\setminus e}\times\zo^e\cong\zo^{[n]}$ is the concatenation of $z$ and $y$. Simplifying further yields
\[\langle\psi, \cL_e\psi\rangle = \sum_{z\in\zo^{[n]\setminus e}}|\psi(z, 01)-\psi(z, 10)|^2\mper\]
The above expression naturally suggests a graph on $\zo^{[n]}$: For every $z\in\zo^{[n]\setminus e}$, connect $(z, 01), (z, 10)\in\zo^{[n]}$. If we call this graph $\cK_e$, then note that $\langle\psi, \cL_e\psi\rangle = \sum_{z\in\zo^{[n]\setminus e}}|\psi(z, 01)-\psi(z, 10)|^2$ is precisely
\[\sum_{\{x_1, x_2\}\text{ is an edge of }\cK_e}|\psi(x_1)-\psi(x_2)|^2,\]
which is exactly equal to the quadratic form of $\psi$ with the Laplacian of the graph $\cK_e$, i.e. $\cL_e$ is the Laplacian of $\cK_e$! The underlying graph $\cK_e$ also makes clear the nomenclature of the predicate $M_{\text{QC}}$ being \emph{quantum cut}. 

Before we formalize this definition, it is convenient to rephrase it slightly: Note that we can naturally identify $\zo^{[n]}$ with the power set $2^{[n]}$. Then the adjacency condition of $\cK_e$ becomes that $S, T\in 2^{[n]}$ are connected if there exists $Z\in 2^{[n]\setminus e}$ such that $S = Z\cup\{u\}, T = Z\cup\{v\}$ (or $S = Z\cup\{v\}, T = Z\cup\{u\}$), i.e. $S, T$ are connected if $|S| = |T|$ and $S\oplus T = \{u, v\} = e$, where $\oplus$ refers to the symmetric difference of sets. We thus formalize the definition of a Kikuchi graph as follows.
\begin{definition}[Kikuchi Graphs]
\label{def:Kikuchidef}
    Let $e = \{u, v\}\in\binom{[n]}{2}$ be an edge. Define the \emph{Kikuchi graph} $\cK_e$ to be a graph with the vertex set $2^{[n]}$, where $S, T\in 2^{[n]}$ are connected if $|S| = |T|$ and $S\oplus T = e$. For a general weighted graph $G = ([n], E, w)$, define the Kikuchi graph $\cK_G:= \bigcup_{e\in E(G)}w_e\cK_e$, i.e. the vertices of $\cK_G$ are elements of $2^{[n]}$, and $S, T\in 2^{[n]}$ are connected if $S\oplus T\in E(G)$, with the weight of $\{S, T\}$ being $w_{S\oplus T}$. 
\end{definition}

By the discussion above, we also know that $\cL_G$ (as defined in \cref{eq:qcpredicate}) is the Laplacian of $\cK_G$. Now, note that since $S, T\in 2^{[n]}$ can be adjacent only if $|S| = |T|$, we have that for any graph $G$, $\cK_G$ is the disjoint union of $n + 1$ graphs on the \emph{slices} $\binom{[n]}{k}:= \{S\in 2^{[n]}: |S| = k\}$ for $0\leq k\leq n$. Fix any $0\leq k\leq n$, and let $\cK_{G, k}:= \cK_G[\binom{[n]}{k}]$ be the restriction of $\cK_G$ to $\binom{[n]}{k}$. 

\begin{definition}[Level $k$ Kikuchi graphs]
    For a general weighted graph $G = ([n], E, w)$, define the \emph{level $k$} Kikuchi graph $\cK_{G, k}$ to be a graph on $\binom{[n]}{k}$, with $S, T\in\binom{[n]}{k}$ being connected if $S\oplus T\in E(G)$, with the weight of $\{S, T\}$ being $w_{S\oplus T}$. 
\end{definition}
Note that $\cK_{G, 1}$ is isomorphic to $G$. For $k\geq 1$, the graph $\cK_{G, k}$, also known as the \emph{$k$-token graph}, naturally arises in problems related to the quantum cut predicate, and understanding the spectra of these Kikuchi graphs provides the best-known approximation algorithms for the quantum max cut problem, and other related predicates such as the ``XY'' predicate \cite{AptePS25,BakshiBKL26}. Kikuchi graphs/token graphs also arise in various contexts throughout classical computer science and combinatorics, for instance, in positivstellensatz lower bounds for the knapsack problem \cite{Grigoriev01}, in tensor PCA \cite{WeinEAM25}, in the proof of the hypergraph Moore bound \cite{GuruswamiKM22,HsiehKM23}, in refuting random CSPs \cite{GuruswamiKM22,HsiehKM23}, in proving lower bounds for locally decodable and correctable codes \cite{AlrabiahGKM23,BasuHKL25,JanzerM25,KothariM23}, and even in proving data structure lower bounds \cite{KortenPI25,GuruswamiLY26}! 

Let us now tie all the discussion above with sparsification, which is the main focus of this paper. The notion of Quantum Cut sparsification that we consider in \cref{def:qcs} is much more general than classical and spectral notions of cut sparsification considered in the literature. In particular, if we only demand that \cref{eq:QCS} holds for states of the form $\ket{\psi} = \sum_{i =1}^n c_i \ket{0}^{i-1}\ket{1}\ket{0}^{n-i}$ with each $c_i \in \{-\frac{1}{\sqrt{n}}, \frac{1}{\sqrt{n}}\}$, then we recover the original notion of cut sparsifiers due to Bencz\'{u}r and Karger~\cite{Karger93,BenczurK96}. If we allow $c_1, \hdots, c_n \in \mathbb C$ to vary arbitrarily, we recover the notion of spectral sparsification of graphs~\cite{SpielmanT11,SpielmanS11,BatsonSS14}. But our \cref{thm:main} holds for \emph{all} $\psi\in\C^{2^{[n]}}$, and thus is considerably more general than the previous two cases. This generality can also be seen very easily from the ``Kikuchi perspective'', since sparsifying the Quantum Cut predicate involves sparsifying \emph{all} the Kikuchi graphs $\cK_{G, k}$ simultaneously for all $0\leq k\leq n$, and because $\cK_{G, 1}$ is isomorphic to $G$, a quantum cut sparsifier has to also spectrally sparsify the graph $G$ at the very least.

Thus given the importance of Kikuchi graphs, we restate \cref{thm:main} in the language of Kikuchi graphs below to highlight the combinatorial nature of the theorem:
\begin{theorem}[Kikuchi sparsifiers]
\label{thm:kikuchimain}
    For any graph $G = ([n], E, w)$ and $\eps > 0$ there exists a (weighted) graph $H = ([n], \tilde{E}, \tilde{w})$ such that $\lvert\supp(\tilde{w})\rvert\leq\tilde{O}(\eps^{-2}n)$, and 
    \[(1 - \eps)\cL_{\cK_{G, k}}\preceq\cL_{\cK_{H, k}}\preceq(1 + \eps)\cL_{\cK_{G, k}}\]
    for all $0\leq k\leq n$. 
\end{theorem}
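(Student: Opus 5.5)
The plan follows the route sketched in the abstract: handle expanders first, using an operator inequality in place of matrix Chernoff, and then reduce general graphs to expanders by expander decomposition. I first record the reduction that makes the problem tractable. Each $\cL_e$ is $\frac12(\Id - \mathrm{SWAP}_e)$ written in the Pauli basis, i.e.\ $\cL_e = \Id - P_{(u\,v)}$, where $P_{(u\,v)}$ permutes tensor factors. So $\cL_G = \sum_e w_e(\Id - P_{(u v)})$ is the image of the weighted interchange-process generator $\sum_e w_e(1-(u\,v))\in\C[\Sn]$ under the natural representation of $\Sn$ on $(\C^2)^{\otimes n}$. By Schur--Weyl duality this representation decomposes into irreducibles $V_\lambda$ with $\lambda$ having at most two rows. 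Hence it suffices to show that the sampled weights $\tilde w$ satisfy $(1-\eps)\rho_\lambda(\cL_G)\preceq \rho_\lambda(\cL_H)\preceq(1+\eps)\rho_\lambda(\cL_G)$ for every two-row $\lambda$. Equivalently, by the Kikuchi picture, for every slice $k$. Applying matrix Chernoff directly in dimension $2^n$ loses a factor of $n$ in the union bound, so this is exactly what must be avoided.

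\textbf{Step 1 (expanders).} Suppose $G$ is a $\phi$-expander with $\phi\geq 1/\polylog(n)$. I would sample each edge with probability $p_e\propto w_e/\min(d_u,d_v)$, scaled so that about $\tilde O(n/\eps^2)$ edges survive, and reweight survivors by $1/p_e$. The key tool is the Alon--Kozma operator inequality built on the octopus inequality. It says that for any weights, the spectral gap of the interchange process in every representation is controlled by, and dominated in the Loewner order up to constants by, the corresponding graph Laplacian $L_G$ acting on the standard-like pieces. Concretely, I would use it in the form that the operator $\rho(\cL_G)$ restricted to the orthogonal complement of its kernel is sandwiched by graph-level quantities.

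The plan is then:
\begin{enumerate}
\item Classical spectral sparsification (e.g.\ effective-resistance sampling) makes $L_H\approx_{\eps} L_G$, and also the vertex degrees concentrate. On an expander the resistances are about $1/d_u+1/d_v$, which matches the chosen $p_e$.
\item The difference $\Delta=\cL_H-\cL_G=\sum_e (\tilde w_e-w_e)\cL_e$ is a sum of independent mean-zero terms.
\item I would bound $\Delta$ using Alon--Kozma in each irrep. Expanding $\Delta$ through the octopus decomposition writes it as a combination of star-graph terms $\sum_{v}(\cdot)$, whose fluctuations are governed only by degree and Laplacian concentration in $G$. This yields $|\langle\psi,\Delta\psi\rangle|\le \eps\langle\psi,\cL_G\psi\rangle$ on the span of the nonzero eigenspaces.
\end{enumerate}
The hope is that the octopus structure converts a $2^n$-dimensional concentration problem into $O(n)$-dimensional ones: degrees plus $L_G$.

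\textbf{Step 2 (general graphs).} Apply an expander decomposition, bucketing edges by weight. This partitions $E$ into clusters that induce $\phi$-expanders, with at most a constant fraction of the edge weight crossing clusters at each level. Recursing on the crossing edges gives $O(\log n)$ levels, each a vertex-disjoint union of expanders with a total of $\tilde O(n)$ vertices counted with multiplicity. Sparsify each expander piece by Step 1. Since $\cL_G=\sum_{\text{pieces}}\cL_{G_i}$ and each $\cL_{H_i}\approx_\eps \cL_{G_i}$ in the Loewner order, summing the PSD inequalities gives the claim for all $k$ at once. The edge count is $\sum_i\tilde O(|V_i|/\eps^2)=\tilde O(n/\eps^2)$.

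\textbf{Main obstacle.} The hardest part is Step 1(iii): turning the Alon--Kozma/octopus inequality, which is a deterministic comparison of operators, into a concentration statement for a random reweighting that is uniform over all irreps. The first approach I would try is to prove the two-sided bound $c\,\phi^{O(1)}\,\mathcal D_G\preceq \cL_G$ on the nonkernel part, where $\mathcal D_G$ is a degree-weighted star operator. Random reweighting perturbs $\mathcal D$ only through vertex degrees, which concentrate with $\tilde O(n/\eps^2)$ samples by scalar Chernoff plus a union bound over the $n$ vertices. If this loses $\phi$ factors, the polylogarithmic expansion from the decomposition absorbs them.
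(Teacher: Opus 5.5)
Your reduction to two-row isotypic components matches the paper's decomposition $\C^{2^{[n]}}=\bigoplus_k E_k$. The argument, however, breaks at Step~1(iii), which is where all the work lies. Neither tool you cite does what you ask of it. The octopus inequality and the Alon--Kozma theorem are one-sided Loewner inequalities between operators with \emph{nonnegative} weights. The first compares a star with the clique it induces. The second compares $\nabla_G$ with a scalar multiple of $\nabla_{K_n}$, not with $L_G$ in each representation; Aldous' conjecture concerns only the spectral gap. They therefore give no way to decompose or bound the signed random operator $\Delta=\sum_e(\tilde w_e-w_e)\cL_e$. Likewise, a lower bound $\cL_G\succeq c\,\mathcal D_G$ gives no control of $\Delta$, which is not a function of the degrees.

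More basically, Loewner comparisons at level one need not survive at higher slices. For $n=4$, let $C$ be the $4$-cycle with edges $\{1,2\},\{2,4\},\{4,3\},\{3,1\}$ and $D$ the diagonals $\{1,4\},\{2,3\}$, all of weight $\tfrac12$. Then $x^{\top}(L_C-L_D)x=\tfrac12(x_1-x_2-x_3+x_4)^2\ge 0$. Yet the level-$2$ function $\psi$ equal to $1$ on $\{1,2\},\{3,4\}$, to $-1$ on $\{1,3\},\{2,4\}$ and to $0$ on $\{1,4\},\{2,3\}$ has $\langle\psi,\cL_C\psi\rangle=4<8=\langle\psi,\cL_D\psi\rangle$. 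So ``degree plus Laplacian concentration'' is not a mechanism for controlling higher levels, and you supply no other.

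The missing idea is that matrix Chernoff, which you set out to avoid, works once it is applied block by block. Each $\cL_e$ preserves $E_k$, the $(n-k,k)$-isotypic component, and $\dim E_k\le n^{k+2}$. Take a $\phi$-expander whose minimum degree is within a factor $\alpha_G$ of the average. Alon--Kozma, plus the projection from the interchange process to the exclusion process, gives $\cL_G\succeq c\,\frac{2m}{n^2}\,\cL_{K_n}$ with $c=\Omega(\phi^4\alpha_G^2/\log^2(n/\alpha_G))$. Moreover, $\cL_{K_n}$ acts on $E_k$ as the scalar $k(n+1-k)$. Since $\|\cL_e\|\le 2$, uniform sampling at rate $p\approx n\log n/(c\,\eps^2 m)$ makes each rescaled term at most $O(\eps^2/(k\log n))\,\cL_G$ on $E_k$. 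The Chernoff failure probability on $E_k$ is then $n^{k+2}\cdot n^{-\Omega(k)}$, which is summable in $k$: the $1/k$ decay of each edge's relative importance exactly pays for $\log\dim E_k=O(k\log n)$. This is also why the decomposition must output pieces with balanced degrees, since the Alon--Kozma constant involves $\min_i w_i^2/\sum_i w_i$; your Step~2 does not arrange this.

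Finally, Step~2 does not give a weight-independent bound. Sparsifying weight buckets separately costs a factor of the number of nonempty buckets, and if every edge lies in its own bucket nothing is removed at all. The paper handles arbitrary weights by processing weight classes separated by factors of $n^{20}$ from heaviest to lightest. It uses Chen's inequality $R_G(u,v)\,\cL_G\succeq\cL_{uv}$ to discard lighter edges inside components of the heavier graph, and to collapse lighter edges between two components onto a single representative edge. The sizes then add up to $\tilde O(n/\eps^2)$ by subadditivity.
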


Given the connections above, henceforth we use the terms ``quantum cut sparsifiers'' and ``Kikuchi sparsifiers'' interchangeably.
\subsection{Challenges in Proving \cref{thm:main}}
It is worth reflecting on \emph{why} \cref{thm:main} is hard to prove, as it may seem that \cite{BasuBP26} has already introduced a robust framework for understanding the sparsifiability of Hamiltonians. Unfortunately, the results of \cite{BasuBP26} come in two forms: 
\begin{enumerate}
    \item The first is a sequence of reductions from Pauli-Hamiltonians to classical CSP sparsification. Such a reduction is not possible in the setting of Quantum Cut: the $2$-qubit Hamiltonian $M_{\mathrm{QC}}$ has rank $1$, and so any attempted reduction would yield a classical CSP where each constraint has $1$ satisfying assignment. But, the theory of classical CSP sparsification \cite{FiltserK17, ButtiZ20, KhannaPS24, BrakensiekG25} shows that it is impossible to get sparsifier sizes below $n^2$ for these types of CSPs!
    \item The second type of result in \cite{BasuBP26} relies on basic invocations of Matrix Chernoff. Roughly speaking, they bound the ``importance'' scores for each local term in the Hamiltonian; in our setting these would be $p_e = \max_{\braket{\psi}} \frac{\langle \psi | \cL_e|\psi \rangle}{\langle \psi | \cL_G|\psi \rangle}$ (that is, the \emph{maximum} fraction of the energy that ever comes from that single term). In \cite{BasuBP26}, they show that for many natural choices of Hamiltonians, many of these importance scores are bounded below $1/n$; this is crucial for their application. Indeed, this is because they then \emph{randomly subsample} those terms which have small importance scores, and use Matrix Chernoff to argue that the energy of the Hamiltonian concentrates. But, because the states $\braket{\psi}$ live in $\C^{2^{[n]}}$, applying Matrix Chernoff gives concentration only with probability $1 - 2^n \cdot 2^{-1/R}$, where $R$ is the \emph{largest} importance score. For any value of $R > \frac{1}{n}$, this statement gives no meaningful bound, as then $2^n \cdot 2^{-1/R} > 1$. 

    This turns out to be a major bottleneck in our setting. Indeed, even for a Quantum Cut Hamiltonian that arises from the complete graph, it is the case that every local term has importance $\geq 1/n$. This makes naively applying Matrix Chernoff impossible and is a barrier to even mild subsampling (say, at rate $1/2$).  Similar barriers arise when using, for instance, black-box results for sparsifying sums of PSD matrices (see \cite{BasuKLM26}). Here again, there is an extra logarithmic factor of the dimension which is paid. Because the dimension is $2^n$, this leads to sparsifiers of size $O(n^2 / \eps^2)$, which is trivial (as any graph contains at most $n^2$ edges to begin with).  
\end{enumerate}

As we shall see, \cref{thm:main} relies on techniques far more subtle than \cite{BasuBP26}. We explain these in greater detail below.

\subsection{Our Techniques}
Let us first describe our ideas for unweighted graphs $G$. 
Our starting point is a recent result~\cite{Kothari25} that, as a special case, yields a near-linear QC-sparsifier for the complete graph. More generally, their work shows that Kikuchi graphs of random $2r$-uniform hypergraphs with appropriate density are spectral approximations to those of complete $2r$-uniform hypergraphs. The proof there is based on two ideas. First, they decompose the action of (the Laplacian of) the level $\ell$ Kikuchi graph (on vertex set ${[n] \choose \ell}$) into a direct sum of ``invariant subspaces'' $E_0 \oplus E_1 \oplus \cdots \oplus E_{\ell}$, each of which is invariant under the action of matrices from the \emph{Johnson scheme}; in particular, for the action of the Kikuchi graph of the complete $2r$-uniform hypergraph. They then observe that when restricted to any invariant subspace, the sum of the leverage scores of all the hyperedges is upper-bounded at a scale that just suffices for an application of a matrix concentration inequality to work! Unfortunately, the subspaces obtained are not invariant under the eigenspaces of Kikuchi graphs of arbitrary (and, in particular, sampled) hypergraphs. To tackle this, their main observation is that the action of Kikuchi graphs on any single $E_i$, while not invariant, is ``band limited''. That is, any $\psi \in V_i$ is sent to $\bigoplus_{j \leq O(r)}V_{i \pm j}$. This allows them to show that applying matrix concentration restricted to direct sums of contiguous blocks of $O(r)$ $V_i$s suffices to get a sparsifier. Here, we must deal with \emph{arbitrary} graphs, where the results in~\cite{Kothari25} do not apply.\footnote{Indeed, an appropriate generalization of~\cite{Kothari25} to arbitrary hypergraphs will result in a resolution of the question of finding the correct density vs running time trade-offs for semirandom planted CSPs, a question that has now remained open for a few years}

\paragraph{A direct proof for complete graphs} Our plan is to begin with an analogous decomposition of the action of the Kikuchi graph into subspaces that, unlike in the case of (higher arity) hypergraphs, are \emph{invariant for all graphs}! Our decomposition is the standard ``harmonic'' decomposition of functions on slices of the $n$-dimensional Boolean hypercube. In this decomposition, there is a subspace for each $0 \leq i \leq n/2$ and the $i^{\text{th}}$ subspace has dimension $\leq n^{i + 2}$. When $G$ is the complete graph, we can argue, analogously to~\cite{Kothari25}, that the sum of the leverage scores restricted to the $i^{\text{th}}$ invariant subspace is proportional to $n/i$. This is because restricted to the subspace, the operator is in fact constant (i.e., has a single eigenvalue). We can thus apply matrix concentration inequalities and obtain a direct argument for sparsifying Kikuchi graphs of complete graphs.  

For arbitrary graphs $G$, the associated operator can have widely varying eigenvalues on each invariant subspace and thus our argument above for bounding the sum of the leverage scores does not work. Our plan is to first tackle the case of expander graphs before generalizing. 

\paragraph{Dealing with expanders} Since expander graphs are spectral approximations to complete graphs, one may expect the leverage scores to behave similarly. Proving this, it turns out, is quite difficult and we know of no elementary argument so far. 

Our proof instead relies on establishing a new lower bound for the spectrum QC operators associated with expander graphs in terms of (appropriately scaled) the QC operator associated with a complete graph. Our key idea is invoking a powerful inequality of Alon and Kozma~\cite{AlonKozma20}. Indeed, in hindsight, one may view this inequality as proving precisely such a spectral lower bound in the more general setting of the so-called \emph{interchange process} (see below). It turns out that this spectral lower bound suffices for bounding the sum of leverage scores on the $i^{\text{th}}$ eigenspace by $\sim n/i$ so our plan does succeed! 

\paragraph{Our Key Tool: the Alon-Kozma Inequality} We believe that introducing this powerful tool is an important contribution of our work and include a brief overview of the setting that this inequality applies in.

The interchange process is a Markov chain defined on the elements of the symmetric group $\Sn$ where the transitions are associated with a graph $G$ on $[n]$. In the process of interchange for a given graph $G = ([n], E, w)$, when a given element is present $\sigma \in \Sn$, a random edge $e = \{u, v\} \in E$ is sampled proportional to the weights $w$ and the transposition $(uv)$ to $\sigma$. The Alon-Kozma inequality~ \cite{AlonKozma20} implies that this interchange process operator for an expander graph always dominates the interchange process operator for the complete graph (up to an appropriate scaling). The proof makes crucial use of the so called ``octopus inequality'' developed by Caputo, Liggett, and Richthammer~\cite{CLR10} in their breakthrough resolution of Aldous' conjecture~\cite[Open Problem 14.29]{AldousF02}--see \cref{subsec:related-work} for further details. 

We use this spectral bound for the interchange process to derive a bound for Quantum Cut Hamiltonians by adapting an observation of \cite{CLR10} that relates the interchange process to an \emph{exclusion process} that directly corresponds to a random walk on a Kikuchi graph, see \cref{subsubsec:kikuchiquantumconnection} and \cref{prop:IP-to-EP}.

\paragraph{Generalizing to all graphs } Our idea for generalizing to all graphs is natural and inspired by several works in graph algorithms: \emph{expander decomposition}. These results allow for the decomposition of any graph (with an average degree $\gg \log n$) into pieces that are \emph{mild} expanders. That is, each piece has a spectral gap that grows at least as $1/\polylog(m)$ where $m$ is the number of vertices in the piece while the total fraction of edges that go between the edges is $o_n(1)$. 

A priori, the spectral gap of the piece being only $1/\polylog(n)$ could be concerning. But it turns out that our argument for expanders above naturally generalizes to such mild expanders with a loss of only a $\polylog(n)$ factor in the size of the resulting sparsifier. Our proof is then completed by arguing that the error arising from the residual edges in the expander decomposition can be absorbed into an additional small relative error in our sparsification guarantee. 

\paragraph{Handling Weighted Graphs} So far, we have discussed proving \cref{thm:main} for unweighted graphs. For weighted graphs, we can take advantage of a weight-bucketing approach (especially similar to \cite{KhannaPS24, KhannaPS25, BrakensiekG25}) where edges of similar weight (e.g., within a factor of $n^{O(1)}$). For all the edges within a bucket, a simple modification of the unweighted approach results in $\widetilde{O}_{\eps}(n)$-sized sparsifiers. However, to aggregate between buckets, an additional idea is needed. Consider two graphs $G$ and $H$ for which the minimum weight of any edge of $G$ is $n^{\Omega(1)}$-times larger than the maximum weight of any edge of $H$. Using an operator inequality due to Chen~\cite{Chen17} (which also builds on the octopus inequality), we can prove that any edge $e$ of $H$ which connects two vertices in the same component of $G$ is spectrally insignificant (that is, the spectrum of $\cL_G + \cL_H$ is closely approximated by the spectrum of $\cL_{G} + \cL_{H - e}$). Furthermore, two edges of $H$ which connect the same pair of connected components of $G$ can be ``aggregated'' into a single edge without significantly changing the spectrum. As such, once a sparsifier of $G$ is computed, $H$ can be aggregated into a much smaller graph whose sparsifier size is proportional to the number of connected components of $G$. Thus, when aggregating a (potentially) large number of weight buckets, the total sparsifier size will still be near-linear.

\subsection{Related Work}\label{subsec:related-work}

QC sparsification, or equivalently, Kikuchi Sparsification, has natural connections to algorithm design and probability in addition to quantum computation. We include brief pointers to relevant literature here. 

\paragraph{Octopus Inequality.} As previously mentioned, key to the proof of \cref{thm:main} is the use of a spectral inequality due to Alon and Kozma~\cite{AlonKozma20} on the spectral structure of the interchange process. The desire for deeply understanding the spectrum of the interchange process originated due to a conjecture of Aldous~\cite{Aldous,AldousF02}\footnote{Amusingly, Aldous's website~\cite{Aldous} states ``\emph{Instinct says that the same mathematical question will arise in some quite different setting (quantum epistemology?!)}''. We take great pride in reifying the quantum aspect of Aldous's remark. However, we failed to make any connections to epistemology -- we leave this as a stimulating direction for the reader to pursue!} in 1992 that the spectral of the interchange process for a graph $G$ should be identical to the spectral graph of the classical random walk Markov chain on $G$. This conjecture captivated the interest of the probability theory community for decades~\cite{DiaconisS81,DiaconisS93,LevinPW09,FlattoOW85,Cesi10,HandjaniJ96} culminating in a full proof of Aldous' conjecture by Caputo, Liggett, and Richthammer~\cite{CLR10}. The key innovation in \cite{CLR10} was the introduction of a so-called ``octopus inequality'' which can be stated in the language of Quantum Cuts as follows:

\begin{fact}[The octopus inequality~\cite{CLR10} as stated in \cite{AlonKozma20}, translated with \cref{prop:IP-to-EP}]
For any weight function $w : \binom{V}{2} \to \R_{\ge 0}$ and any vertex $u \in V$, we have that
\[
    \sum_{v \in V \setminus \{u\}} w_{uv} \cdot \cL_{uv} \succeq \sum_{\{v, v'\} \in \binom{V \setminus \{u\}}{2}} \frac{w_{uv} w_{uv'}}{\sum_{v'' \in V \setminus \{u\}} w_{uv''}} \cdot \cL_{vv'}.
\]
\end{fact}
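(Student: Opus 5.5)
The plan is to deduce the displayed inequality directly from the original octopus inequality of Caputo, Liggett, and Richthammer~\cite{CLR10}. That inequality is stated for the interchange process on $\Sn$. We transfer it to the Kikuchi/QC setting by a pullback (lifting) argument, which is the content of \cref{prop:IP-to-EP}.

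\textbf{Step 1: the interchange form of the inequality.} For a transposition $\tau_{ab}=(a\,b)$, let $\Delta_{ab}$ be the operator on $\C^{\Sn}$ given by $(\Delta_{ab} f)(\sigma) = f(\sigma)-f(\tau_{ab}\sigma)$. This is a positive semidefinite operator with quadratic form $\tfrac12\sum_\sigma |f(\sigma)-f(\tau_{ab}\sigma)|^2$. In this language, the octopus inequality of~\cite{CLR10} (in the form quoted in~\cite{AlonKozma20}) says that for every $u$,
\[
\sum_{v\neq u} w_{uv}\,\Delta_{uv} \succeq \sum_{\{v,v'\}\subseteq V\setminus\{u\}} \frac{w_{uv}w_{uv'}}{\sum_{v''\neq u} w_{uv''}}\,\Delta_{vv'}.
\]
We take this as given. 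If $\sum_{v''} w_{uv''}=0$, all weights at $u$ vanish and both sides are $0$, so we may assume the denominator is positive.

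\textbf{Step 2: reduce to a single slice.} Since $\cL_G$ and every $\cL_e$ preserve the decomposition $\C^{2^{[n]}}=\bigoplus_k \C^{\binom{[n]}{k}}$, it suffices to prove the inequality for each restriction $\cL_{\cK_{\cdot,k}}$. Fix $k$ and a function $\psi:\binom{[n]}{k}\to\C$. Define its lift $f(\sigma):=\psi(\sigma([k]))$. Every $A\in\binom{[n]}{k}$ has exactly $k!(n-k)!$ preimages $\sigma$ with $\sigma([k])=A$, and $\tau_{ab}\sigma([k])=\tau_{ab}(A)$. Hence
\[
\sum_\sigma |f(\sigma)-f(\tau_{ab}\sigma)|^2 = k!(n-k)!\sum_{A}|\psi(A)-\psi(\tau_{ab}A)|^2 .
\]
Now $\tau_{ab}A\neq A$ exactly when $|A\cap\{a,b\}|=1$, and in that case $A\oplus\tau_{ab}A=\{a,b\}$. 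So the right-hand sum counts every edge of $\cK_{ab,k}$ twice, giving $\langle f,\Delta_{ab}f\rangle = k!(n-k)!\,\langle\psi,\cL_{ab}\psi\rangle$. The constant $k!(n-k)!$ does not depend on the edge $\{a,b\}$.

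\textbf{Step 3: conclude.} Evaluate the operator inequality of Step 1 on the lifted vector $f$ and apply the identity of Step 2 term by term. Dividing by $k!(n-k)!$ gives the claimed inequality on the slice $\binom{[n]}{k}$ for arbitrary complex $\psi$. Summing over $k$ then gives it on all of $\C^{2^{[n]}}$.

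The only real content is the octopus inequality itself, which is imported. The step needing care is the bookkeeping in Step 2. One must fix a convention (left versus right multiplication by $\tau_{ab}$) that matches the statement in~\cite{AlonKozma20} and check that the lifting constant is uniform across edges, so that it cancels from both sides. If the cited statement uses right multiplication, we instead lift via $f(\sigma)=\psi(\sigma^{-1}([k]))$, and the same computation goes through.
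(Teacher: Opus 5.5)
Your proposal is correct and follows the paper's route. The paper gets this fact by importing the group-algebra octopus inequality of \cite{CLR10} (as stated in \cite{AlonKozma20}) and transferring it with \cref{prop:IP-to-EP}, whose proof is exactly your lift $f(\sigma)=\psi(\{\sigma(1),\ldots,\sigma(k)\})$ with an edge-independent constant that cancels. Your bookkeeping of the complex moduli, the factor-of-two double count of Kikuchi edges, and the left/right multiplication convention is, if anything, slightly more careful than the normalization in that proof.
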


Since their introduction by \cite{CLR10}, the octopus inequality has found applications in understanding the subprocesses of the interchange process~\cite{Chen17}, spectral bounds on expanders~\cite{AlonKozma20}, as well as extensions to hypergraphs \cite{AlonKP25}. We hope our new (indirect) application of the octopus inequality to Hamiltonian sparsification can promote new connections between these areas. 

\paragraph{Planted CSPs} Kikuchi sparsification has natural connection to solving (semirandom) planted instances of the $k$-XOR problem. Such an instance is described by a $k$-uniform hypergraph $\cH \subseteq {[n] \choose k}$ along with a bit $b_C$ for each $C \in H$. There is an unknown ``planted assignment'' $x^* \in \{-1,1\}^n$ and each $b_C = \prod_{i \in C} x^*_i$ (i.e., agrees with the value $x^*$ gives to the constraint corresponding to $C$) with probability $1/2 + \eta$ for some fixed constant $\eta>0$ independently. An approach developed in~\cite{GuruswamiHKM23} reduces the analysis of a natural semi-definite programming relaxation for this problem to a sparsification question for Kikuchi graphs of hypergraphs. The level $\ell$ Kikuchi graph of a $2r$-uniform hypergraph $\cH$ is a graph on ${[n] \choose \ell}$ where two vertices $S,T$ are connected iff $S \Delta T \in \cH$. Such Kikuchi graphs (and their variants for the odd $k$) arise in several prior works on related problems and applications. To obtain the optimal conjectured algorithms for the planted $2r$-XOR problem above, we need the level $\ell$ Kikuchi graph of a random sample $\cH'$  of $\cH$ to be a spectral approximation to the Kikuchi graph of $\cH$. When $\ell = r$ (the ``base'' case), such a result was shown to be true in~\cite{GuruswamiHKM23}. For larger $\ell$ it remains open despite efforts, except in the special case when $\cH$ is a random hypergraph; the work of~\cite{Kothari25} implies such a result. 

Our work suggests a natural approach to the problem: prove an appropriate generalization of Alon-Kozma inequality that we crucially rely on in this work for some useful notion of \emph{expanding} hypergraphs. Here useful means that every hypergraph should admit a decomposition into pieces that are ``expanding''. 

\subsection{Outline}

In \cref{sec:prelim}, we set up some preliminaries and establish basic notation. In \cref{sec:harmonicanalysis}, we work out in our notation the harmonic analysis on hypercube slices. In \cref{sec:AlonKozma}, we discuss the spectral inequality of \cite{AlonKozma20} for the interchange process and demonstrate how this connects to Quantum Cut. In \cref{sec:sparsify-expander}, we show how to use the machinery in \cref{sec:harmonicanalysis} and \cref{sec:AlonKozma} to prove a version of \cref{thm:main} for expander graphs. In \cref{sec:expander-decomposition}, we use the technique of expander decomposition to prove \cref{thm:main} for unweighted graphs. In \cref{sec:weighted}, we extend the proof of \cref{thm:main} to weighted graphs.

\section{Preliminaries}\label{sec:prelim}

Unless specified otherwise, we let $V$ denote the vertex set and $E$ the edge set of a graph $G$. We further let $n$ and $m$ be shorthand for $|V|$ and $|E|$, respectively.

When describing a vector $v \in \C^S$, where $S$ is a set, we interchangeably use $v(i)$ and $v_i$ to denote the $i$th element of the vector, where $i \in S$.

For any two sets $X, Y$, we denote by $X\oplus Y$ the symmetric difference set, which is $(X\setminus Y)\cup(Y\setminus X)$. We have $X\oplus Y = Y\oplus X$, and for any three sets $X, Y, Z$ we have $(X\oplus Y)\oplus Z = X\oplus(Y\oplus Z):= X\oplus Y\oplus Z$. 

Throughout the paper we shall identify $\binom{[n]}{k}$ with $n$-bit Hamming strings of weight $k$. 

Fix any $0\leq k\leq n$. Let $V_k:= \C^{[n]\choose k}$ be the space of functions $\binom{[n]}{k}\to\C$. We will view $V_k$ as a subspace of $\C^{2^{[n]}}\cong\C^{\zo^n}$, by extending any function $f:\binom{[n]}{k}\to\C$ to a function $f:2^{[n]}\to\C$ by setting $f(S) := 0$ for all $S\in 2^{[n]}\setminus\binom{[n]}{k}$. 

We also define an inner product on $\C^{2^{[n]}}$ by setting 
\begin{equation}
\label{eq:innerproductdef}
    \langle f, g\rangle:= \sum_{S\seq[n]}f(S)\cdot\overline{g(S)}\mper
\end{equation}
Also define $\|f\|^2:= \langle f, f\rangle$. It is easy to see that $\|f\|^2\geq 0$, with equality iff $f = 0$. 

\begin{fact}[Matrix Chernoff, \cite{Tro15}]
\label{fact:matrix-bernstein-restated}
Let $Y_1, \ldots, Y_m\in\C^{N\times N}$ be independent, random Hermitian matrices. Let $Y = \sum_{i = 1}^m Y_i$ and let $Z = \E[Y]$. Suppose that $Y_i \preceq R \cdot Z$ with probability $1$, where $R$ is an arbitrary scalar, for every $1 \leq i \leq m$. Then, for all $\eps \in [0,1]$,
\[
\Pr\left [(1 - \eps) Z\preceq \sum_{i = 1}^m Y_i \preceq (1 + \eps) Z \right] \geq 1 - 2N \cdot \exp(-\eps^2 / 3R)\mper
\]
\end{fact}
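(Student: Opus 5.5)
We would reduce the statement to the standard matrix Chernoff bound of Tropp~\cite{Tro15} for sums of bounded positive semidefinite matrices, by whitening with respect to $Z$. Throughout we take the $Y_i$ to be positive semidefinite; this is the setting in which the fact is applied, since each $Y_i$ will be a rescaled Laplacian term $\cL_e$. The first step handles a possibly singular $Z$. Let $P$ be the orthogonal projection onto $\operatorname{range}(Z)$. If $Zv = 0$, then $\langle v, Y_i v\rangle \le R\langle v, Zv\rangle = 0$. Because $Y_i \succeq 0$, this forces $Y_i v = 0$. Hence every $Y_i$ is supported on $\operatorname{range}(Z)$, and all the matrices in question can be regarded as operators on this subspace, of dimension $N' \le N$. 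On that subspace $Z$ is invertible, so we may define
\[
X_i := Z^{-1/2} Y_i Z^{-1/2}.
\]

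The second step translates the hypotheses. Conjugation by $Z^{-1/2}$ preserves the Loewner order, so $Y_i \preceq RZ$ becomes $0 \preceq X_i \preceq R\cdot \Id$. Moreover $\E[\sum_i X_i] = Z^{-1/2} Z Z^{-1/2} = \Id$. For the same reason, the event $(1-\eps)Z \preceq \sum_i Y_i \preceq (1+\eps)Z$ is exactly the event
\[
\lambda_{\min}\Bigl(\sum_i X_i\Bigr) \ge 1-\eps \quad\text{and}\quad \lambda_{\max}\Bigl(\sum_i X_i\Bigr) \le 1+\eps.
\]
So it suffices to bound the extreme eigenvalues of a sum of independent positive semidefinite matrices, each bounded by $R$ in operator norm, whose mean is the identity. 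In Tropp's notation this means $\mu_{\min} = \mu_{\max} = 1$.

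The third step invokes Tropp's matrix Chernoff inequality, which is proved by the matrix Laplace transform method together with Lieb's concavity theorem to control $\E \operatorname{tr}\exp(\theta \sum_i X_i)$. It gives
\begin{align*}
\Pr\Bigl[\lambda_{\max}\Bigl(\sum_i X_i\Bigr) \ge 1+\eps\Bigr] &\le N'\Bigl(\tfrac{e^{\eps}}{(1+\eps)^{1+\eps}}\Bigr)^{1/R},\\
\Pr\Bigl[\lambda_{\min}\Bigl(\sum_i X_i\Bigr) \le 1-\eps\Bigr] &\le N'\Bigl(\tfrac{e^{-\eps}}{(1-\eps)^{1-\eps}}\Bigr)^{1/R}.
\end{align*}
For $\eps\in[0,1]$ we then use the elementary scalar estimates $\frac{e^{\eps}}{(1+\eps)^{1+\eps}} \le e^{-\eps^2/3}$ and $\frac{e^{-\eps}}{(1-\eps)^{1-\eps}} \le e^{-\eps^2/2} \le e^{-\eps^2/3}$. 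A union bound over the two tails, together with $N' \le N$, yields failure probability at most $2N\exp(-\eps^2/3R)$.

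We expect no serious obstacle in this argument. The only delicate point is the reduction to the range of $Z$. This is where positive semidefiniteness of the $Y_i$ is genuinely needed, and it should be made explicit, because a merely Hermitian $Y_i$ could have negative eigendirections that the hypothesis $Y_i \preceq RZ$ does not control. The core analytic content, namely the trace-exponential bound via Lieb's theorem, is imported from~\cite{Tro15} rather than reproved.
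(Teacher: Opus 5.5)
Your proposal is correct and is essentially what the paper relies on. The paper gives no argument beyond citing Tropp, and your whitening by $Z^{-1/2}$ on $\operatorname{range}(Z)$, followed by Tropp's matrix Chernoff bound with $\mu_{\min}=\mu_{\max}=1$ and $L=R$ and the scalar estimates for $\eps\in[0,1]$, is exactly the standard derivation of the stated form.

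You are also right to add the hypothesis $Y_i\succeq 0$. The statement omits it, but it holds in the paper's only application, where each summand is $0$ or $\frac{1}{p}\cL_e|_{E_k}$. One correction: this hypothesis is needed for the conclusion itself, not only for the range reduction. Already for $N=1$, take i.i.d.\ scalars with $Y_i=2/m$ with probability $1-m^{-2}$ and $Y_i=2/m-m$ otherwise. Then $Z=1$ and $Y_i\preceq (2/m)Z$, yet the claimed bound fails for large $m$.
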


\section{Harmonic Analysis on the Hypercube Slice}
\label{sec:harmonicanalysis}

In this section we develop harmonic analysis on the slices of the boolean hypercube $\binom{[n]}{k}$. We stress that all the material in this section is standard, and we present it here for easy reference. 

We do so by defining the \emph{down} and \emph{up} operators on the slices of the hypercube, and proving various statements about them. The down and up operators are a standard way of presenting this theory, see for instance \cite[Section 2]{Grigoriev01}. Harmonic analysis on the slice of the hypercube is a very well-studied field with connections to multiple areas, see \cite{Dunkl76,Stanley88,Godsil10,Filmus16,DiksteinDFH24} for other perspectives.

\begin{definition}[Down and Up operators]
    For $k\geq 1$, define the \emph{down operator} $\D_k:V_k\to V_{k - 1}$, where for any $f\in V_k$ and $S\in\binom{[n]}{k - 1}$, we define 
    \[(\D_kf)(S):= \sum_{\binom{[n]}{k}\ni T\supset S}f(T)\mper\]
    Similarly, for any $k < n$, define the \emph{up operator} $\U_k:V_k\to V_{k + 1}$, where for any $f\in V_k$ and $S\in\binom{[n]}{k + 1}$, we define 
    \[(\U_kf)(S):= \sum_{\binom{[n]}{k}\ni T\subset S}f(T)\mper\] 
    Finally, for any $0\leq k < \ell\leq n$, define the map $\U^{(\ell)}_k:V_k\to V_\ell$ as $\U^{(\ell)}_k:= \U_{\ell - 1}\circ\cdots\circ\U_{k + 1}\circ\U_k$. We also define $\U^{(k)}_k:V_k\to V_k$ to be the identity map $\Id_{V_k}$.
\end{definition}

These operators allow us to define \emph{harmonic functions}:
\begin{definition}[Harmonic Function]
\label{def:harmonic}
    A function $f\in V_k$ is called \emph{harmonic} if $\D_kf = 0$. More explicitly, if $f$ is harmonic, then for any $T\in\binom{[n]}{k - 1}$, we have 
    \[\sum_{t\in[n]\setminus T}f(T\cup\{t\}) = 0\mper\]
\end{definition}
For $k > 0$, define $H_k:= \ker(\D_k)\subset V_k$ to be the subspace of harmonic functions. For $k = 0$, define $H_0:= V_0$. Finally, for any $0\leq k\leq \lfloor n/2\rfloor$ define the subspace
\begin{equation}
\label{eq:ekdef}
    E_k:= H_k\oplus \U^{(k + 1)}_k(H_k)\oplus\cdots\oplus \U^{(n - k)}_k(H_k)\subset\C^{2^{[n]}}\mper
\end{equation}
Note that the direct sum in \cref{eq:ekdef} is orthogonal, since for any $\ell\neq\ell'\in\{k, \ldots, n - k\}$, functions in $\U^{(\ell)}_k(H_k), \U^{(\ell')}_k(H_k)$ are supported on disjoint sets ($\binom{[n]}{\ell}$ and $\binom{[n]}{\ell'}$ respectively) and are thus orthogonal according to \cref{eq:innerproductdef}.

It is useful to record certain preliminary properties of the down and up operators:
\begin{lemma}
\label{lem:downupproperties}
    For any $k$ such that the relevant symbols are defined, we have: 
    \begin{enumerate}[(1)]
        \item\label{item:adjoint} $\D_k = \U^*_{k - 1}$, i.e. $\D_k$ and $\U_{k - 1}$ are adjoints of each other w.r.t. the inner product in \cref{eq:innerproductdef}.
        \item\label{item:commrel} $\D_{k + 1}\U_k - \U_{k - 1}\D_k = (n - 2k)\cdot\Id_{V_k}$. 
        \item\label{item:injective} For $0\leq k < n/2$, $\U_k$ is injective. Consequently, $\D_k$ is surjective for $1\leq k < n/2 + 1$ by adjointness.
        \item\label{item:iterinj} For any $0\leq k\leq n/2$, $\U^{(\ell)}_k$ is injective on $H_k$ for any $k\leq\ell\leq n - k$. 
    \end{enumerate}
\end{lemma}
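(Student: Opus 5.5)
We prove the four items in order; each uses only the definitions and the items before it.

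For \ref{item:adjoint}, we expand both inner products directly. For $f\in V_k$ and $g\in V_{k-1}$,
\[\langle \D_k f, g\rangle = \sum_{S\in\binom{[n]}{k-1}}\sum_{T\supset S} f(T)\overline{g(S)} = \sum_{T\in\binom{[n]}{k}} f(T)\,\overline{\sum_{S\subset T} g(S)} = \langle f, \U_{k-1} g\rangle,\]
since both double sums run over the same set of pairs $S\subset T$ with $|S|=k-1$ and $|T|=k$.

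For \ref{item:commrel}, we compute both compositions entrywise. Fix $f\in V_k$ and $S\in\binom{[n]}{k}$.
\begin{itemize}
\item $(\D_{k+1}\U_k f)(S)$ sums over $R\supset S$ with $|R|=k+1$, and then over $T\subset R$ with $|T|=k$. The term $T=S$ arises once for each of the $n-k$ choices of $R$. Any other $T$ with $|S\oplus T|=2$ arises exactly once, via $R=S\cup T$.
\item $(\U_{k-1}\D_k f)(S)$ sums over $Q\subset S$ with $|Q|=k-1$, and then over $T\supset Q$ with $|T|=k$. The term $T=S$ arises $k$ times. Each $T$ with $|S\oplus T|=2$ arises exactly once, via $Q=S\cap T$.
\end{itemize}
The off-diagonal terms cancel in the difference, leaving $(n-2k)f(S)$.

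For \ref{item:injective}, we show $\U_k$ is injective by a positivity argument. By \ref{item:adjoint} and \ref{item:commrel},
\[\|\U_k f\|^2=\langle \D_{k+1}\U_k f,f\rangle=\|\D_k f\|^2+(n-2k)\|f\|^2,\]
where the $\D_k$ term is absent when $k=0$. If $k<n/2$, the right side is at least $(n-2k)\|f\|^2>0$ for $f\neq 0$, so $\U_k$ is injective. Surjectivity of $\D_{k+1}=\U_k^*$ then follows by finite-dimensional duality, which gives exactly the stated range of indices.

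Item \ref{item:iterinj} is the main step, and we prove it by induction on $\ell$. For $h\in H_k$ set $u_j := \U_k^{(j)} h$. By induction on $j$, using \ref{item:commrel} and $\D_k h=0$, we get
\[\D_{j+1} u_{j+1} = c_j\, u_j,\qquad c_j=\sum_{i=k}^{j}(n-2i)=(j+1-k)(n-k-j).\]
The inductive step is
\[\D_{j+1}\U_j u_j=\U_{j-1}\D_j u_j+(n-2j)u_j=(c_{j-1}+n-2j)\,u_j .\]
Hence $\|u_{j+1}\|^2=\langle \D_{j+1}u_{j+1},u_j\rangle=c_j\|u_j\|^2$. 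This recursion gives
\[\|\U^{(\ell)}_k h\|^2=\Bigl(\prod_{j=k}^{\ell-1}c_j\Bigr)\|h\|^2 .\]
For $k\le j\le \ell-1\le n-k-1$, both factors of $c_j$ are positive, so the product is strictly positive and $\U^{(\ell)}_k$ is injective on $H_k$. The only care needed is the index bookkeeping: the constants $c_j$ must stay nonzero all the way up to $\ell=n-k$.
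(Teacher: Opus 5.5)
Your proof is correct and follows the paper's argument step for step. You use direct expansion for adjointness, entrywise counting for the commutation relation, and the identity $\|\U_k f\|^2=(n-2k)\|f\|^2+\|\D_k f\|^2$ for injectivity. For item (4), your recursion $\D_{j+1}\U^{(j+1)}_k h=c_j\,\U^{(j)}_k h$ with $c_j=(j+1-k)(n-k-j)$ is the paper's $c_m=(m-k)(n-m-k+1)$ reindexed by $m=j+1$, and it gives the same positive product formula for $\|\U^{(\ell)}_k h\|^2$; the only cosmetic difference is that you handle $k=0$ uniformly by omitting the undefined $\U_{-1}\D_0$ term, while the paper treats $k=0$, $k=\ell$, and $k=n/2$ as trivial special cases.
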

\begin{proof}
    To show that $\D_k = \U_{k - 1}^*$, we have to show that for any $f\in V_k, g\in V_{k - 1}$, $\langle \D_kf, g\rangle = \langle f, \U_{k - 1}g\rangle$. To that extent, 
    \[\langle \D_kf, g\rangle = \sum_{T\in\binom{[n]}{k - 1}}(\D_kf)(T)\cdot\overline{g(T)} = \sum_{T\in\binom{[n]}{k - 1}}\sum_{S\supset T, |S| = k}f(S)\cdot\overline{g(T)} = \sum_{S\in\binom{[n]}{k}}\sum_{T\subset S, |T| = k - 1}f(S)\cdot\overline{g(T)}\]
    \[ = \sum_{S\in\binom{[n]}{k}}f(S)\cdot\sum_{T\subset S, |T| = k - 1}\overline{g(T)} = \sum_{S\in\binom{[n]}{k}}f(S)\cdot\overline{(\U_{k - 1}g)(S)} = \langle f, \U_{k - 1}g\rangle,\]
    as desired.

    For the second item, fix any $S\in\binom{[n]}{k}$ and note that
    \begin{align}
    \label{eq:updownequation}
(\U_{k - 1}\D_k f)(S) &= \sum_{T \subset S} (\D_k f)(T) = \sum_{T \subset S} \sum_{S' \supset T} f(S') = k \cdot f(S) + \sum_{S' : |S \oplus S'| = 2} f(S'), \\
(\D_{k + 1}\U_k f)(S) &= \sum_{T \supset S} (\U_k f)(T) = \sum_{T \supset S} \sum_{S' \subset T} f(S') = (n - k) \cdot f(S) + \sum_{S' : |S \oplus S'| = 2} f(S'),
\end{align}
as desired.

For the third item, note that $\U_0$ is obviously injective. Thus assume $1\leq k < n/2$, and note that 
\[\|\U_kf\|^2 = \langle\U_kf, \U_kf\rangle \overset{\text{\cref{item:adjoint}}}{=} \langle f, \D_{k + 1}\U_kf\rangle \overset{\text{\cref{item:commrel}}}{=} \langle f, ((n - 2k)\Id + \U_{k - 1}\D_k)f\rangle\] 
\[= (n - 2k)\|f\|^2 + \langle f, \U_{k - 1}\D_kf\rangle \overset{\text{\cref{item:adjoint}}}{=} (n - 2k)\|f\|^2 + \langle \D_kf, \D_kf\rangle = (n - 2k)\|f\|^2 + \|\D_kf\|^2.\]
Consequently, if $\U_kf = 0$, then $\|\U_kf\| = 0\implies\|f\| = 0\implies f = 0$, since $n - 2k > 0$. Consequently, $\U_k$ is injective. 

Finally, consider $\U^{(\ell)}_k$ for any $k\leq\ell\leq n - k$. Note that if $k = n/2$ or $k = \ell$, then $\U^{(\ell)}_k = \Id_{V_k}$ is injective. Similarly, if $k = 0$, then $\U^{(\ell)}_k$ is also injective since $\dim(V_0) = 1$. Thus assume $1\leq k < \min\{\ell, n/2\}$. Fix any $\psi_k\in H_k$, and write $\psi_m:= \U^{(m)}_k\psi_k$ for all $k\leq m\leq n - k$. We claim that for all $k + 1 \leq m\leq n - k$, $\D_m\psi_m = c_m\psi_{m - 1}$, for some $c_m > 0$. Note that this proves that $\U^{(\ell)}_k$ is injective since 
\[\|\U^{(\ell)}_k\psi_k\|^2 = \langle\psi_\ell, \psi_\ell\rangle = \langle\U_{\ell - 1}\psi_{\ell - 1}, \psi_{\ell}\rangle \overset{\text{\cref{item:adjoint}}}{=} \langle\psi_{\ell - 1}, \D_\ell\psi_{\ell}\rangle = c_\ell\langle\psi_{\ell - 1}, \psi_{\ell - 1}\rangle = c_\ell\|\U^{(\ell - 1)}_k\psi_k\|^2\]
\[\implies\|\U^{(\ell)}_k\psi_k\|^2 = \left(\prod_{m = k + 1}^\ell c_m\right)\cdot\|\psi_k\|^2,\]
and thus if $\U^{(\ell)}_k\psi_k = 0$ then $\psi_k = 0$ since $\prod_{m = k + 1}^\ell c_m > 0$. Thus, it only remains to prove that $\D_m\psi_m = c_m\psi_{m - 1}$, which we prove inductively. For convenience set $c_k:= 0$ (since $\D_k\psi_k = 0$) and $\psi_{k'}:= 0$ for all integers $k' < k$. Then for any $m\geq k + 1$,  
\[\D_m\psi_m = \D_m\U_{m - 1}\psi_{m - 1} \overset{\text{\cref{item:commrel}}}{=} \left((n - 2(m - 1))\Id + \U_{m - 2}\D_{m - 1}\right)\psi_{m - 1}\] 
\[= (n - 2m + 2)\psi_{m - 1} + \U_{m - 2}\cdot(c_{m - 1}\psi_{m - 2}) = (n - 2m + 2)\psi_{m - 1} + c_{m - 1}\psi_{m - 1},\]
and thus we get the recurrence relation $c_m:= n - 2m + 2 + c_{m - 1}$. Unrolling the recurrence yields $c_m = (m - k)(n - m - k + 1)$, which is strictly positive for $k + 1\leq m\leq n - k$, as desired.
\end{proof}

\begin{proposition}\label{prop:boundDimension}
 For any $0\leq k\leq \lfloor n/2\rfloor$, we have $\dim(E_k) = (n - 2k + 1)\left(\binom{n}{k} - \binom{n}{k - 1}\right)$, where we set $\binom{n}{-1}:= 0$. 
\end{proposition}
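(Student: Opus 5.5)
The plan is to compute $\dim(E_k)$ by first computing $\dim(H_k)$ and then using \cref{lem:downupproperties}\cref{item:iterinj}. From \cref{eq:ekdef}, $E_k$ is an orthogonal direct sum of the $n-2k+1$ subspaces $\U^{(\ell)}_k(H_k)$ for $\ell = k, \ldots, n-k$; the orthogonality was already observed right after \cref{eq:ekdef}, since these pieces live on disjoint slices. By \cref{item:iterinj}, each $\U^{(\ell)}_k$ is injective on $H_k$, so $\dim \U^{(\ell)}_k(H_k) = \dim H_k$ for every such $\ell$. This gives
\[
\dim(E_k) = (n-2k+1)\dim(H_k),
\]
and what remains is to show $\dim(H_k) = \binom{n}{k} - \binom{n}{k-1}$.

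For $k = 0$ we have $H_0 = V_0$, which has dimension $1 = \binom{n}{0} - 0$, so there is nothing to prove. For $1 \le k \le \lfloor n/2\rfloor$ we have $H_k = \ker(\D_k)$. By rank–nullity, $\dim H_k = \dim V_k - \operatorname{rank}(\D_k) = \binom{n}{k} - \operatorname{rank}(\D_k)$, so it suffices to show that $\D_k : V_k \to V_{k-1}$ is surjective, i.e.\ has rank $\binom{n}{k-1}$.

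This surjectivity is the only step that needs care. \cref{item:injective} states that $\D_k$ is surjective for $1 \le k < n/2 + 1$, and $k \le \lfloor n/2\rfloor$ lies in this range. To double-check that the range is right, note that surjectivity of $\D_k$ is equivalent to injectivity of its adjoint $\U_{k-1}$ (using \cref{item:adjoint}). That in turn needs $k-1 < n/2$, which holds since $k \le n/2$. Concretely, the identity $\|\U_{k-1}f\|^2 = (n-2k+2)\|f\|^2 + \|\D_{k-1}f\|^2$ from the proof of \cref{item:injective} has $n-2k+2 \ge 2 > 0$.

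Combining these facts gives $\dim H_k = \binom{n}{k} - \binom{n}{k-1}$, and therefore $\dim E_k = (n-2k+1)\left(\binom{n}{k}-\binom{n}{k-1}\right)$, as claimed.
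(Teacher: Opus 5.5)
Your proof is correct and follows the paper's argument: you split $E_k$ into the $n-2k+1$ slices $\U^{(\ell)}_k(H_k)$, use injectivity of $\U^{(\ell)}_k$ on $H_k$ to get $\dim(E_k)=(n-2k+1)\dim(H_k)$, and then get $\dim(H_k)=\binom{n}{k}-\binom{n}{k-1}$ from surjectivity of $\D_k$ via rank--nullity. Your extra check of the surjectivity range is fine, with one small point: for $k=1$, say directly that $\U_0$ is injective rather than using the norm identity, since that identity involves $\D_0$, which is not defined.
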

\begin{proof}
    We have $\dim(E_k) = \sum_{\ell = k}^{n - k}\dim(\U^{(\ell)}_k(H_k)) \overset{(\ast)}{=} \sum_{\ell = k}^{n - k}\dim(H_k) = (n - 2k + 1)\cdot\dim(H_k)$, where $(\ast)$ follows from the fact that $\U^{(\ell)}_k$ is injective on $H_k$ for any $k\leq\ell\leq n - k$ (\cref{item:iterinj} of \cref{lem:downupproperties}). Thus it suffices to show that $\dim(H_k) = \binom{n}{k} - \binom{n}{k - 1}$. But this follows from the fact that $\D_k$ is surjective (\cref{item:injective} of \cref{lem:downupproperties}). 
\end{proof}

\begin{lemma}
\label{lem:orthodecomp}
    We have the orthogonal (w.r.t. \cref{eq:innerproductdef}) decomposition 
    \[\C^{2^{[n]}} = \bigoplus_{k = 0}^{\lfloor n/2\rfloor}E_k\mper\]
\end{lemma}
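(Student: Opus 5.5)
We prove two things: the subspaces $E_0,\ldots,E_{\lfloor n/2\rfloor}$ are mutually orthogonal, and their dimensions add up to $2^n$. Together these give the orthogonal direct sum decomposition of $\C^{2^{[n]}}$.

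\textbf{Orthogonality.} By \cref{eq:ekdef}, it suffices to show $\U^{(\ell)}_k(H_k)\perp \U^{(\ell)}_{k'}(H_{k'})$ for $k<k'$ and a common level $\ell$. Pieces at different levels are supported on disjoint slices and are therefore already orthogonal. Take $\psi\in H_k$ and $\phi\in H_{k'}$. Using adjointness (\cref{item:adjoint} of \cref{lem:downupproperties}),
\[
\langle \U^{(\ell)}_k\psi,\ \U^{(\ell)}_{k'}\phi\rangle=\langle \U^{(k')}_k\psi,\ (\U^{(\ell)}_{k'})^*\U^{(\ell)}_{k'}\phi\rangle .
\]
In the proof of \cref{item:iterinj} we established $\D_m\psi_m=c_m\psi_{m-1}$, where $\psi_m:=\U^{(m)}_{k'}\phi$. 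Iterating this relation gives
\[
(\U^{(\ell)}_{k'})^*\U^{(\ell)}_{k'}\phi=\Big(\prod_{m=k'+1}^{\ell}c_m\Big)\phi .
\]
So the inner product is a scalar multiple of $\langle \U^{(k')}_k\psi,\phi\rangle$. Moving the up operators across again, this equals a multiple of $\langle \psi,\ \D_{k+1}\cdots\D_{k'}\phi\rangle$, which vanishes because $\D_{k'}\phi=0$ and $k'>k$.

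\textbf{Dimension count.} By \cref{prop:boundDimension},
\[
\sum_{k=0}^{\lfloor n/2\rfloor}\dim E_k=\sum_{k=0}^{\lfloor n/2\rfloor}(n-2k+1)\left(\binom nk-\binom n{k-1}\right).
\]
It is cleanest to count level by level. For a fixed level $\ell\le n/2$, the subspaces $E_k$ contribute at level $\ell$ for exactly those $k\le\ell$. Their contributions there total
\[
\sum_{k\le \ell}\left(\binom nk-\binom n{k-1}\right)=\binom n\ell
\]
by telescoping. For $\ell>n/2$, the contributing indices are $k\le n-\ell$, and the same telescoping gives $\binom n{n-\ell}=\binom n\ell$. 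Each level is therefore filled exactly: combined with orthogonality, $\bigoplus_k \U^{(\ell)}_k(H_k)$ equals $V_\ell$. Summing over $\ell$ gives a total dimension of $2^n$.

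Since the $E_k$ are mutually orthogonal, their sum is direct. Its dimension is $2^n=\dim\C^{2^{[n]}}$, so the sum is the whole space.

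The main care point is the orthogonality step. There one must use the eigen-relation $\D_m\U^{(m)}_{k'}\phi=c_m\U^{(m-1)}_{k'}\phi$ correctly; it holds for harmonic $\phi$ at every level up to $n-k'$. Since both pieces in the inner product live at the common level $\ell$, we always have $\ell\le n-k'$, so this relation is available. The rest is bookkeeping.
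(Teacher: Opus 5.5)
Your proof is correct and follows the same route as the paper: the $E_k$ are mutually orthogonal by adjointness, the commutation relation and harmonicity, and their dimensions sum to $2^n$ by \cref{prop:boundDimension}. The differences are only in execution. You collapse $(\U^{(\ell)}_{k'})^*\U^{(\ell)}_{k'}$ to the scalar $\prod_{m=k'+1}^{\ell}c_m$ on $H_{k'}$ using the relation from the proof of \cref{item:iterinj}, whereas the paper normal-orders the word $\D_{k+1}\cdots\D_\ell\U_{\ell-1}\cdots\U_{k'}$; and you explicitly verify $\sum_k\dim E_k=2^n$ by level-wise telescoping, which the paper only asserts.
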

\begin{proof}
    Note that it suffices to show that $E_k$ and $E_{k'}$ are orthogonal spaces for $k\neq k'$, since by \cref{prop:boundDimension} we have $\sum_{k = 0}^{\lfloor n/2\rfloor}\dim(E_k) = 2^n = \dim(\C^{2^{[n]}})$. Furthermore, since the direct sum in \cref{eq:ekdef} is orthogonal, it suffices to show that for any $\ell\geq k' > k$, $\U^{(\ell)}_k(H_k)$ and $\U^{(\ell)}_{k'}(H_{k'})$ are orthogonal spaces. Thus fix $\psi_k\in H_k, \psi_{k'}\in H_{k'}$, and consider 
    \[\langle\U^{(\ell)}_k\psi_k, \U^{(\ell)}_{k'}\psi_{k'}\rangle = \langle\U_{\ell - 1}\circ\cdots\circ\U_k\psi_k, \U_{\ell - 1}\circ\cdots\circ\U_{k'}\psi_{k'}\rangle\]
    \begin{equation}
    \label{eq:downuportho}
        \overset{\text{\cref{item:adjoint} of \cref{lem:downupproperties}}}{=}\langle\psi_k, \D_{k + 1}\circ\cdots\circ\D_\ell\circ\U_{\ell - 1}\circ\cdots\circ\U_{k'}\psi_{k'}\rangle\mper
    \end{equation}
    Now, repeatedly apply \cref{item:commrel} of \cref{lem:downupproperties} to the string $\D_{k + 1}\circ\cdots\circ\D_\ell\circ\U_{\ell - 1}\circ\cdots\circ\U_{k'}$ in the following way: Take the rightmost pair of $\D\circ\U$ and use \cref{item:commrel} of \cref{lem:downupproperties} to obtain $\U\circ\D$ and a multiple of $\Id$. This way we produce $2^{\ell - k'}$ strings all of which end in $\D_{k'}$: This is because there are more $\D$s than $\U$s (since $k < k'$) and thus we can consume all the $\U$s \footnote{``pacman'' style} and only be left with $\D$s. But since $\psi_{k'}\in H_{k'}$, $\D_{k'}\psi_{k'} = 0$, and thus 
    \[\langle\psi_k, \D_{k + 1}\circ\cdots\circ\D_\ell\circ\U_{\ell - 1}\circ\cdots\circ\U_{k'}\psi_{k'}\rangle = 0,\]
    as desired.
\end{proof}

\section{Kikuchi Graphs and the Symmetric Group}
\label{sec:Kikuchisym}
In this section we develop yet another point of view on the operator $\cL_G$ (defined in \cref{eq:qcpredicate}) which will be useful for us later on.

Towards that end, let $\Sn$ be the group of all permutations on $n$ alphabets. $\Sn$ acts naturally on $2^{[n]}$, where $\pi\in\Sn$ sends $S\in 2^{[n]}$ to $\pi S:= \{\pi(s): s\in S\}$. Also define the linear operator $A_\pi:\C^{2^{[n]}}\to\C^{2^{[n]}}$, where for any $\psi\in\C^{2^{[n]}}$ and $S\in 2^{[n]}$ we define
\[(A_\pi\psi)(S):= \psi(\pi S)\mper\]
Note that $A_\pi(V_k)\seq V_k$ for all $0\leq k\leq n$, since $|\pi S| = |S|$ for any $S\in 2^{[n]}$. Also note that $A_\pi$ is unitary for any $\pi\in\Sn$. Now, for any $e = \{i, j\}\in\binom{[n]}{2}$, denote by $(e):= (i j)$ the transposition swapping $i$ and $j$. Note that 
\[(e)S = \begin{cases}
    S & \text{if }|S\cap e|\in\{0, 2\}, \\
    S\oplus e & \text{if }|S\cap e| = 1
\end{cases}.\]

We claim that $A_e$ is Hermitian, where we simplify notation to write $A_{(e)}$ as $A_e$.
\begin{proposition}
\label{prop:Aeeig}
    For any $e = \{i, j\}\in\binom{[n]}{2}$, $A_e$ is Hermitian and unitary. Consequently, all its eigenvalues are $\pm 1$. 
\end{proposition}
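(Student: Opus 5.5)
The plan is to show directly that $A_e$ is a real symmetric permutation matrix that squares to the identity. The spectral claim then follows from standard linear algebra.

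\emph{Step 1: $A_e$ is a permutation matrix.} For any $\pi\in\Sn$, the map $S\mapsto \pi S$ is a bijection of $2^{[n]}$. So $A_\pi$ is the permutation matrix with $(A_\pi)_{S,T}=\1[T=\pi S]$. Its entries are real, and we have already noted that it is unitary.

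\emph{Step 2: $A_e$ is an involution.} The transposition $(e)$ satisfies $(e)^2=\id$, so $(e)((e)S)=S$ for every $S$; this is also visible from the case formula, since $S\oplus e\oplus e=S$ and $|S\cap e|$ is unchanged by the swap. Hence
\[(A_e^2\psi)(S)=(A_e\psi)((e)S)=\psi((e)(e)S)=\psi(S),\]
that is, $A_e^2=\Id$.

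\emph{Step 3: Hermitian.} Because $S\mapsto (e)S$ is an involution, $T=(e)S$ holds if and only if $S=(e)T$. Thus the real matrix satisfies $(A_e)_{S,T}=(A_e)_{T,S}$, so $A_e^*=A_e^{T}=A_e$. Alternatively, unitarity gives $A_e^*=A_e^{-1}$, and Step 2 gives $A_e^{-1}=A_e$.

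\emph{Step 4: eigenvalues.} Since $A_e$ is Hermitian, it is diagonalizable with real eigenvalues. Since $A_e^2=\Id$, every eigenvalue $\lambda$ satisfies $\lambda^2=1$, so $\lambda\in\{\pm1\}$. Unitarity alone would also give $|\lambda|=1$, which combined with reality yields the same conclusion.

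There is no real obstacle here. The only point that needs care is the composition convention in $(A_\pi\psi)(S)=\psi(\pi S)$, which could introduce an inverse for general $\pi$. This is harmless for transpositions, because $(e)^{-1}=(e)$.
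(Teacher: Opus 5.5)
Your proof is correct and follows essentially the same route as the paper. The paper verifies $\langle\psi, A_e\phi\rangle = \langle A_e\psi,\phi\rangle$ directly by splitting the sum according to $|S\cap e|$ and using that $S\mapsto S\oplus e$ is an involutive bijection on $\{S : |S\cap e| = 1\}$, which is exactly your entrywise symmetry $(A_e)_{S,T} = (A_e)_{T,S}$, with unitarity taken from the earlier remark that every $A_\pi$ is unitary; your explicit check that $A_e^2 = \Id$ just makes the eigenvalue step slightly more transparent.
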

\begin{proof}
    Fix any $\psi, \phi\in\C^{2^{[n]}}$. Then 
    \[\langle\psi, A_e\phi\rangle = \sum_{S\in 2^{[n]}}\psi(S)\cdot\overline{\phi((e)S)} = \sum_{|S\cap e|\in\{0, 2\}}\psi(S)\cdot\overline{\phi(S)} + \sum_{|S\cap e| = 1}\psi(S)\cdot\overline{\phi(S\oplus e)},  \]
    \[\langle A_e\psi, \phi\rangle = \sum_{S\in 2^{[n]}}\psi((e)S)\cdot\overline{\phi(S)} = \sum_{|S\cap e|\in\{0, 2\}}\psi(S)\cdot\overline{\phi(S)} + \sum_{|S\cap e| = 1}\psi(S\oplus e)\cdot\overline{\phi(S)}.\]
    But note that $S\mapsto S\oplus e$ is a bijection on the set $\{S\in 2^{[n]}: |S\oplus e| = 1\}$ since $(S\oplus e)\oplus e = S$, and thus $\langle\psi, A_e\phi\rangle = \langle A_e\psi, \phi\rangle$, thus showing that $A_e$ is Hermitian, as desired.
\end{proof}

We now claim that $\Id - A_e$ is exactly the operator $\cL_e$ defined in \cref{eq:qcpredicate} and \cref{subsubsec:kikuchiquantumconnection}. Indeed, for any $\psi\in V_k$, we have 
\[\langle\psi, \cL_e\psi\rangle = \sum_{\{S, S\oplus e\}\in E(\cK_e)}|\psi(S) - \psi(S\oplus e)|^2\mper\]
On the other hand, 
\[\langle\psi, (\Id - A_e)\psi\rangle = \sum_{S\in\binom{[n]}{k}}\psi(S)\cdot\overline{(\psi(S) - (A_e\psi)(S))} = \sum_{S\in\binom{[n]}{k}:|S\cap e| = 1}\psi(S)\cdot\overline{(\psi(S) - \psi(S\oplus e))}\]
\[ = \sum_{\{S, S\oplus e\}\in E(\cK_e)}\left(\psi(S)\cdot\overline{(\psi(S) - \psi(S\oplus e))} + \psi(S\oplus e)\cdot\overline{(\psi(S\oplus e) - \psi(S))}\right)\] 
\[= \sum_{\{S, S\oplus e\}\in E(\cK_e)}\left|\psi(S) - \psi(S\oplus e)\right|^2 = \langle\psi, \cL_e\psi\rangle,\]
thus showing that $\cL_e = \Id - A_e$.\footnote{Note that $A_e$ is not exactly the adjacency matrix of $\cK_e$: Instead, $A_e$ is the adjacency matrix of $\cK_e'$, where $\cK'_e$ is the graph which has a self-loop for all vertices with zero degree in $\cK_e$. Nevertheless, $\cL_e$ \emph{is} the Laplacian for $\cK_e$ since the $1$ in the diagonal entry (corresponding to a degree $0$ vertex in $\cK_e$) of $\Id$ cancels the $1$ in the diagonal of $A_e$ representing the self-loop}

We'll now use this interpretation of $\cL_e$ to prove that the spaces $E_k$ are left invariant by $\cL_e$ for any $e\in\binom{[n]}{2}$, and then we shall subsequently also compute the eigenvalues of $\cL_{K_n}$, where $K_n$ is the complete unweighted graph on $n$ vertices.

Now, for any $e = \{i, j\}\in\binom{[n]}{2}$ define the linear operator $\cL_e:\C^{2^{[n]}}\to\C^{2^{[n]}}$ as $\cL_e:= \Id - A_e$. Note that by \cref{prop:Aeeig} $\cL_e$ is a Hermitian PSD operator all of whose eigenvalues are $0$ or $2$. 

For any weighted graph $G = ([n], E, w)$, define $\cL_G:= \sum_{e\in E(G)}w_e\cL_e$. 

We now show that the spaces $E_k$ defined in \cref{sec:harmonicanalysis} are left invariant under $\cL_e$. 

\begin{lemma}[Invariant Spaces]\label{lem:invariantSpaces}
    For any $0\leq k\leq \lfloor n/2\rfloor$, and any $e\in\binom{[n]}{2}$, we have $\cL_e(E_k)\seq E_k$. 
\end{lemma}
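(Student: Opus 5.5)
Since $\cL_e = \Id - A_e$, it suffices to show $A_e(E_k)\seq E_k$. The plan is to show that every $A_\pi$ commutes with the down and up operators, so that the symmetric group preserves harmonicity and hence each $E_k$.

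\textbf{Step 1: commutation.} For $f\in V_k$ and $S\in\binom{[n]}{k-1}$,
\[(\D_kA_\pi f)(S) = \sum_{T\supset S,\,|T|=k} f(\pi T) = \sum_{T'\supset \pi S,\,|T'|=k} f(T') = (\D_k f)(\pi S) = (A_\pi \D_k f)(S),\]
where we substitute $T'=\pi T$. This substitution is a bijection between the $k$-supersets of $S$ and the $k$-supersets of $\pi S$. Hence $\D_kA_\pi = A_\pi\D_k$ as maps $V_k\to V_{k-1}$. The identical computation gives $\U_kA_\pi = A_\pi\U_k$. Iterating, $\U^{(\ell)}_kA_\pi = A_\pi\U^{(\ell)}_k$ for all $\ell$. (Alternatively, one can take adjoints, using that $A_\pi$ is unitary with $A_\pi^* = A_{\pi^{-1}}$ together with \cref{item:adjoint} of \cref{lem:downupproperties}.)

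\textbf{Step 2: harmonic spaces are invariant.} Let $f\in H_k$ with $k>0$. Then $\D_kA_\pi f = A_\pi\D_kf = 0$, so $A_\pi f\in H_k$. For $k=0$ the claim is trivial, since $A_\pi$ preserves $V_0$.

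\textbf{Step 3: each $E_k$ is invariant.} A general element of $E_k$ has the form $\sum_{\ell=k}^{n-k}\U^{(\ell)}_k\psi_\ell$ with $\psi_\ell\in H_k$. Applying $A_\pi$ and using Step 1,
\[A_\pi\sum_{\ell=k}^{n-k}\U^{(\ell)}_k\psi_\ell = \sum_{\ell=k}^{n-k}\U^{(\ell)}_k A_\pi\psi_\ell,\]
and each $A_\pi\psi_\ell\in H_k$ by Step 2. So the result lies in $E_k$. Taking $\pi=(e)$ gives $A_e(E_k)\seq E_k$, and therefore $\cL_e(E_k) = (\Id - A_e)(E_k)\seq E_k$.

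There is no real obstacle here. The only point needing care is the reindexing bijection in Step 1, which rests on $\pi$ preserving both inclusion and cardinality. As a side remark, by linearity the same argument shows that $\cL_G(E_k)\seq E_k$ for every weighted graph $G$.
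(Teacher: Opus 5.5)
Your proof is correct and follows essentially the same route as the paper: commute the permutation operator with the up operators, show that each harmonic space $H_k$ is preserved, and conclude that each $E_k$ is invariant. The only differences are cosmetic: you work with an arbitrary $\pi\in\Sn$ instead of a transposition, and you derive preservation of $H_k$ from $\D_kA_\pi = A_\pi\D_k$, which is the same reindexing the paper carries out directly on the harmonic condition.
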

\begin{proof}
    It suffices to show that $A_e(E_k)\seq E_k$. We first claim that $A_e$ commutes with $\U_k$ for any $0\leq k < n$: Indeed for any $\psi\in V_k$ and $S\in\binom{[n]}{k}$, we have
    \[(A_e\U_k\psi)(S) = (\U_k\psi)((e)S) = \sum_{T\subset(e)S}\psi(T) \overset{(\ast)}{=} \sum_{T\subset S}\psi((e)T) = \sum_{T\subset S}(A_e\psi)(T) = (\U_kA_e\psi)(S)\mcom\]
    where $(\ast)$ uses the fact that $(e)^2 = \id_{\Sn}$. Thus it suffices to show that $A_e(H_k)\seq H_k$ for all $0\leq k\leq \lfloor n/2\rfloor$. Since $H_0 = V_0$ and since $A_e(V_k)\seq V_k$ for any $k$, we're done for $k = 0$. Thus assume $k = 1$, and recall that $\psi_k\in H_k$ iff $\sum_{t\in[n]\setminus T}\psi_k(T\cup\{t\}) = 0$ for all $T\in\binom{[n]}{k - 1}$. Now, fix any $T\in\binom{[n]}{k - 1}$ and note that 
    \[\sum_{t\in[n]\setminus T}(A_e\psi_k)(T\cup\{t\}) = \sum_{t\in[n]\setminus T}\psi_k((e)(T\cup\{t\})) = \sum_{s\in[n]\setminus (e)T}\psi_k((e)T\cup\{s\}) = 0, \]
    and thus $A_eH_k\seq H_k$, as desired.
\end{proof}

Finally, let $K_n = \left([n], \binom{[n]}{2}\right)$ denote the complete (unweighted) graph on $n$ vertices.

\begin{lemma}\label{lem:completeGraphEigenvalues}
    For any $0\leq k\leq \lfloor n/2\rfloor$, we have $\cL_{K_n}\vert_{E_k} = k(n + 1 - k)\cdot\Id_{E_k}$. Consequently, the decomposition in \cref{lem:orthodecomp} is the eigendecomposition of $\cL_{K_n}$. 
\end{lemma}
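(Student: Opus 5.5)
The plan is to first compute $\cL_{K_n}$ on each slice $V_\ell$ in terms of the up and down operators, and then evaluate that expression on $\U^{(\ell)}_k(H_k)$ using the constants $c_m$ from the proof of \cref{item:iterinj} of \cref{lem:downupproperties}.

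\textbf{Step 1: $\cL_{K_n}$ in terms of up--down operators.} Fix $\psi\in V_\ell$ and $S\in\binom{[n]}{\ell}$. Since $\cL_e=\Id-A_e$, and $A_e$ fixes $S$ unless $|S\cap e|=1$ (in which case it maps $S$ to $S\oplus e$), summing over all $e\in\binom{[n]}{2}$ gives
\[(\cL_{K_n}\psi)(S)=\ell(n-\ell)\psi(S)-\sum_{S':|S\oplus S'|=2}\psi(S').\]
Here $\ell(n-\ell)$ counts the edges $e$ with $|S\cap e|=1$, and each neighbor $S'$ at distance $2$ arises from exactly one such edge. By \cref{eq:updownequation}, the neighbor sum equals $\U_{\ell-1}\D_\ell\psi-\ell\psi$. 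Hence on $V_\ell$,
\[\cL_{K_n}=\ell(n-\ell+1)\Id-\U_{\ell-1}\D_\ell.\]
For $\ell=0$ we interpret the second term as $0$, which is consistent since $\cL_{K_n}$ vanishes on $V_0$.

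\textbf{Step 2: evaluation on $E_k$.} Let $\psi_k\in H_k$ and write $\psi_\ell=\U^{(\ell)}_k\psi_k$ for $k\le \ell\le n-k$. The proof of \cref{item:iterinj} of \cref{lem:downupproperties} shows that $\D_\ell\psi_\ell=c_\ell\psi_{\ell-1}$ with $c_\ell=(\ell-k)(n-\ell-k+1)$, and that $c_k=0$ covers the base case $\ell=k$ because $\psi_k$ is harmonic. Therefore
\[\U_{\ell-1}\D_\ell\psi_\ell=c_\ell\psi_\ell,\]
and so
\[\cL_{K_n}\psi_\ell=\bigl(\ell(n-\ell+1)-(\ell-k)(n-\ell-k+1)\bigr)\psi_\ell.\]
Expanding, the coefficient is
\[\ell(n-\ell+1)-\bigl[\ell(n-\ell+1)-\ell k-k(n-\ell+1)+\ell k+k^2\bigr]=k(n+1-k),\]
which is independent of $\ell$. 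Since $E_k$ is spanned by these $\psi_\ell$ (by \cref{eq:ekdef}), we get $\cL_{K_n}|_{E_k}=k(n+1-k)\Id_{E_k}$.

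\textbf{Step 3: the eigendecomposition.} Combining Step 2 with the orthogonal decomposition of \cref{lem:orthodecomp}, the spaces $E_k$ are eigenspaces of $\cL_{K_n}$. They carry distinct eigenvalues, because $k(n+1-k)$ is strictly increasing in $k$ for $k\le n/2$. So the decomposition of \cref{lem:orthodecomp} is exactly the eigendecomposition of $\cL_{K_n}$.

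\textbf{Main obstacle.} The only delicate point is Step 1. One must make sure the diagonal term comes out as $\ell(n-\ell)$ rather than $\binom{n}{2}$, since edges with $|S\cap e|\in\{0,2\}$ contribute $\psi(S)-\psi(S)=0$. One must also check that adding the $\ell$ coming from the $\U\D$ identity produces the $\ell(n-\ell+1)$ coefficient. After that, everything is bookkeeping with the already-computed constants $c_\ell$.
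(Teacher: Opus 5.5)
Your proof is correct and is essentially the paper's argument. The paper also derives $\cL_{K_n}|_{V_\ell}=\ell(n-\ell+1)\Id-\U_{\ell-1}\D_\ell$ and evaluates it on $\U^{(\ell)}_k(H_k)$ using $c_\ell=(\ell-k)(n-\ell-k+1)$; it differs only in handling $k=0$ by direct computation and in not spelling out that the eigenvalues $k(n+1-k)$ are distinct. One algebra slip: the expansion of $(\ell-k)(n-\ell-k+1)$ is $\ell(n-\ell+1)-\ell k-k(n-\ell+1)+k^2$, so delete the extra $+\ell k$ in your bracket (as written the difference comes out to $k(n-\ell+1-k)$), although your final value $k(n+1-k)$ is right.
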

\begin{proof}
    Note that $E_0$ is the subspace of all functions $\psi\in\C^{2^{[n]}}$ such that $\psi\vert_{V_k}$ is a constant function for any $0\leq k\leq n$.\footnote{$\psi\vert_{V_k}$ and $\psi\vert_{V_{k'}}$ need not assume the \emph{same} constant value for $k\neq k'$ though} For such functions the assertion can be seen to be true by direct computation. Thus assume $k\geq 1$. Fix any $\psi_k\in V_k$, and note that (as in \cref{eq:updownequation}) we have 
    \begin{equation}
        \label{eq:udpsis}
        (\U_{k - 1}\D_k\psi)(S) = k\psi(S) + \sum_{S':|S\oplus S'| = 2}\psi(S') = k\psi(S) + \sum_{x\in S, y\in [n]\setminus S}\psi(S\oplus\{x, y\})\mper
    \end{equation}
    On the other hand, we also have 
    \[(\cL_{K_n}\psi)(S) = \sum_{e\in\binom{[n]}{2}}(\cL_e\psi)(S) = \sum_{e\in\binom{[n]}{2}}(\psi(S) - \psi(S\oplus e)) = \sum_{e:|S\cap e| = 1}(\psi(S) - \psi(S\oplus e))\]
    \[ = k(n - k)\psi(S) - \sum_{x\in S, y\in[n]\setminus S}\psi(S\oplus\{x, y\}) \overset{\text{\cref{eq:udpsis}}}{=} k(n - k + 1)\psi(S) - (\U_{k - 1}\D_k\psi)(S),\]
    and thus $\cL_{K_n}\vert_{V_k} = k(n - k + 1)\Id_{V_k} - \U_{k - 1}\D_k$ for all $1\leq k\leq n$. \footnote{recall that since $A_eV_k\seq V_k$ for all $e\in\binom{[n]}{2}$, we have $\cL_{K_n}V_k\seq V_k$} 

    Now, to show that $\cL_{K_n}\vert_{E_k} = k(n + 1 - k)\cdot\Id_{E_k}$ it suffices to show that $\cL_{K_n}\vert_{\U_k^{(\ell)}(H_k)} = k(n + 1 - k)\cdot\Id_{\U_k^{(\ell)}(H_k)}$ for all $k\leq\ell\leq n - k$. Thus fix any $k \leq \ell\leq n - k$. We know from the proof of \cref{item:iterinj} in \cref{lem:downupproperties} that for any $\psi_k\in H_k$ we have $\D_\ell\U^{(\ell)}_k\psi_k = c_\ell\U^{(\ell - 1)}_k\psi_k$, where $c_\ell:= (\ell - k)(n - \ell - k + 1)$. Thus 
    \[\U_{\ell - 1}\D_\ell\U^{(\ell)}_k\psi_k = c_\ell\U_{\ell - 1}\U^{(\ell - 1)}_k\psi_k = c_\ell\U^{(\ell)}_k\psi_k\implies\cL_{K_n}\U^{(\ell)}_k\psi_k = \left(\ell(n - \ell + 1) - c_\ell\right)\U^{(\ell)}_k\psi_k\]
    \[ = k(n - k + 1)\U^{(\ell)}_k\psi_k\mcom\]
    i.e. $\cL_{K_n}\cdot(\U^{(\ell)}_k\psi_k) = k(n - k + 1)\cdot(\U^{(\ell)}_k\psi_k)$ for all $\psi_k\in H_k$, as desired.
\end{proof}

\section{Applying Spectral Bounds for the Interchange Process}\label{sec:AlonKozma}

To understand the spectral properties of $\mathcal L_G$ for an arbitrary graph $G = ([n], E, w)$, we find it useful to look at more general operators related to the \emph{interchange process} on the symmetric group $\Sn$ ~\cite{AldousF02,DiaconisS81,CLR10}. 

\subsection{The Interchange Process}

In this section, we adapt the notation of Alon and Kozma~\cite{AlonKozma20}. We let $\mathbb C[\Sn]$ denote the ring of formal sums of the form $\sum_{\sigma \in \Sn} c_{\sigma} \sigma$ where each $c_{\sigma} \in \mathbb C$ is arbitrary. Multiplication in $\C[\Sn]$ is defined as follows
\[
    \left(\sum_{\sigma \in \Sn} c_{\sigma} \sigma\right)\left(\sum_{\sigma \in \Sn} c'_{\sigma} \sigma\right) = \sum_{\sigma, \sigma' \in \Sn} c_{\sigma} c'_{\sigma'} (\sigma \sigma'),
\]
where $\sigma \sigma'$ is multiplication in $\Sn$. We can think of multiplication as an operator in the function space $L^2(\Sn)$ (i.e., the space of functions $f : \Sn \to \mathbb C$ with the $L^2$ norm), where $L = \sum_{\sigma \in \Sn} c_{\sigma} \sigma \in \mathbb C[\Sn]$ acts on $f \in L^2(\Sn)$ via
\[
    (Lf)(\sigma') = \sum_{\sigma \in \Sn} c_{\sigma} f(\sigma \sigma').
\]
As such, we can say that an operator $L \in \mathbb C[\Sn]$ is PSD (i.e., $L \succeq 0$) if $\langle f, Lf\rangle \geq 0$ for all $f \in L^2(\Sn)$.

For any edge $e = \{i,j\} \in \binom{[n]}{2}$, we define $\nabla_{e} = \id - (ij) \in \mathbb C[\Sn]$, where $\id$ is the identity permutation and $(ij)$ is the transposition between $i$ and $j$. Using these $\nabla_{e}$ operators, we can build a corresponding operator $\nabla_G$ for any weighted graph $G = ([n], E, w)$ as follows
\[
    \nabla_G := \sum_{e \in E} w_e \nabla_e.
\]

Crucially, any spectral inequalities on $\nabla_G$ translate to spectral inequalities on $\cL_{G}$.

\begin{proposition}\label{prop:IP-to-EP}
Let $G = ([n], E, w)$ and $H = ([n], E', w')$ be weighted graphs and $\lambda \geq 0$ a scalar. If $\nabla_{G} \succeq \lambda \nabla_{H}$ then $\cL_{G} \succeq \lambda \cL_{H}$.
\end{proposition}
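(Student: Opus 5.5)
The idea is to realize $\cL_G$ as the image of $\nabla_G$ under a representation of $\C[\Sn]$, and then show that positivity in the group algebra is preserved by any unitary representation. The representation is the permutation action $\pi\mapsto A_\pi$ on $\C^{2^{[n]}}$, extended linearly to $\rho\left(\sum_\sigma c_\sigma\sigma\right):=\sum_\sigma c_\sigma A_\sigma$. Since $A_e$ is already defined for the transposition $(e)$ and $\cL_e=\Id-A_e$, we get $\rho(\nabla_e)=\Id-A_e=\cL_e$. By linearity, $\rho(\nabla_G)=\cL_G$ and $\rho(\nabla_H)=\cL_H$. The statement therefore reduces to the following claim: if $L\in\C[\Sn]$ satisfies $L\succeq 0$ in the sense of $L^2(\Sn)$, then $\rho(L)\succeq 0$ on $\C^{2^{[n]}}$. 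We apply this to $L=\nabla_G-\lambda\nabla_H$.

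First, fix the ordering conventions. We have $(A_\pi\psi)(S)=\psi(\pi S)$, so $A_\pi A_\tau\psi(S)=\psi(\tau\pi S)$, giving $A_\pi A_\tau=A_{\tau\pi}$. Thus $\rho$ is an anti-homomorphism. Likewise, the left action $(Lf)(\sigma')=\sum_\sigma c_\sigma f(\sigma\sigma')$ composes in reversed order. For the operators we care about the order does not matter, because they are real combinations of transpositions, which are involutions and hence self-adjoint. I will check that the argument below uses only the $*$-structure $\sigma^*=\sigma^{-1}$ and is insensitive to reversing products.

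The transfer of positivity goes through a factorization. The operator of $L$ on $L^2(\Sn)$ is a (right) convolution operator, i.e., it is the image of $L$ under the regular representation. If it is PSD, I would write $L=B^*B$ for some $B\in\C[\Sn]$, where $^*$ is the involution $\left(\sum c_\sigma\sigma\right)^*=\sum\bar c_\sigma\sigma^{-1}$. To justify this, note that the regular representation is a faithful $*$-representation of the finite-dimensional $C^*$-algebra $\C[\Sn]\cong\bigoplus_\chi M_{d_\chi}(\C)$. Positivity of the image under a faithful $*$-representation implies positivity in the algebra, which gives a square root $B=L^{1/2}\in\C[\Sn]$. Concretely, one can take $B$ to be the image of $L$ under the continuous functional calculus, a polynomial in $L$ and $L^*$. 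Then, since each $A_\sigma$ is unitary, $\rho(\sigma^{-1})=A_{\sigma}^{-1}=A_\sigma^*$, so $\rho$ is a $*$-(anti)homomorphism. Hence $\rho(L)=\rho(B^*B)=\rho(B)^*\rho(B)$ or $\rho(B)\rho(B)^*$ depending on the ordering, and either is PSD.

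I expect the main obstacle to be making the square-root step rigorous and correctly oriented, in particular confirming that the paper's notion of $L\succeq0$ (the left-multiplication operator on $L^2(\Sn)$ being PSD) coincides with $L$ being a positive element of the group $C^*$-algebra. If this is awkward, I would fall back on a Maschke/Fourier argument. Decompose $\C^{2^{[n]}}$ under the $\Sn$-action into irreducibles. Each irreducible $\Sn$-representation occurs inside the regular representation $L^2(\Sn)$, so $\rho(L)$ restricted to any irreducible summand is unitarily equivalent to $L$ acting on an invariant subspace of $L^2(\Sn)$, up to transposition or conjugation coming from the anti-homomorphism. Transposition and conjugation both preserve PSD-ness of Hermitian matrices. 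Since $\rho(L)$ is then PSD on every summand, $\rho(L)\succeq0$, which gives $\cL_G-\lambda\cL_H\succeq0$ as required.
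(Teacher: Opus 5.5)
Your proposal is correct, but it takes a different route from the paper.

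\textbf{The paper's route.} The paper never argues abstractly in the group algebra. It splits $f=\sum_k f_k$ with $f_k\in V_k$; the cross terms vanish because each $\cL_e$ preserves $V_k$. It then lifts each $f_k$ to $g_k(\sigma)=c_k f_k(\{\sigma(1),\dots,\sigma(k)\})\in L^2(\Sn)$. Counting the $k!(n-k)!$ permutations lying over each $k$-set, it checks edge by edge that $\langle g_k,\nabla_e g_k\rangle$ is a positive multiple of $\langle f_k,\cL_e f_k\rangle$, and summing with weights $w_e-\lambda w'_e$ finishes. This is the Caputo--Liggett--Richthammer observation that the exclusion process is a projection of the interchange process. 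The lift is an explicit intertwiner embedding the permutation module on $\binom{[n]}{k}$ (functions on cosets of $S_k\times S_{n-k}$) into the regular representation. In other words, it is a hands-on instance of your Maschke fallback, specialized to the representation at hand.

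\textbf{Your route.} You prove the general principle that an element of $\C[\Sn]$ which is positive in the regular representation is positive in every unitary representation. You get this from a self-adjoint square root $B$ with $L=B^2$. Such a $B$ exists because the regular action is faithful and the PSD square root in a finite-dimensional unital $*$-algebra is a polynomial in the element. You then apply the principle to $\rho(\nabla_G-\lambda\nabla_H)=\cL_G-\lambda\cL_H$.

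\textbf{The ordering worry is harmless.} Because $B$ is self-adjoint and $\rho(B^2)$ involves only one element, reversing products changes nothing: $\rho(L)=\rho(B)^2=\rho(B)^*\rho(B)\succeq 0$. In the fallback, $\sigma\mapsto A_\sigma$ and the paper's action $(Lf)(\sigma')=\sum_\sigma c_\sigma f(\sigma\sigma')$ are anti-homomorphisms of the same kind. So no transposition or conjugation is actually needed.

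\textbf{What each route buys.} Yours gives generality. It transfers any inequality $\nabla_G\succeq\lambda\nabla_H$ to any unitary representation of $\Sn$ at once, for instance SWAP Hamiltonians on $(\C^d)^{\otimes n}$, without designing a lift orbit by orbit. It also handles complex vectors cleanly. The paper's route is completely elementary and self-contained, needing no $C^*$-algebra or representation theory.
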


\begin{proof}
Assume that $\nabla_G \succeq \lambda \nabla_H$, we seek to prove that $\cL_{G} \succeq \lambda \cL_{H}$. To do this, it suffices to show for any $f \in \C^{2^{[n]}}$ that
\[
    \langle f, (\cL_G - \lambda \cL_H)f \rangle \geq 0.
\]
As a first observation, write $f = f_0 + f_1 + \cdots + f_n$ where $f_k \in V_k = \C^{\binom{[n]}{k}}$ for all $k \in \{0, 1, \hdots, n\}$. Then, observe that
\[
    \langle f, (\cL_G - \lambda \cL_H)f \rangle = \sum_{k=0}^n \langle f_k, (\cL_G - \lambda \cL_H)f_k \rangle,
\]
so it suffices to prove that $\langle f_k, (\cL_G - \lambda \cL_H)f_k \rangle \geq 0$ for all $k \in [n]$. This later fact follows from an observation of Caputo, Liggett, and Richthammer~\cite{CLR10} that the \emph{exclusion process} is a subprocess of the interchange process. That is, $\cL_{G} |_{V_k}$ is a projection of $\nabla_G$. More explicitly, given $f_k \in V_k$ define $g_k \in L^2(\Sn)$ by
\[
    g_k(\sigma) = c_kf_k(\{\sigma(1), \hdots, \sigma(k)\}),
\]
where $c_k = \frac{1}{k!(n-k)!}$. Then, for any $e = \{i,j\} \in \binom{[n]}{2}$, we have that
\begin{align*}
    \langle g_k, \nabla_e g_k\rangle &= \sum_{\pi \in \Sn} g_k(\pi) (\nabla_eg_k)(\pi)\\
    &= \sum_{\pi \in \Sn} g_k(\pi) \left[g_k(\pi) - g_k((ij)\pi)\right]\\
    &= \sum_{\pi \in \Sn} c^2_kf_k(\{\pi(1), \hdots, \pi(k)\}) \left[f_k(\{\pi(1), \hdots, \pi(k)\}) - f_k((ij)\{\pi(1), \hdots, \pi(k)\})\right],
\end{align*}
where $(ij)\{\pi(1), \hdots, \pi(k)\}$ is the action of the transposition $(ij)$ on the latter set. Observe for any set $S \in \binom{[n]}{k}$ there are exactly $k!(n-k)! = 1/c_k$ choices of $\pi \in \Sn$ with $S = \{\pi(1), \hdots, \pi(k)\}.$ Therefore,  the above sum can be simplified to
\begin{align*}
\langle g_k, \nabla_e g_k\rangle &= \sum_{S \in \binom{[n]}{k}} f_k(S)(f_k(S) - f_k((ij)S))\\
&= \langle f_k, \cL_e f_k\rangle.
\end{align*}
To finish, observe that
\begin{align*}
\langle f_k, (\cL_G - \lambda \cL_H)f_k \rangle &= \sum_{e \in \binom{[n]}{2}} \langle f_k, (w_e - \lambda w'_e)\cL_e f_k\rangle\\
&= \sum_{e \in \binom{[n]}{2}} \langle g_k, (w_e - \lambda w'_e)\nabla_e g_k\rangle\\
&= \langle g_k, (\nabla_G - \lambda \nabla_H)g_k \rangle \geq 0,
\end{align*}
as desired.
\end{proof}

\subsection{Relating Expanders to the Complete Graph}

A recent work by Alon and Kozma~\cite{AlonKozma20}, building off of the work of Caputo, Liggett, and Richthammer~\cite{CLR10}, showed that $\nabla_{G} \succeq \lambda \nabla_H$ often holds when $G$ is an expander and $H$ is the complete instance $K_n$. To discuss this relationship, we need to define a discrete Markov chain considered by \cite{AlonKozma20}. For a graph $G = ([n], E, w)$, for any $i \in [n]$ define $w_i := \sum_{e \in E, i \in e} w_e$. Then, let $M_G \in \R^{n \times n}$ be the transition matrix with probabilities
\[
    (M_G)_{i,j} := \begin{cases}
    \frac{w_{ij}}{2w_i} & \{i,j\} \in E\\
    \frac{1}{2} & i = j\\
    0 & \text{otherwise}.
        \end{cases}
\]
Let $\pi_G \in \R^n$ be the unique stationary measure corresponding to $M_G$. That is, $\pi$ is the unique probability distribution for which $\pi_GM_G = \pi_G$. In fact, one can show in this case that $\pi_G(i) = \frac{w_i}{\sum_{i=1}^n w_i}$ for all $i \in [n]$. Alon and Kozma define the discrete mixing time $\dmix G$ to be
\[
    \dmix G := \min\left\{t \in \mathbb N \mid (M_G^t)_{i,j} > \frac{3}{4}\pi(j)\right\}.
\]
We also define an ``error'' parameter $\delta_G$ as follows (see equation 4 of \cite{AlonKozma20})
\[
    \frac{1}{\delta_G} := \prod_{k=0}^{\lfloor \log_2 \dmix G\rfloor} \max_{i \in [n]} \left[1 + (M_G^{2^k})_{i,i}\right].
\]

\begin{theorem}[Theorem 1~\cite{AlonKozma20}]\label{thm:AK}
There exists a universal constant $\cAK > 0$ such that for any $n \in \mathbb N$ and any graph $G = ([n], E, w)$, we have that
\[
    \nabla_G \succeq \frac{\cAK\delta_G}{\dmix G} \cdot \frac{\min_{i \in [n]} w_i^2}{\sum_{i \in [n]} w_i} \nabla_{K_n}.
\]
Furthermore, $\delta_G \geq \frac{1}{2\dmix G}$.
\end{theorem}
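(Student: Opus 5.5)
Since \cref{thm:AK} is quoted from \cite{AlonKozma20}, this paper simply invokes it. If I had to reprove it, I would use an iterated ``squaring'' argument driven by the octopus inequality.

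\textbf{Step 1: the graph of the two-step walk.} Encode any lazy reversible chain $P$ with stationary measure $\pi$ as the weighted graph $G_P$ with edge weights $W\pi(i)P_{ij}$, where $W=\sum_i w_i$. Self-loops contribute nothing to $\nabla$. For $P=M_G$, the graph $G_P$ is $G$ itself up to a factor $1/2$. I would apply the octopus inequality (the interchange-process version, before translating via \cref{prop:IP-to-EP}) at every vertex $u$, with the star of $u$ in $G_P$ as the weights. The right-hand side places weight $\frac{W\pi(u)P_{uv}P_{uv'}}{\sum_{v''\neq u}P_{uv''}}$ on the pair $\{v,v'\}$. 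Summing over $u$ and using reversibility, this is essentially the graph $G_{P^2}$ with the terms passing through a self-loop removed.

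\textbf{Step 2: account for self-loops and iterate.} The missing terms correspond to paths $v\to v\to v'$ and $v\to v'\to v'$. These are just multiples of edges of $G_P$, so I would add back a suitable multiple of $\nabla_{G_P}$ itself. The aim is an inequality of the form
\[
\nabla_{G_P}\succeq \frac{1}{\max_i\,[1+P_{ii}]}\,\nabla_{G_{P^2}} .
\]
The loss factor comes from the ratio of the star's total weight to its off-diagonal weight, which is where $1+P_{ii}$ enters. Applying this with $P=M_G^{2^k}$ for $k=0,\dots,\lfloor\log_2\dmix G\rfloor$ and chaining the inequalities gives
\[
\nabla_G\succeq \delta_G\,\nabla_{G_{M^{t}}},\qquad t=2^{\lfloor\log_2\dmix G\rfloor+1},
\]
up to absolute constants.

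\textbf{Step 3: compare with $K_n$.} The squaring only reaches powers of two, while the target time is $\dmix G$. I would use $t\ge\dmix G$, so that each entry satisfies $(M^t)_{ij}>\frac34\pi(j)$. Then every pair $\{i,j\}$ has weight at least $\frac34 W\pi(i)\pi(j)=\frac34 w_iw_j/W\ge\frac34\min_i w_i^2/W$ in $G_{M^t}$. Since $\nabla_e\succeq0$, it follows that $\nabla_{G_{M^t}}\succeq \frac34\frac{\min w_i^2}{W}\nabla_{K_n}$. The extra $1/\dmix G$ factor in the statement gives slack for two things. First, the overshoot of $t$ past $\dmix G$ can be handled by writing $M^{t}$ as an average of shorter powers. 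Second, normalizations between $G$ and $G_{M}$ (the lazy factor $1/2$) cost only constants.

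\textbf{The ``furthermore'' clause and the main obstacle.} This part is elementary. Each factor satisfies $1+(M^{2^k})_{ii}\le2$, and there are $\lfloor\log_2\dmix G\rfloor+1$ factors. So $1/\delta_G\le 2^{\log_2\dmix G+1}=2\dmix G$. The main obstacle is Step 2: getting the precise constant in the octopus-based squaring, and in particular showing that the self-loop and backtracking terms cost exactly a factor $\max_i[1+P_{ii}]$ per squaring rather than something larger. I would first try the octopus inequality with weights $P_{uv}$ including $v=u$ treated as an extra leaf, and check that the resulting surplus is absorbed by $P_{ii}\nabla_{G_P}$.
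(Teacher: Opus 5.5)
The paper does not prove this statement; it imports it from Alon--Kozma. Your overall route (the octopus inequality at every vertex, then repeated squaring) is the natural one and fits the shape of $\delta_G$. However, Step 2 as you state it is false, and the error hides where the factor $1/\dmix G$ really comes from.

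\textbf{The one-step inequality loses a factor $\frac12$.} When you sum the octopus inequality over all centres $u$, the left-hand side is $\sum_u\sum_{v\neq u}W\pi(u)P_{uv}\nabla_{uv} = 2\nabla_{G_P}$, because every edge lies in two stars. On the right, the denominator $W\pi(u)(1-P_{uu})$ only helps, not hurts. Discarding it and using reversibility, you get $\nabla_{G_{P^2}}$ minus the self-loop paths $\sum_{v<v'}W\pi(v)P_{vv'}(P_{vv}+P_{v'v'})\nabla_{vv'}\preceq 2\max_i P_{ii}\,\nabla_{G_P}$. The correct one-step bound is therefore
\[
\nabla_{G_P}\succeq \frac{1}{2\,(1+\max_i P_{ii})}\,\nabla_{G_{P^2}} .
\]
The extra $\frac12$ cannot be removed. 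Restrict to level $1$ via \cref{prop:IP-to-EP}. Your inequality would then give $W\,\mathrm{diag}(\pi)(I-P)\succeq c\,W\,\mathrm{diag}(\pi)(I-P^2)$, i.e.\ $c\le 1/(1+\lambda)$ for every eigenvalue $\lambda\neq 1$ of $P$. For the lazy walk on the $n$-cycle, $P_{ii}=\frac12$, so your claim is $c=\frac23$. But $\lambda_2\to 1$ forces $c\le \frac12+o(1)$.

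\textbf{Your chained bound fails too, and $1/\dmix G$ is not slack.} The same example refutes $\nabla_G\succeq\delta_G\,\nabla_{G_{M^t}}$ ``up to absolute constants.'' On the cycle, $(M^{2^k})_{ii}=O(2^{-k/2}+1/n)$, so $\delta_{C_n}=\Theta(1)$. Step 3 would then give the $n$-cycle Laplacian a spectral gap of order $1$ instead of $\Theta(n^{-2})$. So the $1/\dmix G$ in the statement does not absorb overshoot or normalization. It is exactly the product of the $\lfloor\log_2\dmix G\rfloor+1$ factors of $\frac12$, since $2^{-(\lfloor\log_2\dmix G\rfloor+1)}\ge\frac{1}{2\dmix G}$.

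\textbf{What survives.} With the corrected one-step inequality, the rest of your plan goes through unchanged. Using $\nabla_{G_M}=\frac12\nabla_G$, it yields the theorem with $\cAK=\frac34$. No averaging over shorter powers is needed in Step 3: since $M^{t}=M^{t-\dmix G}M^{\dmix G}$ with $M^{t-\dmix G}$ stochastic, $(M^t)_{ij}>\frac34\pi(j)$ holds for every $t\ge\dmix G$. Your argument for the ``furthermore'' clause is fine.
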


\subsection{From Conductance to Discrete Mixing}

Given a weighted graph $G = (V, E, w)$, we define $\Vol_G(S) = \sum_{i \in S} w_i$, where we recall that $w_i = \sum_{e \in E, i \in e} w_e$. Further define
\begin{align}
    \alpha_G = \frac{\min_{i \in n} w_i}{\frac{1}{n}\sum_{i=1}^n w_i}\label{eq:alpha}
\end{align}
to be the ratio between the minimum degree and average degree of $G$. We say that $G$ is a $\phi$-conductance expander if for every cut $S \subseteq V$ for which $\Vol_G(S) \leq \Vol_G(V) / 2$, we have that
\begin{align}
    \frac{w(S, \bar{S})}{\mathrm{Vol}_G(S)} \geq \phi,\label{eq:cond}
\end{align}
where $w(A, B) = \sum_{i \in A} \sum_{j \in B} w_{ij}$. Our goal is to show the following.

\begin{lemma}\label{lem:cond-dmix}
Assume that $G = (V, E, w)$ is a $\phi$-conductance expander. Then, $\dmix G \leq \frac{8+8\log (n / \alpha_G)}{\phi^2}$.
\end{lemma}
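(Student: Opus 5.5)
We plan to bound the discrete mixing time through the spectral gap of the lazy walk $M_G$: first Cheeger's inequality, then the standard $\ell^2$ pointwise convergence bound for reversible chains. The condition $(M_G^t)_{i,j} > \frac34 \pi_G(j)$ must hold for all $i,j$. It will therefore suffice to show that $|(M_G^t)_{i,j}/\pi_G(j) - 1| \le \frac14$ uniformly for the claimed $t$.

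\textbf{Step 1 (reversibility and the spectral gap).} The chain $M_G$ is reversible with respect to $\pi_G(i) = w_i/\Vol_G(V)$. Indeed, $\pi_G(i)(M_G)_{ij} = w_{ij}/(2\Vol_G(V))$ is symmetric in $i,j$. It is also lazy with holding probability $\frac12$. Hence $M_G = \frac12(I + P)$, where $P$ is the simple weighted random walk, and all eigenvalues of $M_G$ lie in $[0,1]$. The conductance of $P$ with respect to $\pi_G$ is exactly the quantity $w(S,\bar S)/\Vol_G(S)$ from \cref{eq:cond}, so it is at least $\phi$. Cheeger's inequality then gives $1-\lambda_2(P) \ge \phi^2/2$. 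Therefore every nontrivial eigenvalue $\lambda$ of $M_G$ satisfies $0 \le \lambda \le 1 - \phi^2/4$.

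\textbf{Step 2 (pointwise convergence).} We take the spectral decomposition in $L^2(\pi_G)$ and apply Cauchy--Schwarz. This yields the standard bound
\[
\left|\frac{(M_G^t)_{i,j}}{\pi_G(j)} - 1\right| \le \frac{\lambda_\star^t}{\sqrt{\pi_G(i)\pi_G(j)}} \le \frac{(1-\phi^2/4)^t}{\min_k \pi_G(k)},
\]
where $\lambda_\star \le 1-\phi^2/4$ is the largest nontrivial eigenvalue. Laziness means there are no negative eigenvalues to control.

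\textbf{Step 3 (relating $\min \pi_G$ to $\alpha_G$).} By the definition \cref{eq:alpha},
\[
\min_k \pi_G(k) = \frac{\min_k w_k}{\sum_i w_i} = \frac{\alpha_G}{n}.
\]
So it suffices to have $(1-\phi^2/4)^t \cdot n/\alpha_G \le \frac14$. Using $(1-x)^t \le e^{-xt}$, this holds once
\[
t \ge \frac{4(\log 4 + \log(n/\alpha_G))}{\phi^2}.
\]
The right-hand side is at most $\frac{8 + 8\log(n/\alpha_G)}{\phi^2}$, since $4\log 4 < 8$. Taking $t$ to be the ceiling also stays within this bound, after a small adjustment for rounding. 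We will absorb this slack into the constants; if needed, we will use $\log(n/\alpha_G)\ge 0$, which holds because $\alpha_G \le 1$.

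The main obstacle is getting the constants right in Cheeger's inequality for the lazy chain. The paper's conductance uses volume with respect to $w_i$, so we need the normalization to match the version of Cheeger we cite; a factor of $2$ in the laziness affects the gap. We will therefore check directly that the Dirichlet form of $I - M_G$ is half that of $I-P$. The generous constant $8$ leaves room if the Cheeger constant is only $\phi^2/2$ for $P$ or $\phi^2/8$ in a weaker form, possibly after tightening the estimate $\log 4 \le 2$.
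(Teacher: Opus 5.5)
Your proof is correct, but it takes a different route from the paper. The paper never computes a spectral gap. It notes that the lazy chain has conductance profile $\phi_G(r) \ge \phi/2$ and feeds this into the Morris--Peres evolving-set bound (\cref{thm:MP}) with $\eps = 1/5$. That bound controls the relative pointwise error $|(M_G^t)_{i,j}/\pi_G(j) - 1|$ directly through $\int_{4\pi^*_G}^{4/\eps} \frac{4\,dr}{r\phi_G^2(r)}$. You instead apply Cheeger to the non-lazy reversible walk $P$, getting gap at least $\phi^2/2$, hence at least $\phi^2/4$ for $M_G = \frac12(I+P)$, whose spectrum is nonnegative. You then use the standard $\ell^2$ estimate $\lambda_\star^t/\sqrt{\pi_G(i)\pi_G(j)}$.

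Morris--Peres is the stronger tool in general, since it can exploit a conductance profile that improves on small sets. Under a uniform bound $\phi$ it buys nothing here, and your route uses only textbook facts. Your route is also sharper: after using $\phi \le 1$ to absorb the ceiling, you get $t \le (4\ln 4 + 1 + 4\ln(n/\alpha_G))/\phi^2$, well within the claim. By contrast, evaluating the Morris--Peres integral exactly gives $2 + \frac{16}{\phi^2}\ln\frac{5}{\pi^*_G}$. The paper's display writes $\frac{2}{\phi^2}[\log r]_{4\pi^*_G}^{20}$ where the integral actually equals $\frac{16}{\phi^2}[\log r]_{4\pi^*_G}^{20}$, so as written it supports constants near $16$ rather than $8$. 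This is harmless downstream, since only the numerical constant in $\lambda$ of \cref{thm:expander-Kikuchi} changes.

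Two small fixes to your write-up. First, commit to the $\phi^2/4$ gap. The weaker $\phi^2/8$ (Cheeger applied to $M_G$ itself) yields an additive $8\ln 4 > 8$, so the slack is not as generous as you suggest. Second, $\dmix G$ demands the strict inequality $(M_G^t)_{i,j} > \frac34\pi_G(j)$. Note that your error bound is strict (e.g., $1 - x < e^{-x}$ for $x > 0$), or target error $1/5$ as the paper does.
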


The proof of \cref{lem:cond-dmix} proceeds by using a fact of Morris and Peres~\cite{MP05} which connects the discrete mixing time of a Markov chain with its conductance. First, define a quantity for all $S \subseteq [n]$ by
\[
    M_G(S, \bar{S}) := \sum_{i \in S} \sum_{j \in \bar{S}} \pi_G(i)(M_G)_{i,j}
\]
to measure the density of edges leaving $S$. Let $\pi^*_G := \min_{i \in [n]} \pi_G(i)$. For $r \in [\pi^*_G, 1/2]$ define
\[
    \phi_G(r) := \inf\left\{\frac{M_G(S, \bar{S})}{\pi_G(S)} \mid S \subseteq [n], \pi_G(S) \leq r\right\}.
\]
Further, let $\phi_G(r) = \phi_G(1/2)$ for all $r \geq 1/2$. Then, using the fact that $(M_G)_{i,i} \geq 1/2$ for all $i \in [n]$, we have the following result due to Morris and Peres~\cite{MP05}.

\begin{theorem}[Theorem 1~\cite{MP05}]\label{thm:MP}
For any $i, j \in [n]$ and any $\eps > 0$ we have that
\[
    t \geq 1 + \int_{4 \min(\pi(i), \pi(j))}^{4/\eps} \frac{4dr}{r\phi_G^2(r)} \implies \left|\frac{(M_G^t)_{i,j} - \pi_j}{\pi_j}\right| \leq \eps.
\]
\end{theorem}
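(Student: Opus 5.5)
Since \cref{thm:MP} is quoted from \cite{MP05}, the paper does not reprove it. If I had to, I would reconstruct the \emph{evolving sets} argument of Morris and Peres, specialized to the lazy reversible chain $M_G$.

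\textbf{Step 1: the evolving set process.} Define a Markov chain $(S_t)$ on subsets of $[n]$. Given $S_t$, draw $U$ uniformly from $[0,1]$ and set
\[
S_{t+1} := \Big\{y : \tfrac{1}{\pi(y)}\textstyle\sum_{x\in S_t}\pi(x)(M_G)_{x,y}\geq U\Big\}.
\]
Two basic identities follow directly from this definition and reversibility.
\begin{itemize}
\item[(a)] $\pi(S_t)$ is a martingale.
\item[(b)] The transition probabilities are recovered from the process: $(M_G^t)_{x,y}=\frac{\pi(y)}{\pi(x)}\Pr_{\{x\}}[y\in S_t]$.
\end{itemize}
Doob-transform by $\pi(S_t)$ to obtain the size-biased process $\hat S_t$. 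Identity (b) then lets me write the relative error $\frac{(M_G^t)_{x,y}}{\pi(y)}-1$ as an expectation over $\hat S_t$ of a quantity controlled by $\sqrt{\pi(\hat S_t^c)}/\sqrt{\pi(\hat S_t)}$.

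\textbf{Step 2: the one-step growth gauge.} I would show that
\[
\E\Big[\sqrt{\pi(S_{t+1})/\pi(S_t)}\,\Big|\,S_t\Big]\leq 1-\tfrac{1}{8}\Phi(S_t)^2,
\qquad \Phi(S)=\frac{M_G(S,\bar S)}{\pi(S)}.
\]
Laziness, $(M_G)_{i,i}\geq 1/2$, is what makes this work. It forces the threshold sets for $U<1/2$ and $U>1/2$ to be, respectively, a superset and a subset of $S_t$, with expected volume change exactly $\pm\Phi(S_t)\pi(S_t)$. Concavity of the square root then converts this spread into the $\Phi^2$ decrement.

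Translated to the size-biased process, the same computation says that $Z_t:=\sqrt{\pi(\hat S_t)\wedge\pi(\hat S_t^c)}\,/\,\pi(\hat S_t)$ satisfies a multiplicative drift bound
\[
\E[Z_{t+1}/Z_t\mid\hat S_t]\leq 1-\tfrac{1}{4}\phi_G^2\big(\pi(\hat S_t)\big).
\]
Here I use that $\phi_G(\cdot)$ is nonincreasing in $r$ and was extended to be constant beyond $1/2$. This step is the analytic heart of the proof.

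\textbf{Step 3: from drift to a hitting time.} I would prove a general lemma for a nonnegative process $Z_t$ with $Z_0$ of order $\pi(x)^{-1/2}$ and drift $\E[Z_{t+1}\mid\mathcal F_t]\leq Z_t\big(1-f(Z_t)\big)$, where $f$ is nonincreasing in its argument. The lemma should give $\E[Z_t]\leq\delta$ once
\[
t\geq\int_{\delta}^{Z_0}\frac{dz}{z\,f(z)}.
\]
The proof is by comparison with the deterministic ODE, handling the randomness with Jensen's inequality applied to a concave reparametrization. Substituting $z\sim r^{-1/2}$ turns the $dz/z$ in this integral into $dr/r$ with the stated constant $4$. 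That yields an $L^2$ (chi-square) bound on $M_G^{t}(x,\cdot)/\pi-1$ of size $\sqrt{\eps}$ when $t$ exceeds the integral $\int_{4\pi(x)}^{4/\eps}\frac{4\,dr}{r\phi_G^2(r)}$.

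\textbf{Step 4: from $L^2$ to a pointwise bound.} Finally, go from $L^2$ to the pointwise ratio by reversibility and Cauchy--Schwarz:
\[
\frac{(M_G^{t_1+t_2})_{i,j}}{\pi_j}-1=\sum_y\pi(y)\Big(\frac{(M_G^{t_1})_{i,y}}{\pi(y)}-1\Big)\Big(\frac{(M_G^{t_2})_{j,y}}{\pi(y)}-1\Big).
\]
Splitting the time between the two endpoints, each factor has $L^2$ norm at most $\sqrt\eps$, so the product is at most $\eps$. The integral then starts at $4\min(\pi(i),\pi(j))$ (with $\pi$ the paper's $\pi_G$). The extra $+1$ in the time bound absorbs rounding in this split.

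\textbf{Main obstacle.} I expect Step 2 combined with Step 3 to be the main difficulty: the sharp $\Phi^2/8$ gauge, and bookkeeping the constants so that they match exactly the $4$'s in the statement. If the constants refuse to line up, I would fall back to citing \cite{MP05} directly, since the paper only uses the result up to constant factors in \cref{lem:cond-dmix}.
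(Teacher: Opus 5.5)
\textbf{Verdict: there is a genuine gap, in the drift lemma of Step~3.} The paper does not prove this statement; it imports Theorem~1 of \cite{MP05} for the $1/2$-lazy chain $M_G$. Your outline is the evolving-set argument behind that citation, and Step~1 and the Cauchy--Schwarz identity in Step~4 are sound. As written, though, the proposal does not yield the stated bound.

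First, a slip in Step~2. Conditionally on $U\le 1/2$ one has $\E[\pi(S_{t+1})\mid S_t]=\pi(S_t)+2M_G(S_t,\bar S_t)$. So the two branches are $\pi(S_t)\bigl(1\pm 2\Phi(S_t)\bigr)$, not $\pi(S_t)\bigl(1\pm \Phi(S_t)\bigr)$, and concavity gives $1-\tfrac12\Phi(S_t)^2$. Your $\tfrac18$ is true but loose, and your size-biased constant $\tfrac14$ does not follow from it.

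In Step~3, since $\pi(\hat S_t)\wedge\pi(\hat S_t^c)\le Z_t^{-2}$, one may take $f(z)=\tfrac12\phi_G^2(z^{-2})$. This is non\emph{decreasing} in $z$, not nonincreasing, and the argument below uses exactly this monotonicity. More importantly, your Jensen route does not give $\E Z_t\le\delta$ for $t\ge\int_\delta^{Z_0}\frac{dz}{zf(z)}$:
\begin{itemize}
\item The natural potential $\Lambda(z)=\int^z\frac{dy}{yf(y)}$ is concave. Jensen does yield $\E[\Lambda(Z_{t+1})\mid\mathcal F_t]\le\Lambda(Z_t)-1$, but then $\Lambda(\E Z_t)\ge\E\Lambda(Z_t)$ points the wrong way for bounding $\E Z_t$.
\item Applying Jensen to $zf(z)$ instead would need convexity. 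That fails here, because $\phi_G$ is a step function on a finite state space.
\end{itemize}

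The missing idea is a half-mass split. With $a_t=\E Z_t$ and $f$ nondecreasing,
\[
a_t-a_{t+1}\;\ge\;\E\bigl[Z_tf(Z_t);\,Z_t\ge a_t/2\bigr]\;\ge\; f(a_t/2)\,\E\bigl[Z_t;\,Z_t\ge a_t/2\bigr]\;\ge\;\tfrac12\,a_t\,f(a_t/2).
\]
Comparing this recursion with the integral gives $\E Z_t\le\delta$ once $t\ge\int_\delta^{Z_0}\frac{2\,dz}{z\,f(z/2)}$.

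The halved argument $f(z/2)$ is exactly what produces the endpoints in the statement. Substituting $r=4/z^2$, with $Z_0=\pi(x)^{-1/2}$ and $\delta=\sqrt\eps$, turns the integral into $\int_{4\pi(x)}^{4/\eps}\frac{2\,dr}{r\,\phi_G^2(r)}$. Your substitution applied to your unhalved integral would give endpoints $\pi(x)$ and $1/\eps$, so it does not explain the $4$'s.

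This also fixes the time budget in Step~4. There you charge each half $\int\frac{4\,dr}{r\phi_G^2(r)}$, which would double the total. Each half actually costs $\int_{4\pi(\cdot)}^{4/\eps}\frac{2\,dr}{r\phi_G^2(r)}$. The two halves together then cost at most the stated integral from $4\min(\pi(i),\pi(j))$, and the $+1$ absorbs choosing $t_1$ as a ceiling.

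With these repairs your outline is a complete proof with the stated constants. Otherwise, falling back on the citation, as you suggest, is exactly what the paper does.
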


We can now prove \cref{lem:cond-dmix}.

\begin{proof}[Proof of \cref{lem:cond-dmix}]
First, observe that
\begin{align*}
\frac{M_G(S, \bar{S})}{\pi_G(S)} &= \frac{1}{\pi_G(S)}\sum_{i \in S} \sum_{j \in \bar{S}} \pi_G(i)(M_G)_{i,j}\\
    &= \frac{\sum_{i=1}^n w_i}{\sum_{i \in S} w_i}\sum_{i \in S} \sum_{j \in \bar{S}} \frac{w_i}{\sum_{i=1}^n w_i} \cdot \frac{w_{i,j}}{2w_i}\\
    &= \frac{1}{2\sum_{i \in S} w_i} w(S, \bar{S})\\
    & = \frac{1}{2} \cdot \frac{w(S, \bar{S})}{\Vol_G(S)}.
\end{align*}
Furthermore, $\frac{\Vol_G(S)}{\Vol_G(V)} = \pi_G(S)$. Therefore, by \cref{eq:cond}, we have that $\phi_G(r) \geq \frac{\phi}{2}$ for all $r \in [\pi_G^*, 1/2]$. Observe by simple algebra for any $t \in \mathbb N$ the implication that
\[
\left|\frac{(M_G^t)_{i,j} - \pi_j}{\pi_j}\right| \leq \frac{1}{5} \implies (M_G^t)_{i,j} > \frac{3}{4}\pi_{i,j}.
\]
Thus, if we apply \cref{thm:MP} with $\eps = 1/5$ we get that
\begin{align*}
    \dmix G &\leq \max_{i,j}\left[2 + \int_{4 \min(\pi(i), \pi(j))}^{4/\eps} \frac{4dr}{r\phi_G^2(r)}\right]\\
    &\leq 2 + \int_{4\pi^*_G}^{20} \frac{16dr}{r \phi^2}\\
    &\leq 2 + \frac{2}{\phi^2} \left[\log r\right]_{4\pi^*_G}^{20}\\
    &\leq 2 + \frac{2(\log 20 - 4\log \pi^*_G)}{\phi^2}\\
    &\leq \frac{8 + 8 \log \frac{1}{\pi^*_G}}{\phi^2},
\end{align*}
where the last line uses the fact that $\phi \leq 1$. Now, note that $\pi^*_G = \frac{\alpha_G}{n}$. Thus, $\dmix G \leq \frac{8 + 8 \log (n/\alpha_G)}{\phi^2}$, as desired.
\end{proof}

\subsection{A Spectral Bound for Expanders}

By combining the results from the previous sections, we can conclude with our key spectral bound.

\begin{theorem}\label{thm:expander-Kikuchi}
Assume that $G = (V, E, w)$ with $|V| = n$ is a $\phi$-conductance expander with $\alpha_G$ the ratio between minimum and average degree. Then, 
\[
    \cL_G \succeq \frac{\cAK \phi^4 \alpha_G^2}{128(1 + \log (n/\alpha_G))^2} \cdot \frac{\Vol_G(V)}{n^2}\cdot \cL_{K_n}.
\]
\end{theorem}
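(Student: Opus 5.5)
The plan is to chain three earlier results: \cref{thm:AK} for the interchange-process operator, \cref{lem:cond-dmix} to bound the mixing time, and \cref{prop:IP-to-EP} to transfer the inequality to $\cL_G$.

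First apply \cref{thm:AK} to $G$. Together with its bound $\delta_G \geq 1/(2\dmix G)$ this gives
\[
\nabla_G \succeq \frac{\cAK}{2(\dmix G)^2}\cdot\frac{\min_i w_i^2}{\sum_i w_i}\,\nabla_{K_n}.
\]
Next rewrite the degree factor using $\alpha_G$. By definition $\min_i w_i = \alpha_G \cdot \frac{1}{n}\sum_i w_i$, and $\sum_i w_i = \Vol_G(V)$. Hence
\[
\frac{\min_i w_i^2}{\sum_i w_i} = \frac{\alpha_G^2\,\Vol_G(V)^2/n^2}{\Vol_G(V)} = \alpha_G^2\,\frac{\Vol_G(V)}{n^2}.
\]

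Then bound the mixing time with \cref{lem:cond-dmix}, which gives $\dmix G \leq 8(1+\log(n/\alpha_G))/\phi^2$. Squaring,
\[
(\dmix G)^2 \le \frac{64(1+\log(n/\alpha_G))^2}{\phi^4},
\]
so
\[
\frac{\cAK}{2(\dmix G)^2} \geq \frac{\cAK\phi^4}{128(1+\log(n/\alpha_G))^2}.
\]
This matches the constant in the statement exactly. Combining the three displays yields $\nabla_G \succeq \lambda\nabla_{K_n}$ with $\lambda$ equal to the coefficient in the theorem. Since $\nabla_{K_n}\succeq 0$, replacing the coefficient by any smaller nonnegative number preserves the inequality, so the lower bounds above may be substituted directly.

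Finally, apply \cref{prop:IP-to-EP} with $H = K_n$ and this $\lambda \ge 0$ to conclude $\cL_G \succeq \lambda \cL_{K_n}$.

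The main point to check is that the hypotheses line up. \cref{lem:cond-dmix} uses the same $\alpha_G$ and the same conductance notion, with $\pi^*_G = \alpha_G/n$, so it applies as stated. If $G$ has isolated vertices then $\alpha_G = 0$; in that case the theorem is either vacuous or the $\log$ term is undefined, and I would restrict to $\alpha_G > 0$, which conductance $\phi > 0$ forces anyway for connected $G$. Beyond this, the proof is a direct substitution of the earlier results.
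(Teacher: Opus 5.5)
Your proposal is correct and follows the paper's own proof: apply \cref{thm:AK} with $\delta_G \geq 1/(2\dmix G)$, bound $\dmix G$ via \cref{lem:cond-dmix}, rewrite $\min_i w_i^2/\sum_i w_i = \alpha_G^2 \Vol_G(V)/n^2$, and transfer to $\cL_G$ with \cref{prop:IP-to-EP}. Your remark that $\nabla_{K_n}\succeq 0$ is what justifies substituting a lower bound for the coefficient, which the paper uses implicitly.
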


\begin{proof}
By \cref{lem:cond-dmix}, we have that $\dmix G \leq \frac{8+8\log (n / \alpha_G)}{\phi^2}$. If we apply \cref{thm:AK} with the bound $\delta_G \geq \frac{1}{2\dmix G}$, we get that
\begin{align*}
\nabla_G &\succeq \frac{\cAK}{2(\dmix G)^2} \cdot \frac{\min_{i \in [n]} w_i^2}{\sum_{i \in [n]} w_i} \nabla_{K_n}.\\
&\succeq \frac{\cAK \phi^4}{128(1 + \log (n/\alpha_G))^2} \cdot \frac{\frac{\alpha_G^2}{n^2} \left(\sum_{i=1}^n w_i\right)^2}{\sum_{i=1}^n w_i} \nabla K_n\\
&= \frac{\cAK \phi^4 \alpha_G^2}{128(1 + \log (n/\alpha_G))^2} \cdot \frac{\sum_{i = 1}^n w_i}{n^2}\cdot \nabla_{K_n}.
\end{align*}
To finish, we use \cref{prop:IP-to-EP} to replace $\nabla_G$ and $\nabla_{K_n}$ with $\cL_G$ and $\cL_{K_n}$, respectively.
\end{proof}

\section{Sparsifying Expanders}\label{sec:sparsify-expander}

In this section, we consider sparsifying the Kikuchi-graph of unweighted expanders $G = (V, E)$. 
Inspired by the above, we will assume that our graph $G= (V, E)$ is a good \emph{conductance expander} and that the minimum degree of any vertex is within a logarithmic factor of the \emph{average degree}.

\begin{lemma}\label{lem:sparsifyExpander}
Let $G = (V, E)$ be a graph on $n' \leq n$ vertices and $m$ edges such that $G$ is a $\phi$ conductance expander and has minimum-degree balance parameter $\alpha_G$. Let $\eps \in (0, 1)$, let $m' = \frac{100 n' \log(n)}{\lambda \eps^2}$ for $\lambda =  \frac{\cAK \phi^4 \alpha_G^2}{128(1 + \log (n'/\alpha_G))^2}$, and let $p = \min  \left (1,  \frac{m'}{m} \right )$. Consider the random process of sampling each edge $e \in E$ with probability $p$, and giving weight $1/p$ to each sampled edge. Denote this resulting graph by $\widetilde{G}$. Then, with probability $\geq 1 - \frac{1}{n^8}$ over the random sampling, it is the case that 
\[
(1 - \eps) \cdot \mathcal{L}_G \preceq \mathcal{L}_{\widetilde{G}} \preceq (1 + \eps) \cdot \mathcal{L}_G,
\]
and $\widetilde{G}$ has $\leq 2m'$ many edges.
\end{lemma}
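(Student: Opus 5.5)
We would decompose the space using \cref{lem:orthodecomp} and \cref{lem:invariantSpaces}, and apply Matrix Chernoff (\cref{fact:matrix-bernstein-restated}) separately on each invariant subspace $E_k$. The key point is that the dimension $\dim(E_k)\le (n+1)\binom{n}{k}$ is only $n^{O(k)}$, while the "importance" of each edge restricted to $E_k$ will be shown to be about $n'/(\lambda k m)$. This is small enough to beat the $\log\dim(E_k)\approx k\log n$ union-bound cost.

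If $p=1$ there is nothing to prove, so assume $p=m'/m<1$. We view $G$ as a graph on $[n]$ with $n-n'$ isolated vertices, or equivalently work on $[n']$ alone. Working on $[n']$ is cleaner, since \cref{thm:expander-Kikuchi} applies with $n'$ vertices, and the final bound on $[n]$ follows by tensoring with identity on the remaining qubits. Define $Y_e = \frac{1}{p}\mathbf{1}[e\text{ sampled}]\,\cL_e|_{E_k}$, so that $\E\sum_e Y_e = \cL_G|_{E_k}$. Here $\cL_e$ and $\cL_G$ preserve $E_k$ by \cref{lem:invariantSpaces}, and $\cL_{\widetilde G}$ is block diagonal in the same decomposition. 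Hence it suffices to prove the two-sided bound on each $E_k$ and take a union bound over the $\le n'/2+1$ values of $k$.

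The core step is the bound $Y_e\preceq R_k\,\cL_G|_{E_k}$. Since $\cL_e\preceq 2\Id$, we have $\cL_e|_{E_k}\preceq 2\Id_{E_k}$. By \cref{thm:expander-Kikuchi} (with $\Vol_G(V)=2m$) and \cref{lem:completeGraphEigenvalues},
\[\cL_G|_{E_k}\succeq \lambda\frac{2m}{n'^2}\cL_{K_{n'}}|_{E_k}=\lambda\frac{2m}{n'^2}k(n'+1-k)\Id_{E_k}\succeq \lambda\frac{mk}{n'}\Id_{E_k}\]
for $k\ge 1$, because $n'+1-k\ge n'/2$. Therefore $Y_e\preceq \frac{2}{p}\cdot\frac{n'}{\lambda mk}\cL_G|_{E_k}=\frac{2n'}{\lambda k m'}\cL_G|_{E_k}$, giving $R_k=\frac{2n'}{\lambda k m'}=\frac{\eps^2}{50k\log n}$. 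Matrix Chernoff then gives failure probability at most $2\dim(E_k)\exp(-\eps^2/3R_k)\le 2(n+1)n^{k}\exp(-\tfrac{50}{3}k\log n)\le n^{-10}$ for $k\ge1$.

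For $k=0$, $\cL_G|_{E_0}=0$: constants on each slice are killed by every $\cL_e$. So the inequality is trivial there, and we must simply avoid dividing by zero, treating $E_0$ separately rather than through Chernoff. More generally, one should handle the kernel by restricting to the range of $\cL_G|_{E_k}$. Since $\cL_G|_{E_k}$ is positive definite for $k\ge1$ by the display above, this is only an issue at $k=0$. Summing over $k$ gives total failure probability $\le n^{-9}$.

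The edge count is a scalar Chernoff bound: the number of sampled edges has mean $m'\ge 100\log n$, so it exceeds $2m'$ with probability $\le e^{-m'/3}\le n^{-30}$. The total failure probability is then $\le n^{-8}$.

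The main obstacle is verifying that the lower bound on $\cL_G|_{E_k}$ grows linearly in $k$, which exactly cancels the $k\log n$ dimension penalty. This rests entirely on \cref{thm:expander-Kikuchi} together with the exact eigenvalue $k(n'+1-k)$ of the complete graph. I would double-check the constant bookkeeping there, including $\log\dim E_k\le (k+1)\log(n+1)$.
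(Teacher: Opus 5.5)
Your proposal is correct and follows the paper's proof essentially step for step. You restrict to each invariant block $E_k$ with $k\ge 1$ ($E_0$ being trivial), combine $\cL_e\preceq 2\Id$ with the lower bound $\cL_G|_{E_k}\succeq \lambda\frac{2m}{n'^2}k(n'+1-k)\Id_{E_k}$ from \cref{thm:expander-Kikuchi} and \cref{lem:completeGraphEigenvalues}, apply Matrix Chernoff so that the $\Theta(k\log n)$ exponent beats $\log\dim(E_k)=O(k\log n)$, union bound over $k$, and finish with a scalar Chernoff bound for the edge count; the only differences (your dimension bound $(n+1)n^k$ versus the paper's $n'^{k+2}$, and your explicit note that $\cL_G|_{E_k}$ is positive definite for $k\ge 1$) are cosmetic.
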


\begin{proof}
Note that if $p = 1$ or $n = 1$, the claim holds trivially. Thus, we assume $p < 1$ and $n \geq 2$ going forward.

For each edge $e \in E$, we let $X_e$ denote a random variable such that:
\[
X_e = \begin{cases}
    \frac{1}{p} \cdot \mathcal{L}_e  \text{ with probability } p \\
    0 \quad \quad \text{   otherwise}
\end{cases}.
\]
Note that by \cref{prop:Aeeig} all of the eigenvalues of $\mathcal{L}_e = \Id - A_e$ lie in $[0,2]$.

Now, recall that our goal is to show \[
(1 - \eps) \cdot \mathcal{L}_G \preceq \mathcal{L}_{\widetilde{G}} \preceq (1 + \eps) \cdot \mathcal{L}_G.
\]
By \cref{lem:orthodecomp} and \cref{lem:invariantSpaces}, it suffices to prove that 
\[
(1 - \eps) \cdot \mathcal{L}_G|_{E_k} \preceq \mathcal{L}_{\widetilde{G}}|_{E_k} \preceq (1 + \eps) \cdot \mathcal{L}_G|_{E_k}
\]
for $0 \leq k \leq \lfloor n'/2 \rfloor$. To this end, let us now fix a single such $k$. Note that we can assume that $k > 0$, as for any edge $e \in E$, $(\cL_e)|_{E_0} = 0$ (thus the quadratic form on $E_0$ is preserved automatically).

Importantly, we can then observe that \begin{align}\label{eq:upperBoundEigOneTerm}
    \lambda_{\max}(X_e|_{E_k}) \leq \frac{1}{p} \cdot\lambda_{\max}(\mathcal{L}_e|_{E_k}) \leq \frac{2}{p}.
\end{align} 

At the same time, by \cref{lem:completeGraphEigenvalues}, we know that 
\[
\cL_{K_{n'}}\vert_{E_k} = k(n' + 1 - k)\cdot\Id_{E_k}.
\]
Plugging in our bound from \cref{thm:expander-Kikuchi} (along with our definition of $\lambda$), we then see that 
\[
\cL_{G}\vert_{E_k} \succeq \frac{2m}{n'^2} \cdot \lambda \cdot \cL_{K_{n'}}.
\]
Together, the preceding two claims then imply that 
\begin{align}\label{eq:lowerBoundEigExpander}
    \cL_{G}\vert_{E_k} \succeq \frac{2m}{n'^2} \cdot \lambda \cdot k(n' + 1 - k)\cdot\Id_{E_k}.
\end{align}

Now, combining \cref{eq:upperBoundEigOneTerm} and \cref{eq:lowerBoundEigExpander}, we then see that 
\[
\cL_e\vert_{E_k} \preceq \frac{1}{R} \cL_{G}\vert_{E_k},
\]
for \[
R = \frac{\frac{2m}{n'^2} \cdot \lambda \cdot k(n' + 1 - k)}{2/p} = \frac{p \cdot \frac{2m}{n'^2} \cdot \lambda \cdot k(n' + 1 - k)}{2}  = \frac{m' \cdot \frac{2m}{n'^2} \cdot \lambda \cdot k(n' + 1 - k)}{2m}
\]
\[
= \frac{\lambda \cdot k \cdot (n'+1 - k) \cdot 100 n' \log(n)}{n'^2 \cdot \lambda \eps^2} = \frac{100 \log(n) \cdot k \cdot (n' +1 -k)}{n' \cdot \eps^2}.
\]
In particular, we can observe that $n'+1 -k \geq n'/2$, and thus 
\[
R \geq \frac{50 \log(n) \cdot k}{\eps^2}.
\]
With this established, we now revisit \cref{fact:matrix-bernstein-restated}: we consider the sequence of variables $Y_e = X_e|_{E_k} \in E_k$. We then trivially have that $\E[\sum_e Y_e] = \cL_G|_{E_k}$. Thus, \cref{fact:matrix-bernstein-restated} implies that 
\[
(1 - \eps) \cdot \mathcal{L}_G|_{E_k} \preceq \mathcal{L}_{\widetilde{G}}|_{E_k} \preceq (1 + \eps) \cdot \mathcal{L}_G|_{E_k}
\]
with probability \[\geq 
1 - 2 \cdot \dim(E_k) \cdot e^{- \eps^2 \cdot R / 3} \geq 1 - 2 \cdot n'^{k+2} \cdot e^{-16 k\log(n)} \geq 1 - \frac{1}{n^{10k}}.
\]
Note that in the inequality above, we have used the value of $R$ computed above, along with \cref{prop:boundDimension} to bound $\dim(E_k) \leq (n')^{k+2}$. 

Now, we can take a union bound over all choices of $k$; in particular, the probability that for all $0 \leq k \leq \lfloor n'/2 \rfloor$ we have $(1 - \eps) \cdot \mathcal{L}_G|_{E_k} \preceq \mathcal{L}_{\widetilde{G}}|_{E_k} \preceq (1 + \eps) \cdot \mathcal{L}_G|_{E_k}$ is at least $1 - \sum_{k = 1}^{\lfloor n'/2\rfloor}\frac{1}{n^{10k}} \geq 1 - \frac{1}{n^9}$. 

Likewise, by a simple Chernoff bound, we see that with probability $1 - 1 / n^{10}$, the desired bound on $|\widetilde{G}|$ holds. Thus, we obtain our stated claim. 
\end{proof}

\section{Using Expander Decomposition to Sparsify All Kikuchi Graphs}\label{sec:expander-decomposition}

To start, we use some basic extensions of the expander decomposition results of Saranurak and Wang \cite{saranurak2019expander}. Roughly speaking, these results show that for an arbitrary graph $G$, one can decompose it into smaller subgraphs each of which is a good expander. For our purposes, we will want the additional property that the \emph{minimum-to-average} degree ratio is bounded.

\begin{theorem}\label{thm:expanderDecomp}[See, for instance, Theorem 3.4 in \cite{gupta2022online} or Theorem 2.5 in \cite{khanna2026fault}]
    Let $G = (V, E)$ be an unweighted graph with $|V| = n$ and $|E| = m$. Then, there exists an edge-disjoint decomposition of $G$ into smaller graphs $G_1, \dots G_k$ along with a universal constant $C$ such that:
    \begin{enumerate}
        \item Each vertex appears in $O(\log^2(m))$ many of the graphs $G_i: i \in [k]$.
        \item Each of the graphs $G_i$ is a $\geq \frac{1}{C \log(n)}$ conductance expander.
        \item In each graph $G_i = (V_i, E_i)$, for all $v \in V_i$, $\mathrm{deg}_{G_i}(v) \geq \frac{1}{4 C \log(n)} \cdot \frac{2|E_i|}{|V_i|}$. Equivalently, $\alpha_{G_i} \geq \frac{1}{4C\log(n)}$.
    \end{enumerate}
\end{theorem}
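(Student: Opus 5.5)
The plan is to build the decomposition by iterating the standard Saranurak–Wang expander decomposition on the \emph{residual} edges, with a pruning step inside each round that enforces the minimum-degree condition. The vertex-multiplicity bound will then come from counting rounds. I will use the following form of \cite{saranurak2019expander}. For any unweighted graph $H$ with $m_H$ edges and any $\phi \le 1/(C_0\log m_H)$, the vertex set of $H$ can be partitioned into clusters such that:
\begin{itemize}
\item each induced cluster $H[U]$ is a $\phi$-conductance expander;
\item the number of inter-cluster edges is at most $O(\phi\log m_H)\cdot m_H \le m_H/4$ for a suitable constant choice of $\phi = \Theta(1/\log n)$.
\end{itemize}

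The procedure is as follows.
\begin{enumerate}
\item Set $R_0 := E$.
\item In round $t$, apply the decomposition to the graph $([n], R_t)$ with $\phi = 1/(C'\log n)$.
\item For each cluster $U$, let $H_U = ([n],R_t)[U]$ and prune it. Repeatedly delete any vertex whose current degree is below $d_U/(8C\log n)$, where $d_U := 2|E(H_U)|/|U|$ is the average degree of the \emph{original} cluster.
\item Output each pruned cluster as one of the $G_i$.
\item Put into $R_{t+1}$ all inter-cluster edges together with all edges deleted during pruning.
\end{enumerate}

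\textbf{Pruning loses few edges.} Each deleted vertex removes fewer than $d_U/(8C\log n)$ edges, so at most $|U|\cdot d_U/(8C\log n) = |E(H_U)|/(4C\log n)$ edges are lost in total. This is a small fraction of the cluster's edges. Hence $|R_{t+1}| \le |R_t|/2$, there are $O(\log m)$ rounds, and the output graphs are edge-disjoint by construction.

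\textbf{Degree ratio survives pruning.} The surviving subgraph keeps at least half the volume, and it loses at most half its vertices' worth of average. So its average degree is at most a constant times $d_U$, and every surviving vertex has degree at least $d_U/(8C\log n)$. This gives $\alpha_{G_i} \ge 1/(4C\log n)$ after adjusting constants.

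\textbf{Main obstacle: conductance after pruning.} Deleting low-degree vertices can destroy conductance, so I would invoke the expander-trimming lemma of Saranurak--Wang. Removing a set of edges $D$ with $|D| \le \phi|E(H_U)|/10$ from a $\phi$-expander leaves a subset $U'\subseteq U$ that is a $\phi/6$-expander, with $\Vol(U\setminus U') = O(|D|/\phi)$. Since $\phi = \Theta(1/\log n)$, the loss $O(|D|/\phi)$ is only a constant fraction provided $D$ is chosen with $|D| = O(|E(H_U)|/\log^2 n)$. This forces taking the pruning threshold a $\log n$ factor smaller than stated. If that is too lossy, I would instead alternate: trim, re-check degrees, and trim again. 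Each trim removes a geometrically decreasing volume, giving an inner loop of $O(\log m)$ iterations. The trimmed-away edges are charged to $R_{t+1}$, and this still halves the residual.

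\textbf{Counting vertex appearances.} In each outer round a vertex lies in at most one cluster. If the inner alternation is needed, each cluster may be output as up to $O(\log m)$ pieces, across which a vertex can appear. Over $O(\log m)$ outer rounds this gives $O(\log^2 m)$ appearances per vertex, which is the source of the $\log^2$ in item~1. Conductance $\ge 1/(C\log n)$ and $\alpha_{G_i} \ge 1/(4C\log n)$ then hold after fixing the constant $C$ large enough.
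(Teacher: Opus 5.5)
The paper does not prove this statement; it imports it from the cited works. So your argument has to stand on its own. Your outer skeleton is fine: rounds that halve the residual edge set, with vertex-disjoint pieces in each round. It would even give $O(\log m)$ appearances per vertex. The gap is in item~3. Your pruning threshold is $d_U/(8C\log n)$, where $d_U$ is the average degree of the \emph{original} cluster. Item~3, however, is measured against the average degree $2|E_i|/|V_i|$ of the \emph{final} piece. Pruning can delete almost all vertices of $U$ while deleting almost none of its volume, so the step ``its average degree is at most a constant times $d_U$'' is false.

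Here is a counterexample. Take a clique on $k$ vertices and hang $k^{3/2}$ pendant vertices evenly on it. Then add $k^{5/4}$ vertices, each joined to about $\sqrt{k}/(4C\log n)$ clique vertices. This is an $\Omega(1)$-conductance expander, so it is a legitimate cluster (for instance, when it is the whole residual graph), and $d_U=\Theta(\sqrt{k})$. Your pruning removes exactly the pendants. Trimming need not remove anything, since the survivor is still an expander. The output has average degree $\Theta(k^{3/4})$ but minimum degree about $\sqrt{k}/(4C\log n)$. This violates item~3 by a factor of roughly $k^{1/4}$.

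The conductance repair is also not actually carried out. A single trimming step is affordable: pruning deletes at most $|E(H_U)|/(4C\log n)\le\phi|E(H_U)|/10$ edges once $C\ge 2.5C'$. So your claim that a $\log n$-smaller threshold is forced is an arithmetic slip. However, trimming only guarantees that each survivor keeps a $\phi/6$ fraction of its degree, so degrees can fall below the threshold again. Re-pruning and re-trimming the output of the previous trim compounds the loss ($\phi\to\phi/6\to\phi/36\to\cdots$). Nothing in the proposal supports the claimed geometric decay of trimmed volume, or the ``$O(\log m)$ pieces per cluster'' (in your procedure, trimmed parts go to the residual anyway).

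The missing idea is to compare each vertex with the average degree of the \emph{current} piece, and to interleave this test with the sparse-cut recursion rather than post-processing a black-box decomposition. Concretely, set $\eps=1/(4C\log n)$. While the current piece $H$ has a vertex with $\deg_H(v)<\eps\cdot 2|E(H)|/|V(H)|$, delete it. Otherwise, cut along a $\phi$-sparse cut if one exists. Otherwise, output $H$. Output pieces satisfy items~2 and~3 by construction, so no trimming is needed.

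To bound the cost, spread each deletion's cost over the current edges; this charges each edge less than $2\eps/|V(H)|$. The vertex counts of the pieces containing a fixed edge strictly decrease. Hence deletions cost at most $2\eps|R_t|\sum_{j\le n}1/j=O(|R_t|/C)$ per round, while sparse cuts cost $O(\phi\log m)\,|R_t|$ as usual. So the residual still halves.
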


Intuitively, we will invoke \cref{thm:expanderDecomp} to decompose the graph $G$ into the expander components. Then, we will sparsify each of the individual expanders, and then the union of these sparsified expanders will be a sparsifier of the original graph. To do this, we first require a basic fact about the ``composability'' of sparsifiers:

\begin{proposition}\label{prop:composeSparsifier}
    Let $G = (V, E)$ be a graph, let $G_1, \dots G_k$ be a (disjoint) partition of the edges of $G$, and let $\eps \in (0, 1)$. Suppose that for each $i \in [k]$, $G'_i$ is a graph such that $(1 - \eps) \cdot \mathcal{L}_{G_i} \preceq \mathcal{L}_{G'_i} \preceq (1 + \eps) \cdot \mathcal{L}_{G_i}$. Then, for $G' = \sum_{i \in [k]} G'_i$, it must be the case that 
    \[
    (1 - \eps) \cdot \mathcal{L}_{G} \preceq \mathcal{L}_{G'} \preceq (1 + \eps) \cdot \mathcal{L}_{G}.
    \]
\end{proposition}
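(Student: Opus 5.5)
The plan is to use linearity of the map $G \mapsto \cL_G$ in the edge weights, and then the fact that Loewner-order inequalities survive summation.

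First, since $G_1, \dots, G_k$ partition the edges of $G$, each edge $e \in E$ lies in exactly one $G_i$, with the same weight. Therefore
\[
\cL_G = \sum_{e \in E} w_e \cL_e = \sum_{i \in [k]} \sum_{e \in E(G_i)} w_e \cL_e = \sum_{i \in [k]} \cL_{G_i}.
\]
Second, $G' = \sum_{i} G'_i$ is the graph whose weight on each pair $e$ is the sum of the weights $G'_i$ assigns to $e$. By the same linearity,
\[
\cL_{G'} = \sum_{i \in [k]} \cL_{G'_i}.
\]
This holds even if different $G'_i$ place edges on the same pair, because weights simply add.

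Third, the hypothesis gives, for each $i$, that $\cL_{G'_i} - (1-\eps)\cL_{G_i} \succeq 0$ and $(1+\eps)\cL_{G_i} - \cL_{G'_i} \succeq 0$. A sum of PSD operators is PSD, since for any $\psi$ the quadratic forms $\langle \psi, \cdot\, \psi\rangle$ add and each is nonnegative. Summing the first family over $i$ and substituting the two identities above gives $\cL_{G'} - (1-\eps)\cL_G \succeq 0$. Summing the second family gives $(1+\eps)\cL_G - \cL_{G'} \succeq 0$.

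There is essentially no obstacle here. The only point needing care is that "$\sum_i G'_i$" must be interpreted as weight-addition rather than as a union that could merge or drop parallel edges. With that interpretation, the equality $\cL_{G'} = \sum_i \cL_{G'_i}$ is immediate from the definition $\cL_H = \sum_e \tilde w_e \cL_e$.
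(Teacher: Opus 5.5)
Your proposal is correct and matches the paper's proof: both write $\cL_{G'} = \sum_i \cL_{G'_i}$ and $\cL_G = \sum_i \cL_{G_i}$, then sum the per-piece Loewner inequalities. Your explicit note that $\sum_i G'_i$ means adding weights on shared pairs is a useful clarification that the paper leaves implicit.
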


\begin{proof}
    Observe that $\mathcal{L}_{G'} = \sum_{i \in [k]} \cL_{G'_i}$. Thus, $\mathcal{L}_{G'} = \sum_{i \in [k]} \cL_{G'_i} \succeq \sum_{i \in [k]} (1-\eps) \cL_{G_i}$, and likewise $\mathcal{L}_{G'} = \sum_{i \in [k]} \cL_{G'_i} \preceq \sum_{i \in [k]} (1+\eps) \cL_{G_i}$.
\end{proof}

We also have the following proposition, which effectively shows that, whenever a base graph has ``unused vertices'' (in the sense that no edges are incident to these vertices), one can effectively \emph{ignore} the unused vertices. 

\begin{proposition}\label{prop:extendVertexSet}
    Let $H = (V_H, E_H)$ denote a graph on vertex set $V_H$, and let $H' = (V_H, E'_H, w)$ denote a $(1 \pm \eps)$-Kikuchi sparsifier of $H$. Then, for the graph $\hat{H} = (V, E_H)$ which lives on a larger vertex set $V_H \subseteq V$, it is the case that $\hat{H'} = (V, E'_H, w)$ is a $(1 \pm \eps)$-Kikuchi sparsifier of $H'$.
\end{proposition}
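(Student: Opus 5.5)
The plan is to show that passing from $V_H$ to the larger vertex set $V$ amounts to tensoring every operator with an identity, and that tensoring with an identity preserves Loewner order. The conclusion being proved is that $\hat{H'}$ is a $(1\pm\eps)$-Kikuchi sparsifier of $\hat H$. The statement says ``of $H'$''; I read this as a typo for $\hat H$.

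\textbf{Step 1 (factorization).} Let $W := V\setminus V_H$ and identify $2^{V}$ with $2^{V_H}\times 2^{W}$ via $S\mapsto (S\cap V_H, S\cap W)$. This gives $\C^{2^V}\cong \C^{2^{V_H}}\otimes \C^{2^{W}}$. Fix an edge $e\subseteq V_H$. The transposition $(e)$ fixes every element of $W$, so
\[
(e)S = \bigl((e)(S\cap V_H)\bigr)\cup (S\cap W).
\]
Hence $A_e^{(V)} = A_e^{(V_H)}\otimes \Id_{\C^{2^W}}$, and therefore
\[
\cL_e^{(V)} = \Id - A_e^{(V)} = \cL_e^{(V_H)}\otimes \Id_{\C^{2^W}}.
\]
This is exactly the qubit description $M_{\QC}|_e\otimes \Id_{V\setminus e}$ from \cref{eq:qcpredicate}, with the identity on $V\setminus e$ split into the part on $V_H\setminus e$ and the part on $W$.

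\textbf{Step 2 (sum over edges).} Summing Step 1 with the weights gives
\[
\cL_{\hat H} = \cL_H\otimes \Id \quad\text{and}\quad \cL_{\hat{H'}} = \cL_{H'}\otimes \Id.
\]
This uses that every edge of $E_H$ and of $E'_H$ lies inside $V_H$.

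\textbf{Step 3 (order is preserved).} The hypothesis says that $\cL_{H'} - (1-\eps)\cL_H \succeq 0$ and $(1+\eps)\cL_H - \cL_{H'}\succeq 0$. For any PSD $P$, the operator $P\otimes \Id$ is PSD, since its eigenvalues are those of $P$ repeated. Applying this to both differences gives
\[
(1-\eps)\cL_{\hat H}\preceq \cL_{\hat{H'}}\preceq (1+\eps)\cL_{\hat H},
\]
which is the claim.

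There is no real obstacle. The only point needing care is Step 1: the factorization of $A_e$ depends on $e\subseteq V_H$. This holds for every edge of both $H$ and $H'$, since both graphs live on $V_H$.
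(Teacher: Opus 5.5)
Your proof is correct and follows the same argument as the paper: the paper's proof simply observes that $\cL_{\hat H} = \cL_H\otimes\Id_{V\setminus V_H}$ and $\cL_{\hat{H'}} = \cL_{H'}\otimes\Id_{V\setminus V_H}$, and you additionally show the factorization edge by edge and check that tensoring with the identity preserves the Loewner order. You are also right to read ``sparsifier of $H'$'' as a typo for ``sparsifier of $\hat H$''.
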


\begin{proof}
    This follows because $\cL_{\hat{H}} = \cL_{H} \otimes \Id_{V - V_H}$, and likewise $\cL_{\hat{H'}} = \cL_{H'} \otimes \Id_{V - V_H}$.
\end{proof}

\subsection{Proving the Existence of Kikuchi Sparsifiers}

With this in hand, we can prove our main theorem:

\begin{theorem}\label{thm:KikuchiSparsifierExistence}
    Let $G = (V, E)$ be an unweighted graph on $n$ vertices and $m = \mathrm{poly}(n)$ many edges, and let $\eps \in (0, 1)$. Then, there exists a re-weighted subgraph $G' = (V, E', w)$ such that 
    \[
    (1 - \eps) \cdot \mathcal{L}_G \preceq \mathcal{L}_{G'} \preceq (1 + \eps) \cdot \mathcal{L}_G
    \]
    and $|E'| \leq O(n \log^{11}(n) / \eps^2)$.
\end{theorem}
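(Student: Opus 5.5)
We apply \cref{thm:expanderDecomp} to $G$, sparsify each piece with \cref{lem:sparsifyExpander}, lift each sparsifier to the full vertex set with \cref{prop:extendVertexSet}, and combine the pieces with \cref{prop:composeSparsifier}.

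\textbf{Step 1: decompose.} Apply \cref{thm:expanderDecomp} to obtain edge-disjoint pieces $G_1,\dots,G_k$ with $G_i=(V_i,E_i)$ and $n_i:=|V_i|$. Each $G_i$ is a $\phi$-conductance expander with $\phi\ge \frac{1}{C\log n}$, and has $\alpha_{G_i}\ge\frac{1}{4C\log n}$. Each vertex lies in $O(\log^2 m)=O(\log^2 n)$ pieces, using $m=\mathrm{poly}(n)$. Consequently $\sum_i n_i = O(n\log^2 n)$.

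\textbf{Step 2: sparsify each piece.} For each $i$, apply \cref{lem:sparsifyExpander} to $G_i$, viewed as a graph on its own vertex set $V_i$, with the global $n$ inside the logarithm. The parameter there is
\[
\lambda_i=\frac{\cAK\phi^4\alpha_{G_i}^2}{128(1+\log(n_i/\alpha_{G_i}))^2}.
\]
Plugging in the bounds on $\phi$ and $\alpha_{G_i}$ gives $\lambda_i=\Omega(1/\log^{8} n)$:
\begin{itemize}
\item $\phi^4$ contributes $\log^{-4}n$;
\item $\alpha_{G_i}^2$ contributes $\log^{-2}n$;
\item the squared log in the denominator contributes $\log^{-2}n$, since $n_i/\alpha_{G_i}\le 4Cn\log n$.
\end{itemize}
Hence
\[
m_i'=\frac{100\,n_i\log n}{\lambda_i\eps^2}=O\!\left(\frac{n_i\log^{9}n}{\eps^2}\right).
\]
With probability $\ge 1-n^{-8}$, the resulting $\widetilde G_i$ is a $(1\pm\eps)$ Kikuchi sparsifier of $G_i$ on $V_i$ with at most $2m_i'$ edges.

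\textbf{Step 3: lift to $V$.} By \cref{prop:extendVertexSet}, $\widetilde G_i$ extended to all of $V$ is a $(1\pm\eps)$ Kikuchi sparsifier of $G_i$ extended to $V$. The reason is that both Laplacians are tensored with $\Id$ on $V\setminus V_i$, and tensoring with $\Id$ preserves Loewner order.

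There is one point to handle here. \cref{lem:sparsifyExpander} uses the invariant subspaces of the smaller cube $\{0,1\}^{V_i}$ and bounds $\dim(E_k)\le n_i^{k+2}\le n^{k+2}$. The failure probability $n^{-9}$ it yields is expressed in the global $n$, which is what we want for the next step.

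\textbf{Step 4: combine.} The number of pieces satisfies $k\le\sum_i n_i=O(n\log^2 n)$. A union bound over all pieces therefore succeeds with probability at least $1-O(n^{-6})>0$. By \cref{prop:composeSparsifier}, the union $G'=\sum_i\widetilde G_i$ is then a $(1\pm\eps)$ Kikuchi sparsifier of $G$. Its size is
\[
\sum_i 2m_i' = O\!\left(\frac{\log^9 n}{\eps^2}\right)\sum_i n_i = O\!\left(\frac{n\log^{11}n}{\eps^2}\right),
\]
matching the claimed bound.

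\textbf{Main obstacle.} The key requirement is that the decomposition covers every edge, so that no residual edges have to be absorbed; \cref{thm:expanderDecomp} as stated gives exactly this, since the $G_i$ partition $E$. Beyond that, the work is bookkeeping: checking that \cref{lem:sparsifyExpander} tolerates the mismatch between $n_i$ and $n$, and that the exponents of $\log n$ total $11$. In particular, pieces where $p=1$ are kept whole, and their edge count $m_i$ is then at most $m_i'$, so the size bound still holds.
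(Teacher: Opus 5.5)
Your proof is correct and follows the paper's argument step for step. You apply \cref{thm:expanderDecomp}, sparsify each piece via \cref{lem:sparsifyExpander} with $\lambda=\Omega(\log^{-8}n)$, lift with \cref{prop:extendVertexSet}, combine with \cref{prop:composeSparsifier}, and use $\sum_i n_i=O(n\log^2 n)$ to reach the $\log^{11}$ bound. The only difference is your union bound over pieces, which the paper skips: each piece's sparsifier can be chosen separately, so existence for each piece already suffices.
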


\begin{proof}
    To start, we invoke the expander decomposition statement of \cref{thm:expanderDecomp}. This yields a decomposition of our graph $G$ into $G_1, \dots G_k$ along with a universal constant $C$ such that:
    \begin{enumerate}
        \item Each vertex appears in $O(\log^2(m))$ many of the graphs $G_i: i \in [k]$.
        \item Each of the graphs $G_i$ is a $\geq \frac{1}{C \log(n)}$ conductance expander.
        \item In each graph $G_i = (V_i, E_i)$, for all $v \in V_i$, $\mathrm{deg}_{G_i}(V_i) \geq \frac{1}{4 C \log(n)} \cdot \frac{2|E_i|}{|V_i|}$. Equivalently, $\alpha_{G_i} \geq \frac{1}{4C\log(n)}$.
    \end{enumerate}

    Now, for each one of these graphs $G_i = (V_i, E_i)$, we invoke the expander sparsification statement of \cref{lem:sparsifyExpander} with $\alpha_{G_i} = \frac{1}{4 C \log(n)}$, and $\phi = \frac{1}{C \log(n)}$. If $G_i$ is a graph on $n'_i = |V_i|$ many vertices, then \cref{lem:sparsifyExpander} produces a sparsifier $G'_i = (V_i, E'_i, w)$ with 
    \[
    |E'_i| = \frac{100 n_i' \log(n)}{\lambda \eps^2} 
    \]
    many edges, for $\lambda =  \frac{\cAK \phi^4 \alpha_{G_i}^2}{128(1 + \log (n'_i/\alpha_G))^2}$. Using our values of $\alpha_{G_i}$ and $\phi$, we can see that $\lambda = \Omega(1 / \log^8(n))$. Thus, $|E'_i| = O(n'_i \log^9(n) / \eps^2)$. Note that because we are only proving an existential statement, we do not consider the failure probability in the sparsification process of \cref{lem:sparsifyExpander}; rather, \cref{lem:sparsifyExpander} suffices for proving that a sparsifier $G'_i$ \emph{exists}.

    Finally, our resulting sparsifier $G'$ is simply $\sum_{i \in [k]} G'_i$. The accuracy of the sparsifier follows from \cref{prop:composeSparsifier} and \cref{prop:extendVertexSet}. Note that the number of edges in $G'$ is bounded by \[
    \sum_{i \in [k]}|E'_i| \leq \sum_{i \in [k]}O(n'_i \log^9(n) / \eps^2) \leq O(n \log^{11}(n) / \eps^2),
    \]
    where we have used the fact that $\sum_{i \in [k]}n'_i = O(n \log^2(m)) = O(n \log^2(n))$ by the guarantees of the expander decomposition. This concludes the claim. 
\end{proof}

\subsection{An Efficient Sparsification Algorithm}

To make \cref{thm:KikuchiSparsifierExistence} algorithmic, we need an algorithmic analog of \cref{thm:expanderDecomp}. For this, we use the expander decomposition of \cite{bernstein2020fully}:

\begin{theorem}\label{thm:expanderDecompEfficient}[See, for instance, Corollary 5.6 in \cite{bernstein2020fully}]
    Let $G = (V, E)$ be an unweighted graph with $|V| = n$ and $|E| = m$. Then, there is a constant $C'$ such that for $\phi = \frac{1}{C' \log^4(m)}$, one can construct in time $O(m \log^{10}(m))$ an edge-disjoint partition of $G$ into $G_1 = (V_1, E_1), \dots G_k = (V_k, E_k)$ such that:
    \begin{enumerate}
        \item $\sum_{i \in [k]} |V_i| = O(n \log^2(m))$.
        \item Each of the graphs $G_i$ is a $\phi$ conductance expander.
        \item In each graph $G_i = (V_i, E_i)$, for all $v \in V_i$, $\mathrm{deg}_{G_i}(v) \geq \phi \cdot \frac{|E_i|}{4|V_i|}$. Equivalently, $\alpha_{G_i} \geq \frac{1}{8C'\log^4(m)}$.
    \end{enumerate}
\end{theorem}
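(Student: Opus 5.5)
The plan is to obtain \cref{thm:expanderDecompEfficient} from the standard vertex-partition expander decomposition used in \cite{bernstein2020fully}, plus two routine steps. First, I peel off low-degree vertices in each cluster. Second, I recurse on the edges that are left over. Throughout, conductance and degrees are measured inside the induced piece $G_i$ itself, which is what \cref{lem:sparsifyExpander} needs.

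\textbf{Step 1 (one round).} Given the current edge set $F$ (initially $F = E$), I invoke the near-linear time expander decomposition of \cite{bernstein2020fully}, run in $O(|F|\log^{c}(m))$ time. With parameter $\phi_0 = \Theta(1/\log^{3}(m))$, it partitions the vertices into clusters $U_1,\dots,U_r$. Each induced graph $F[U_j]$ is a $\phi_0$-conductance expander (after the standard trimming step, so conductance is measured inside the cluster rather than in $F$). The number of inter-cluster edges is at most $|F|/4$.

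\textbf{Step 2 (degree balancing).} Inside each cluster, I repeatedly delete any vertex whose degree in the current cluster is below $\phi\cdot |E(U_j)|/(4|U_j|)$, where $\phi = \phi_0/C''$. Each deletion removes fewer than $\phi|E(U_j)|/(4|U_j|)$ edges, so in total at most $\phi|E(U_j)|/4$ edges are lost. To see that the survivor is still a $\phi$-conductance expander, I use the trimming argument. Take a cut $S$ of the trimmed graph. Its boundary in the original cluster is at least $\phi_0\Vol(S)$. The edges lost to deleted vertices number at most $\phi|E(U_j)|/4$, which is small relative to $\phi_0\Vol(S)$ once $\Vol(S)$ is large. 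When $\Vol(S)$ is small, the minimum-degree guarantee controls it directly. The final threshold uses the final edge count rather than the initial one; this changes only constants, since only a small fraction of edges are removed. I then declare the surviving induced graphs to be pieces $G_i$. They satisfy items 2 and 3, with $\alpha_{G_i} \geq \phi/8$.

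\textbf{Step 3 (recursion and accounting).} All edges not placed in a piece are fed into the next round: the inter-cluster edges plus the edges incident to peeled vertices. That is at most $|F|/4 + \phi|F|/4 \leq |F|/2$ edges, so the process ends after $O(\log m)$ rounds, and the pieces form an edge-disjoint partition of $E$. Within one round the clusters are vertex-disjoint, so that round contributes at most $n$ to $\sum_i |V_i|$. Summing over rounds gives $O(n\log m)$, within the claimed $O(n\log^2(m))$. The running time is a geometric sum of $O(|F|\log^{c}(m))$ terms plus linear peeling, i.e.\ $O(m\log^{c}(m))$; matching $c = 10$ is a matter of plugging in the cited bound.

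The main obstacle I expect is Step 2. I must make sure that peeling low-degree vertices keeps conductance within a constant factor, and that it does not cascade and remove a large fraction of the edges. I would try the standard trimming lemma of Saranurak and Wang \cite{saranurak2019expander} first. If cascading becomes an issue, I would instead peel with the threshold fixed from the initial $|E(U_j)|$. A charging argument then bounds the total deleted edges by $|U_j|\cdot\phi|E(U_j)|/(4|U_j|)$. That threshold differs from the final one by at most a constant factor.
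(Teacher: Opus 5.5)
The paper gives no proof of this statement; it imports it as a black box from \cite{bernstein2020fully}. Your reconstruction has a genuine gap in Step~2. Deleting the sub-threshold vertices of a $\phi_0$-expander does not keep it an $\Omega(\phi_0)$-expander. Your sentence ``when $\Vol(S)$ is small, the minimum-degree guarantee controls it directly'' is false: a lower bound on degrees says nothing about how many of a vertex's edges leave $S$. Your loss bound $\phi|E(U_j)|/4=\phi_0|E(U_j)|/(4C'')$ is an absolute quantity, so it is dominated by the guaranteed boundary $\phi_0\Vol(S)$ only when $\Vol(S)\gtrsim|E(U_j)|/C''$.

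Concretely, let $U_j$ consist of two dense $d$-regular expanders $S$ and $R$, together with a set $D$ of $|U_j|/10$ bridge vertices. Give each vertex of $D$ degree just below $\tau=\phi|E(U_j)|/(4|U_j|)$, with half its edges into $S$ and half into $R$, and put no direct $S$--$R$ edges. The cut around $S$ has about $|D|\tau/2=\phi|E(U_j)|/80$ crossing edges. Taking $|S|$ a small enough multiple of $|U_j|/C''$ makes $\Vol(S)$ small enough that the cluster is a $\phi_0$-expander. It could even be the input graph itself, returned as a single cluster. Yet your procedure deletes all of $D$ and leaves $S$ and $R$ disconnected.

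What is missing is pruning. After deleting the $k$ peeled edges, you must also discard a set $P$ of volume $O(k/\phi_0)$, as in the expander pruning/trimming of \cite{saranurak2019expander}, to retain conductance $\Omega(\phi_0)$. Since removing $P$ lowers other vertices' degrees, pruning and peeling must be alternated until nothing changes.

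There is also a second error. The threshold $\phi|E_i|/(4|V_i|)$ depends on $|V_i|$, which can shrink by a large factor, for example when many light vertices hang off a dense core. So the initial and final thresholds need not agree up to a constant, and your fallback sentence fails for the same reason. You must peel against the \emph{current} threshold. Each peel happens at a distinct current vertex count $N'$, so the total loss is at most $\sum_{N'\le|U_j|}\phi|E(U_j)|/(4N')=O(\phi|E(U_j)|\log n)$.

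Taking $\phi=\phi_0/\Theta(\log n)$ then keeps the total deletions within the pruning budget and keeps $\Vol(P)$ a small constant fraction of $|E(U_j)|$. With $\phi_0=\Theta(1/\log^3 m)$ this gives $\phi=\Theta(1/\log^4 m)$, consistent with the statement. Your round structure and vertex-count accounting in Step~3 then go through, with per-round leftover $|F|/4+O(|F|/C'')$.
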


With this theorem, we can now prove a near-linear time algorithm for producing near-linear size Kikuchi sparsifiers:

\begin{theorem}\label{thm:KikuchiSparsifierEfficient}
    Let $G = (V, E)$ be an unweighted graph on $n$ vertices and $m = \mathrm{poly}(n)$ many edges, and let $\eps \in (0, 1)$. Then, there is an $\widetilde{O}(m)$ time algorithm which computes, with probability $1 - \frac{1}{n^6}$ over the algorithm's randomness, a re-weighted subgraph $G' = (V, E', w)$ such that 
    \[
    (1 - \eps) \cdot \mathcal{L}_G \preceq \mathcal{L}_{G'} \preceq (1 + \eps) \cdot \mathcal{L}_G
    \]
    and $|E'| \leq O(n \log^{29}(n) / \eps^2)$.
\end{theorem}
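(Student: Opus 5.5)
The plan mirrors the proof of \cref{thm:KikuchiSparsifierExistence}, with \cref{thm:expanderDecomp} replaced by the algorithmic decomposition of \cref{thm:expanderDecompEfficient}.

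First, run the algorithm of \cref{thm:expanderDecompEfficient} in time $O(m\log^{10}(m)) = \widetilde{O}(m)$, using $m = \mathrm{poly}(n)$ so that $\log m = O(\log n)$. This yields an edge-disjoint partition $G_1,\dots,G_k$ with three properties:
\begin{itemize}
\item $\sum_i |V_i| = O(n\log^2 n)$;
\item each $G_i$ is a $\phi$-conductance expander with $\phi = \frac{1}{C'\log^4(m)}$;
\item each $G_i$ satisfies $\alpha_{G_i} \geq \frac{1}{8C'\log^4(m)}$.
\end{itemize}

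Next, for each $G_i$, apply the sampling procedure of \cref{lem:sparsifyExpander} with $n'_i = |V_i|$ and these values of $\phi$ and $\alpha_{G_i}$. The procedure keeps each edge independently with probability $p_i = \min(1, m'_i/|E_i|)$ and gives kept edges weight $1/p_i$; this takes $O(|E_i|)$ time, or $O(m)$ in total.

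Now compute the parameter
\[
\lambda = \frac{\cAK\phi^4\alpha_{G_i}^2}{128(1+\log(n'_i/\alpha_{G_i}))^2}.
\]
The factor $\phi^4$ contributes $\log^{-16}$, the factor $\alpha_{G_i}^2$ contributes $\log^{-8}$, and the denominator contributes $\log^{-2}$. Hence $\lambda = \Omega(1/\log^{26}(n))$. It then follows that $m'_i = \frac{100n'_i\log n}{\lambda\eps^2} = O(n'_i\log^{27}(n)/\eps^2)$.

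Each sampled $G'_i$ is, with probability at least $1 - 1/n^8$, a $(1\pm\eps)$-Kikuchi sparsifier of $G_i$ and has at most $2m'_i$ edges. A side point is that \cref{lem:sparsifyExpander} is stated for graphs on $n'\le n$ vertices, while its failure probability is expressed in terms of the global $n$; this is exactly the form we need.

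Then take a union bound over the $k \le \sum_i |V_i| = O(n\log^2 n)$ pieces. The total failure probability is at most $O(n\log^2 n)/n^8 \le 1/n^6$ for $n$ large; small $n$ is handled trivially by outputting $G$ itself. On the success event, \cref{prop:extendVertexSet} lifts each piece's guarantee to the full vertex set $V$. \cref{prop:composeSparsifier} then shows that $G' = \sum_i G'_i$ satisfies $(1-\eps)\cL_G \preceq \cL_{G'} \preceq (1+\eps)\cL_G$. The edge count is
\[
\sum_i 2m'_i = O\Big(\log^{27}(n)\,\eps^{-2}\sum_i n'_i\Big) = O\big(n\log^{29}(n)/\eps^2\big).
\]

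The main obstacle is the bookkeeping of the polylogarithmic losses and of the failure probability. One point deserves particular care: the $\log(n'_i/\alpha_{G_i})$ term in $\lambda$ is only $O(\log n)$, so it contributes just $\log^2$ to the exponent. The key steps are the spectral lower bound (\cref{thm:expander-Kikuchi}) and the matrix concentration within each invariant subspace $E_k$. Both are already packaged inside \cref{lem:sparsifyExpander}, so no new analytic input is needed.
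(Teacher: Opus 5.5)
Your proposal is correct and follows the paper's proof step for step. You run the decomposition of \cref{thm:expanderDecompEfficient}, apply \cref{lem:sparsifyExpander} to each piece with $\lambda = \Omega(1/\log^{26} n)$, combine the pieces via \cref{prop:extendVertexSet} and \cref{prop:composeSparsifier}, and union bound over the $O(n\log^2 n)$ pieces. Your extra remarks — that the number of pieces is at most $\sum_i |V_i|$, that sampling takes $O(m)$ time, and that small $n$ is handled by outputting $G$ itself — are bookkeeping details the paper leaves implicit.
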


\begin{proof}
    To start, we invoke the expander decomposition statement of \cref{thm:expanderDecompEfficient}. For a universal constant $C'$ , this yields a decomposition in time $O(m \log^{10}(m))$ of our graph $G$ into $G_1, \dots G_k$ such that:
    \begin{enumerate}
        \item $\sum_{i \in [k]} |V_i| = O(n \log^2(m))$.
        \item Each of the graphs $G_i$ is a $\phi = \frac{1}{C' \log^4(m)}$ conductance expander.
        \item In each graph $G_i = (V_i, E_i)$, for all $v \in V_i$, $\mathrm{deg}_{G_i}(v) \geq \phi \cdot \frac{|E_i|}{4|V_i|}$. Equivalently, $\alpha_{G_i} \geq \frac{1}{8C'\log^4(m)}$.
    \end{enumerate}

    Now, for each one of these graphs $G_i = (V_i, E_i)$, we invoke the expander sparsification statement of \cref{lem:sparsifyExpander} with $\alpha_{G_i} = \frac{1}{8C'\log^4(m)}$, and $\phi = \frac{1}{C' \log^4(m)}$. If $G_i$ is a graph on $n'_i = |V_i|$ many vertices, then by random sampling in accordance with \cref{lem:sparsifyExpander}, we produce a sparsifier $G'_i = (V_i, E'_i, w)$ with 
    \[
    |E'_i| = \frac{100 n_i' \log(n)}{\lambda \eps^2} 
    \]
    many edges, for $\lambda =  \frac{\cAK \phi^4 \alpha_{G_i}^2}{128(1 + \log (n'_i/\alpha_G))^2}$. Using our values of $\alpha_{G_i}$ and $\phi$, we can bound $\lambda = \Omega(1 / \log^{26}(m)) = \Omega(1 /\log^{26}(n))$. Thus, $|E'_i| = O(n'_i \log^{27}(n) / \eps^2)$. Moreover, the sparsification statement of \cref{lem:sparsifyExpander} holds with probability $\geq 1 - \frac{1}{n^{8}}$.

    Finally, our resulting sparsifier $G'$ is simply $\sum_{i \in [k]} G'_i$. The sparsifier accuracy follows from \cref{prop:composeSparsifier} and \cref{prop:extendVertexSet}. Note that the number of edges in $G'$ is bounded by \[
    \sum_{i \in [k]}|E'_i| \leq \sum_{i \in [k]}O(n'_i \log^{27}(n) / \eps^2) \leq O(n \log^{29}(n) / \eps^2),
    \]
    where we have used the fact that $\sum_{i \in [k]}n'_i = O(n \log^2(m)) = O(n \log^2(n))$ by the guarantees of the expander decomposition. Finally, we can take a union bound over all $k \leq O(n \log^2(m))$ many of the expander components; thus, with probability $\geq 1 - \frac{1}{n^6}$, the resulting graph $G'$ is indeed a $(1 \pm \eps)$ Kikuchi sparsifier of $G$ and satisfies our space bound.
\end{proof}

\section{Extension to Weighted Graphs}\label{sec:weighted}

So far, we have shown that \cref{thm:main} holds for \emph{unweighted} graphs. We now extend the sparsifer to cover weighted graphs as well.

\subsection{Limited Aspect Ratio}

As a first step, we show that we can always get a variant of \cref{thm:main} if we are willing to pay a cost proportional to $\log \frac{\max w}{\min w}$.

\begin{theorem}\label{thm:WeightedSparsifierExistenceSloppy}
    Let $G = (V, E, w)$ be a weighted graph on $n$ vertices, and let $\eps \in (0,1)$. Then, there exists a re-weighted subgraph $G' = (V, E', w')$ such that 
    \[
    (1 - \eps) \cdot \mathcal{L}_G \preceq \mathcal{L}_{G'} \preceq (1 + \eps) \cdot \mathcal{L}_G
    \]
    and $|E'| \leq O(n \log^{9}(n)\log^2\left(\frac{n \max w}{\min w}\right) / \eps^2)$.
\end{theorem}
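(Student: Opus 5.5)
The plan is to reduce to the unweighted case (\cref{thm:KikuchiSparsifierExistence}) by rounding and binary decomposition of the weights, and then combine the pieces with \cref{prop:composeSparsifier}. Since every term in the conclusion is homogeneous in $w$, I first rescale so that $\min w = 1$ and set $W := \max w$.

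\textbf{Step 1: rounding.} Round each weight down to an integer multiple of $\eps/3$: set $\hat w_e := (\eps/3)\lfloor 3w_e/\eps\rfloor$. Since $w_e \geq 1$, we get $(1-\eps/3)w_e \leq \hat w_e \leq w_e$. Every $\cL_e$ is PSD, so this gives
\[
(1-\eps/3)\cL_G \preceq \cL_{\hat G} \preceq \cL_G .
\]
Now write $\hat w_e = (\eps/3)\sum_{j=0}^{L} b_{e,j} 2^j$ with bits $b_{e,j}\in\{0,1\}$, where $L = O(\log(W/\eps))$. For each $j$, let $G_j$ be the unweighted graph with edge set $\{e : b_{e,j}=1\}$. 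Then
\[
\cL_{\hat G} = \sum_{j} (\eps/3)\,2^j\, \cL_{G_j}.
\]

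\textbf{Step 2: sparsify each level.} Apply \cref{thm:KikuchiSparsifierExistence} with accuracy $\eps/3$ to each $G_j$. Each $G_j$ has at most $n^2$ edges, so the hypothesis $m=\mathrm{poly}(n)$ holds. This gives $G_j'$ with $(1\pm\eps/3)$-accuracy and $O(n\log^{11}(n)/\eps^2)$ edges. Scale $G_j'$ by $(\eps/3)2^j$ and sum over $j$. By \cref{prop:composeSparsifier}, applied to edge-disjoint copies (parallel edges are allowed, or can be merged by adding weights, which keeps the Laplacian unchanged), the result $G'$ is a $(1\pm\eps/3)$-sparsifier of $\hat G$. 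Chaining with Step 1 gives a $(1\pm\eps)$-sparsifier of $G$. The edge count is
\[
O\bigl(n\log^{11}(n)\log(W/\eps)/\eps^2\bigr).
\]

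\textbf{Main obstacle.} This count does not have the stated form $\log^9(n)\log^2(nW)$. The mismatch has two sources. First, $\log(1/\eps)$ is not controlled by $\log(nW)$. Second, the exponents differ.

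To fix the first, I would use a coarser decomposition. Group edges into buckets $B_i = \{e : w_e \in [2^i, 2^{i+1})\}$; there are $O(\log W)$ of them. Within a bucket the weights differ by at most a factor of $2$. I would then rerun the argument of \cref{lem:sparsifyExpander} for weighted graphs of bounded aspect ratio, combined with the expander decomposition applied to the unweighted support of each bucket. Concretely: sample each edge with a uniform probability $p$ and reweight by $w_e/p$. The bound $X_e \preceq (2w_{\max}/p)\,\Id$ together with the unweighted expander lower bound of \cref{thm:expander-Kikuchi} (used with $w_{\min}$) loses only a factor of $2$. This removes $\eps$ from the logarithm.

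For the second, I would re-examine the polylog bookkeeping. The budget of $\log^2(nW)$ suggests that the intended argument loses one extra logarithm from the decomposition overlap and pays for it through the aspect ratio. If the exponents still disagree after this, the precise polylog is the step I would adjust. The structural argument — bucket by weight, sparsify each bucket through the unweighted/bounded-ratio machinery, and combine by \cref{prop:composeSparsifier} — is the part I am confident in.
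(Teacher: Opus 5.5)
Your Steps 1 and 2 are correct as far as they go. Rounding, binary splitting, and composing via \cref{prop:composeSparsifier} do give a reweighted subgraph that is a $(1\pm\eps)$-sparsifier. The problem is the size, $O(n\log^{11}(n)\log(W/\eps)/\eps^2)$, which does not imply the stated bound, and your proposed repairs do not close the gap.

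Removing $\eps$ from the logarithm is easy. Your factor-$2$ bucketing works, but it is simpler to note, as the paper does, that for $\eps<1/n$ you can return $G$ itself because $n^2\le n/\eps^2$; after that, $\log(1/\eps)\le\log n$. Even so, you are left with $O(n\log^{11}(n)\log(nW)/\eps^2)$. This exceeds the target $O(n\log^{9}(n)\log^2(nW)/\eps^2)$ by a factor $\log^2(n)/\log(nW)$, i.e.\ by $\Theta(\log n)$ whenever $W=n^{O(1)}$. That is exactly the regime in which \cref{thm:WeightedSparsifierExistence} later invokes this theorem (aspect ratio $n^{O(1)}$, target $f(n)=\Theta(n\log^{11}(n)/\eps^2)$). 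So ``adjusting the polylog'' is the substance of the statement, not bookkeeping.

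The missing idea is to pay for the aspect ratio once, inside a single expander decomposition, rather than multiplying the whole unweighted bound by the number of weight levels. In your scheme each of the $\Theta(\log(W/\eps))$ graphs $G_j$ (or each of the $\Theta(\log W)$ buckets) gets its own decomposition. Each therefore carries its own $\log^2$ vertex-overlap factor, so the cost is $\log^{9}(n)\cdot\log^{2}(n)\cdot\log(W/\eps)$.

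The paper works differently after reducing to $\eps\ge 1/n$. It replaces each edge $e$ by $\lfloor (4/\eps)\,w_e/\min w\rfloor$ parallel unweighted copies. After rescaling by $(\eps/4)\min w$, this multigraph $\widetilde G$ is within $1\pm\eps/4$ of $G$, and it has only $m=O(n^3 W)$ edges. The unweighted pipeline (\cref{thm:expanderDecomp} followed by \cref{lem:sparsifyExpander}) is then run once on $\widetilde G$. The per-piece cost stays $O(n_i'\log^9(n)/\eps^2)$, and the weights enter only through $\sum_i n_i'=O(n\log^2 m)=O(n\log^2(nW))$. This replaces the $\log^2 n$ overlap rather than multiplying it.

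Your guess that the $\log^2(nW)$ comes from the decomposition overlap was on target. What you were missing is the multigraph encoding, together with the $\eps\ge 1/n$ reduction that keeps $\log m=O(\log(nW))$.
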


\begin{proof}
If $\eps < 1/n$, then we can return $G$ itself and only have $n^2 = O(n/\eps^2)$ edges. Thus, assume $\eps \ge 1/n$. Let $\widetilde{G} = (V, \widetilde{E})$ be an unweighted graph such that for every $e \in E$, there are
\[
    \widetilde{w}(e) = \left\lfloor \frac{4}{\eps} \cdot \frac{w(e)}{\min w}\right\rfloor
\]
copies of $e$ in $\widetilde{E}$. Observe that
\[
m := |\widetilde{E}| \le \sum_{e \in E} \left\lfloor \frac{4}{\eps} \cdot \frac{w(e)}{\min w}\right\rfloor \le n^2 \cdot \frac{3}{\eps} \cdot  \frac{\max w}{\min w} \le 3n^3 \frac{\max w}{\min w}.
\]
Upon careful inspection of the proof of \cref{thm:KikuchiSparsifierExistence}, one can build a sparsifier $\widetilde{G}' = (V, \widetilde{E}', \widetilde{w}')$ of $\widetilde{G}$ with 
\[
    |\widetilde{E}'| \le O(n \log^9(n) \log^2(m) / \eps^2) = O(n \log^{9}(n)\log^2(n \max w / \min w) / \eps^2)
\]
and
\[
  \left(1 - \frac{\eps}{4}\right) \cL_{\widetilde{G}} \preceq  \cL_{\widetilde{G}'} \preceq \left(1 + \frac{\eps}{4}\right) \cL_{\widetilde{G}}.
\]
We now build $G' = (V, E', w')$ by setting $E' = \widetilde{E}'$ and setting for all $e \in E'$,
\[
    w'(e) = \frac{\eps}{4} \cdot \widetilde{w}'(e) \cdot \min w.
\]
Thus,
\[
\frac{\eps}{4} \cdot \min w \cdot \left(1 - \frac{\eps}{4}\right) \cL_{\widetilde{G}} \preceq \frac{\eps}{4} \cdot \min w \cdot \cL_{\widetilde{G}'} = \cL_{G'} \preceq \frac{\eps}{4} \cdot \min w \cdot \left(1 + \frac{\eps}{4}\right) \cL_{\widetilde{G}}.
\]
Note that for all $e \in E$, we have that
\[
    \frac{\eps}{4} \cdot \min w \cdot \widetilde{w}(e) \le \frac{\eps}{4} \cdot \min w \cdot \frac{4}{\eps} \cdot \frac{w(e)}{\min w} = w(e),
\]
and since $\eps < 1$,
\[
    \frac{\eps}{4} \cdot \min w \cdot  \widetilde{w}(e) \ge \frac{\eps}{4} \cdot \min w \cdot \left[\frac{4}{\eps} \cdot \frac{w(e)}{\min w} - 1\right] \ge \frac{\eps}{4} \cdot \min w \cdot \frac{3}{\eps} \cdot \frac{w(e)}{\min w} \ge \left(1 - \frac{\eps}{4}\right) w(e).
\]
Therefore,
\[
(1 - \eps) \cL_G \preceq \left(1 - \frac{\eps}{4}\right)^2 \cL_{G} \preceq \frac{\eps}{4} \cdot \min w \cdot \left(1 - \frac{\eps}{4}\right) \cL_{\widetilde{G}} \preceq \cL_{G'}\]
and
\[
\cL_{G'} \preceq \frac{\eps}{4} \cdot \min w \cdot \left(1 + \frac{\eps}{4}\right) \cL_{\widetilde{G}} \preceq \left(1 + \frac{\eps}{4}\right) \cL_{G} \preceq (1 + \eps) \cL_G,
\]
as desired.
\end{proof}

\subsection{An Effective Resistance Inequality}

Given a weighted graph $G = (V, E, w)$, we let $R_{G}(u, v)$ denote the effective resistance between vertices $u$ and $v$ in an electrical network with conductances indicated by $w : E \to \R_{> 0}$. More formally, for any edge $e = \{u, v\}$ we define the effective resistance $R_G(e)$ to be
\[R_{G}(e) = \sup_{\substack{x \in \R^V\setminus\{0\}\\\langle x, \1\rangle = 0}} \frac{(x_u - x_v)^2}{\sum_{f = \{a, b\}\in E(G)}w_f\cdot(x_a - x_b)^2}.\] 

If $u$ and $v$ are in different connected components, then $R_{G}(u, v) = \infty$.  We recall the following basic facts about effective resistance.

\begin{proposition}\label{prop:eff-resistance}
Let $G = (V, E, w)$ be a weighted graph. The following facts hold.
\begin{enumerate}
\item For any edge $e \in E$, $R_{G}(e) \le \frac{1}{w_e}$.\label{item:eff1}
\item \cite[Lemma~12.9.1]{Spielman25} $R_{G} : V \times V \to \R_{\ge 0} \cup \{\infty\}$ forms a metric. \label{item:eff2}
\end{enumerate}
\end{proposition}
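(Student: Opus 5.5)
The proposition has two parts. Part~\ref{item:eff1} follows directly from the variational definition of $R_G(e)$. Part~\ref{item:eff2} is quoted from \cite{Spielman25}; we only need to check that our supremum formulation agrees with the standard one, and to handle the value $\infty$.

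For part~\ref{item:eff1}, fix $e = \{u,v\} \in E$ and any admissible $x$. The denominator $\sum_{f \in E} w_f (x_a - x_b)^2$ is a sum of nonnegative terms, one of which is $w_e (x_u - x_v)^2$. So the denominator is at least $w_e(x_u-x_v)^2$.
\begin{itemize}
\item If $x_u \neq x_v$, the ratio is at most $(x_u - x_v)^2 / (w_e (x_u-x_v)^2) = 1/w_e$.
\item If $x_u = x_v$, the numerator is $0$, so the ratio is $0$ (or the term is excluded when the denominator also vanishes).
\end{itemize}
Taking the supremum gives $R_G(e) \le 1/w_e$. The only subtlety is the convention for $0/0$, which we read as $0$.

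For part~\ref{item:eff2}, we reduce to the cited lemma by identifying our definition with the pseudoinverse formula. For $u,v$ in the same connected component, set $b = \1_u - \1_v$ and let $L_G$ be the graph Laplacian. Our supremum equals
\[
\sup_{x \perp \1} \frac{\langle b, x\rangle^2}{x^\top L_G x}.
\]
By the generalized Cauchy--Schwarz inequality on $\mathrm{range}(L_G)$, this equals $b^\top L_G^{+} b$, which is the standard effective resistance. This uses that $b$ lies in the range of $L_G$ restricted to that component, which holds because the component containing $u$ and $v$ is connected.

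If $u$ and $v$ lie in different components, choose $x$ constant on each component, with the constants chosen so that $\langle x,\1\rangle = 0$ and $x_u \neq x_v$. Then the denominator is $0$ and the numerator is positive, so $R_G(u,v) = \infty$, matching the stated convention.

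Finally, we check the metric axioms, allowing the value $\infty$.
\begin{itemize}
\item Symmetry and $R_G(u,u) = 0$ are immediate from the definition.
\item The triangle inequality within a single component is \cite[Lemma~12.9.1]{Spielman25}.
\item If $u$ and $w$ are in different components, then for any $v$ at least one of $R_G(u,v)$, $R_G(v,w)$ is $\infty$, so $R_G(u,w) \le R_G(u,v) + R_G(v,w)$ holds trivially.
\end{itemize}

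The main obstacle is modest: carefully matching the supremum definition to $b^\top L_G^+ b$, including the degenerate cases of zero denominators and disconnected $u,v$. Everything else is routine.
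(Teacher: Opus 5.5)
Your proposal is correct and matches the paper: for the first item the paper uses exactly your bound, lower-bounding the denominator by the single term $w_e(x_u-x_v)^2$, and for the second item it simply cites \cite[Lemma~12.9.1]{Spielman25}. Your additional work (identifying the supremum with $b^\top L_G^{+} b$, and handling the $x_u = x_v$ case, the value $\infty$, and the cross-component triangle inequality) is sound extra care on points the paper glosses over, not a different route.
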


\begin{proof}
Since \cref{item:eff2} is proved in \cite[Lemma~12.9.1]{Spielman25}, we focus on proving \cref{item:eff1}, if $e = \{u, v\} \in E$ then
\[
    R_{G}(e) = \sup_{\substack{x \in \R^V\setminus\{0\}\\\langle x, \1\rangle = 0}} \frac{(x_u - x_v)^2}{\sum_{f = \{a, b\}\in E(G)}w_f\cdot(x_a - x_b)^2} \le \sup_{\substack{x \in \R^V\setminus\{0\}\\\langle x, \1\rangle = 0}}  \frac{(x_u - x_v)^2}{w_e (x_u - x_v)^2} = \frac{1}{w_e},
\]
as desired.
\end{proof}

To relate effective resistances with the spectrum of the Quantum Cut Hamiltonian $\cL_{G}$, we make use of the following theorem due to Chen~\cite{Chen17}.

\begin{theorem}[Theorem 1.1~\cite{Chen17}]\label{thm:chen}
Let $G = (V, E, w)$ be a weighted graph. For any $u, v \in V$ (for which it is not necessarily the case that $\{u, v\} \in E$), we have that
\[
    \cL_{G} R_{G}(u, v) \succeq \cL_{u, v}.
\]
\end{theorem}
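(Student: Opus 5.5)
We would prove \cref{thm:chen} by successively eliminating vertices, using the octopus inequality to show that $\cL_G$ dominates the QC operator of a smaller ``reduced'' network with the same effective resistance between $u$ and $v$. After all vertices other than $u$ and $v$ are gone, only a single edge $\{u,v\}$ of weight $1/R_G(u,v)$ remains.

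\textbf{Trivial cases.} If $R_G(u,v)=\infty$, i.e.\ $u$ and $v$ lie in different components, the inequality holds trivially under the convention $\infty\cdot\cL_G\succeq\cL_{uv}$. Otherwise, edges outside the component $C$ of $u,v$ contribute a PSD term to $\cL_G$, which we may drop. Neither $R_G(u,v)$ nor $\cL_{uv}$ depends on those edges, so we may assume $G$ is connected. Throughout, every $\cL_e$ acts on the full space $\C^{2^{V}}$. We work directly with the octopus inequality in its QC form (the Fact in \cref{subsec:related-work}); via \cref{prop:IP-to-EP} one could equivalently work with $\nabla$.

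\textbf{One elimination step.} Pick a vertex $x\notin\{u,v\}$ and write $w_x=\sum_{a} w_{xa}>0$. Split $\cL_G=\sum_{a}w_{xa}\cL_{xa}+\cL_{G-x}$, where $G-x$ keeps all edges not incident to $x$. The octopus inequality with centre $x$ gives
\[
\sum_a w_{xa}\cL_{xa}\succeq \sum_{\{a,b\}\subseteq V\setminus\{x\}}\frac{w_{xa}w_{xb}}{w_x}\cL_{ab}.
\]
Hence $\cL_G\succeq\cL_{G'}$, where $G'$ is the graph on $V\setminus\{x\}$ with weights
\[
w'_{ab}=w_{ab}+\frac{w_{xa}w_{xb}}{w_x}.
\]
This is exactly the star--mesh (Kron) transform: the classical Laplacian of $G'$ is the Schur complement of the classical Laplacian of $G$ obtained by eliminating $x$. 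Therefore $R_{G'}(a,b)=R_G(a,b)$ for all $a,b\neq x$, and $G'$ is still connected.

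\textbf{Iterating and concluding.} Iterate the elimination step over all vertices other than $u$ and $v$, using transitivity of $\succeq$. This yields $\cL_G\succeq w^*\cL_{uv}$, where $w^*$ is the weight of the single remaining edge. Effective resistance is preserved at each step, so $1/w^*=R_G(u,v)$. Multiplying by $R_G(u,v)$ gives the claim.

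\textbf{Main obstacle.} The one nontrivial ingredient beyond the octopus inequality is the justification that effective resistance is invariant under the star--mesh transform. We would prove it through the variational definition of $R_G$. For a fixed potential on $V\setminus\{x\}$, minimizing the energy over $x_x$ sets $x_x$ to the $w$-weighted average of its neighbours. The resulting minimal star energy equals $\sum_{a<b}\frac{w_{xa}w_{xb}}{w_x}(x_a-x_b)^2$, by a direct expansion. Consequently, the Dirichlet-type extremal problem defining $R(u,v)$ has the same value before and after elimination; note that the mean-zero constraint is irrelevant because adding constants changes nothing.
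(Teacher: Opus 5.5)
Your argument is correct. Iterating the octopus inequality (the Fact in \cref{subsec:related-work}) as a star--mesh elimination of every vertex other than $u,v$ gives $\cL_G \succeq R_G(u,v)^{-1}\cL_{uv}$, and you verify correctly, via the Dirichlet principle, that effective resistance is invariant under this Kron reduction. The paper does not prove this theorem itself: it only cites Chen and remarks that the proof relies on the octopus inequality. Your network-reduction derivation is the standard octopus-based route consistent with that remark. The case $R_G(u,v)=\infty$ is only a convention, which is harmless because the paper applies the theorem only when $u,v$ lie in a common component.
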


\begin{remark}
Although Chen is not studying quantum cut, he built a linear operator identical to $\cL_{G}$ by studying a Markov chain on the hypercube $\{0,1\}^V$, where with probability proportional to $w_{e}$, the bits correspond to the edge $e$ get swapped. Similar to the inequality of Alon-Kozma~\cite{AlonKozma20}, Theorem~\ref{thm:chen} crucially uses the octopus inequality of \cite{CLR10} in its proof.
\end{remark}

\begin{remark}
Note that on its own \cref{thm:chen} is not sufficient to build Kikuchi sparsifiers. This is due to the fact that our sparsification results rely on the relative importance of an edge $u,v$ \emph{decreasing} in the higher harmonic levels. As an example, in the complete graph $K_n$, \cref{thm:chen} roughly says that $\frac{1}{n-1} \cL_{K_n}  \succeq \cL_{u, v}$. While this on its own is tight, our sparsification argument roughly shows that, if we restrict to the orthogonal complement of a space of dimension $\binom{n}{i}$ (for $i \leq n/2$), we can in fact show the much stronger inequality that $\frac{1}{i \cdot n} \cL_{K_n}|_{E_i}  \succeq \cL_{u, v}|_{E_i}$.

Interestingly, this stronger inequality (properly normalized) fails to hold for general graphs. Indeed, let $G = ([n], E)$ be a tree on $n$ vertices (with the weight of every edge being $1$), and fix any edge $e\in E(G)$. Note that the effective resistance of $e$ is exactly $1$, and thus \cref{thm:chen} says that $\cL_G\succeq\cL_e$. We now claim that for any $\delta > 0$, $(1 - \delta)\cL_G\vert_{E_i}\not\succeq\cL_e\vert_{E_i}$ for \emph{every} $1\leq i\leq n - 1$, i.e. the ``relative importance/effective resistance'' of $e$ does not decay at all even as we go to ``higher'' eigenspaces! Indeed, first note that since $G$ is a tree, deleting $e$ disconnects $G$ into exactly two connected components $C_u, C_v\subset[n]$, where $u\in C_u, v\in C_v$. Now consider the function $\psi\in V_i = \C^{\binom{[n]}{i}}$ defined as $\psi(S) := \1(|S\cap C_u|\text{ is even})$. It is easy to see that $\langle\psi, (\cL_e\vert_{E_i})\psi\rangle = \binom{n - 2}{i - 1} > 0$, while $\langle\psi, (\cL_{e'}\vert_{E_i})\psi\rangle = 0$ for all $e'\in E(G)\setminus\{e\}$, which proves our claim since $\cL_G\vert_{E_i} = \sum_{e\in E(G)}\cL_e\vert_{E_i}$. 

Intuitively, the idea is that for any graph $G$, if $e\in E(G)$ is an ``extremely important'' edge, then the relative importance of $e$ might not decay in the higher eigenspaces. However, one implication of expander decompositions is that \emph{most} edges in a graph (with $\gg n\log^{O(1)}(n)$ edges, say) can \emph{not} be very important. For those edges, we indeed obtain some decay in the higher eigenspaces, and thus even though a decay of the sort $(1/i)\cdot R_e\cL_G\vert_{E_i}\succeq\cL_e\vert_{E_i}$ might not hold for every edge $e$, it holds on ``average'' (upto some log factors), and that is enough for us to obtain a sparsifier.
\end{remark}

\subsection{An Edge-moving Lemma}

Given weighted graphs $G = (V, E, w)$ and $H = (V, E', w')$ on the same vertex set $V$, we define $G \uplus H = (V, E'', w'')$ to be the weighted graph on the edge set $E'' = E \cup E'$ such that for all $e \in E''$, we have that
\[
w''(e) = \begin{cases}
w(e) + w'(e) & e \in E \cap E'\\
w(e) & e \in E \setminus E'\\
w'(e) & e \in E' \setminus E.
\end{cases}
\]
From the definition, it follows that $\cL_{G \uplus H} = \cL_{G} + \cL_{H}$. Consider two graphs $G = (V, E, w)$ and $H = (V, E', w')$ on the same vertex set but their edge weights are widely different. For example, in \cref{subsec:building-weighted}, we consider the possibility that $\min w \ge n^{20} \max w'$, where $n = |V|$. We would like to say that if an edge $e \in E'$ spans two connected components of $G$, then $e$ can be ``moved'' to different vertices between the same two connected components with no consequence. We formalize this as follows

\begin{definition}\label{def:equivalent}
Let $G = (V, E, w)$ be a weighted graph with connected components $V_1, \hdots, V_r$. Let $H = (V, E', w')$ and $H' = (V, E'', w'')$ be weighted graphs on the same vertex set. We say that $H$ and $H'$ are \emph{equivalent} with respect to $G$ if for all $1 \le i < j \le r$, we have that
\[
    \sum_{e \in E' \cap (V_i \times V_j)} w'(e) = \sum_{e \in E'' \cap (V_i \times V_j)} w''(e).
\]
\end{definition}

Observe that the equivalence of $H$ and $H'$ does not take into account edges which are within any connected component $V_i$ of $G$. The goal of this section is to prove the following ``Edge-moving Lemma.''

\begin{lemma}[Edge-moving Lemma]\label{lem:aggregate}
Let $G = (V, E, w)$ be a weighted graph, and let $H = (V, E', w')$ and $H' = (V, E'', w'')$ be weighted graphs which are equivalent with respect to $G$. Let $\lambda := \frac{\min w}{\max w' + \max w''}$. Then, for any $\eta \in (0, 1/2]$, we have that
\begin{align}
    \left(1 - \frac{n^3}{\eta \lambda} - \eta\right) \cL_{G \uplus H} \preceq \cL_{G \uplus H'} \preceq \left(1 + \frac{n^3}{\eta \lambda} + \eta\right) \cL_{G \uplus H}.\label{eq:aggregate}
\end{align}
\end{lemma}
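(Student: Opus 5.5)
The plan is to move the edges of $H$ to the edges of $H'$ one unit of weight at a time, and to show that each move costs little because $G$ already contains a cheap path between the old and new endpoints. By symmetry it suffices to prove the upper bound $\cL_{G\uplus H'} \preceq (1+\frac{n^3}{\eta\lambda}+\eta)\cL_{G\uplus H}$; the lower bound follows by swapping the roles of $H$ and $H'$, using $1/(1+x)\ge 1-x$ and the symmetric definition of $\lambda$.

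\textbf{Step 1: reduction to a canonical representative.} Fix a representative vertex $r_i\in V_i$ for each connected component $V_i$ of $G$. Let $H^\ast$ be the graph with edge $\{r_i,r_j\}$ carrying the common total weight $W_{ij}=\sum_{e\in E'\cap(V_i\times V_j)}w'(e)$. Since $H$ and $H'$ are equivalent with respect to $G$, they share the same $H^\ast$. Edges of $H$ lying inside some $V_i$ are handled as well. Each such edge $e=\{u,v\}$ satisfies $w'(e)\cL_e \preceq w'(e)R_G(u,v)\cL_G$ by \cref{thm:chen}. Moreover $R_G(u,v)\le (n-1)/\min w$ by the triangle inequality of \cref{prop:eff-resistance} applied along a path in $G$, together with $R_G(f)\le 1/w_f$. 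Summing over at most $n^2$ such edges gives a contribution of at most $n^3\max w'/\min w \le n^3/\lambda$ times $\cL_G$. Hence these edges are spectrally negligible in both directions.

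\textbf{Step 2: moving a cross edge.} For a cross edge $e=\{u,v\}$ with $u\in V_i$ and $v\in V_j$, compare $\cL_{uv}$ with $\cL_{r_ir_j}$. Write $A_{uv}$ as a product of transpositions: $(u\,v) = \tau(r_i\,r_j)\tau$ with $\tau=(u\,r_i)(v\,r_j)$. I would instead use a quadratic-form inequality. For any $\psi$, write $\psi-A_{uv}\psi$ as a telescoping sum of differences along $u\to r_i$ (in $G$), $r_i\to r_j$, and $r_j\to v$ (in $G$). Then apply $(a+b)^2\le(1+\eta)a^2+(1+1/\eta)b^2$. This should give
\[\cL_{uv}\preceq (1+\eta)\cL_{r_ir_j} + C(1+1/\eta)\big(\cL_{ur_i}+\cL_{vr_j}\big)\]
for a small constant $C$. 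The key operator identity is $(u\,v)=(u\,r_i)(v\,r_j)(r_i\,r_j)(v\,r_j)(u\,r_i)$. Since each $A_\pi$ is unitary, $\|\psi-A_{\sigma\pi}\psi\|\le\|\psi-A_\sigma\psi\|+\|\psi-A_\pi\psi\|$. The five-fold product therefore splits into the $r_ir_j$ term plus four terms of the form $\|\psi-A_{ur_i}\psi\|$ or $\|\psi-A_{vr_j}\psi\|$; here the conjugation by unitaries is what lets the conjugated $r_ir_j$ term become just $\|\psi - A_{r_ir_j}\psi\|$. I then bound $\cL_{ur_i}\preceq R_G(u,r_i)\cL_G\preceq (n/\min w)\cL_G$ by \cref{thm:chen}.

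\textbf{Step 3: summing and the main obstacle.} Summing over all edges of $H'$ with weights $w''$ gives
\[\cL_{H'}\preceq(1+\eta)\cL_{H^\ast}+O(1/\eta)\cdot n^2\cdot(n\max w''/\min w)\cL_G.\]
The reverse comparison, $\cL_{H^\ast}\preceq(1+\eta)\cL_H+O(n^3\max w'/(\eta\min w))\cL_G$, follows the same way. Chaining the two produces $(1+\eta)^2$ and a sum of the error terms, which after rescaling $\eta$ and constants gives the stated form $1+\frac{n^3}{\eta\lambda}+\eta$; the $\cL_G$ errors are absorbed since $\cL_G\preceq\cL_{G\uplus H}$.

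The main obstacle is the bookkeeping that keeps the constant in front of $n^3/(\eta\lambda)$ equal to $1$. This requires care in counting: the number of cross edges is at most $\binom n2$ on each side, and the triangle-inequality constants must be tracked. If they do not fit, I would tighten $R_G$ using the path length bound $n-1$ rather than $n$, and use the single-step bound $(a+b)^2\le(1+\eta)a^2+(1+1/\eta)b^2$ only once per move.
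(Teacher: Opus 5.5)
Your Step 1 and the symmetry argument for the lower bound match the paper, and your cross-edge inequality is correct. However, the proposal proves the lemma only up to absolute constants, and your closing claim that rescaling $\eta$ recovers the stated form is false.

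As written, your chain through $H^\ast$ gives $\cL_{G\uplus H'}\preceq\bigl(1+c_1\eta+c_2\tfrac{n^3}{\eta\lambda}\bigr)\cL_{G\uplus H}$. Here $c_1=2+\eta$ comes from the factor $(1+\eta)^2$, and $c_2\geq 8$ comes from $C=8$, two endpoint terms per edge, and two passes. Replacing $\eta$ by $t\eta$ turns the two coefficients into $c_1t$ and $c_2/t$, so both can be at most $1$ only if $c_1c_2\le 1$. No rescaling helps, and your fallbacks ($n-1$ instead of $n$, one squaring per move) remove only lower-order factors.

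The loss is structural. The norm triangle inequality for $\|\psi-A_\pi\psi\|$ adds square roots of energies, and conjugating by $\tau$ charges the within-component displacement twice. Your weaker bound would still suffice for the application in \cref{subsec:building-weighted}, where $\eta=n^{-6}$ and $\lambda\ge n^{37}$, but it is not the stated lemma.

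The paper avoids this loss by applying \cref{thm:chen} to an augmented graph rather than to $G$ (\cref{lem:move-between}): $G$ scaled by $\frac{n}{\delta\min w}$, plus the single bridge edge $\{u_i,u_j\}$ of weight $1$. Effective resistance is a metric, so resistances add linearly along the route $v_i\to u_i\to u_j\to v_j$, giving resistance at most $1+(n-2)\frac{\delta}{n}\le 1+\delta$. This yields $\cL_{v_iv_j}\preceq(1+\delta)\cL_{u_iu_j}+\frac{(1+\delta)n}{\delta\min w}\cL_G$, with constant $1$ and both endpoints paid for at once. The paper then moves $H_1$ directly onto $H'_1$ through a movement plan, with no intermediate $H^\ast$ and hence a single factor $1+\delta$. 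Choosing $\delta=\frac{n^3}{\eta\lambda}$ makes the $\cL_G$ error at most $\eta+\frac{n^3}{\lambda}$.

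Your unitary route can also reach the exact constants, but only with sharper bookkeeping than you propose:
\begin{enumerate}
\item Move directly from $\{u,v\}$ to $\{u',v'\}$, without passing through $H^\ast$.
\item Use $\Id-A_\tau\preceq\cL_{uu'}+\cL_{vv'}$ for the commuting involution $\tau=(u\,u')(v\,v')$. This gives $\cL_{u'v'}\preceq(1+\eta)\cL_{uv}+4\bigl(1+\frac{1}{\eta}\bigr)(\cL_{uu'}+\cL_{vv'})$.
\item Apply \cref{thm:chen} componentwise, $\cL_{uu'}\preceq\frac{|V_i|-1}{\min w}\cL_{G[V_i]}$.
\end{enumerate}
The cross-edge error is then at most $\max_s 4\bigl(1+\frac{1}{\eta}\bigr)s^2(n-s)\frac{\max w'}{\min w}\le\frac{8n^3\max w'}{9\eta\min w}$ times $\cL_G$. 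This leaves room for the within-component contribution $\frac{n^3\max w''}{2\min w}\le\frac{n^3\max w''}{4\eta\min w}$.
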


As first step toward proving \cref{lem:aggregate}, we show that edges in $H$ which are within a single connected component of $G$ can be effectively ignored.

\begin{lemma}\label{lem:remove-within}
Let $G = (V, E, w)$ be a weighted graph with connected components $V_1, \hdots, V_r$. Pick $1 \le i \le r$ as well as any distinct $u, v \in V_i$. We have that
\[
    \frac{n}{\min w}\cL_{G} \succeq \cL_{u, v}.
\]
\end{lemma}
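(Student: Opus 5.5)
Our plan is to combine Chen's inequality (\cref{thm:chen}) with a crude bound on effective resistances inside a connected component. Chen's theorem gives $R_G(u,v)\,\cL_G \succeq \cL_{u,v}$ for any pair $u,v$. It is therefore enough to show that $R_G(u,v) \le \frac{n}{\min w}$ whenever $u,v$ lie in the same connected component $V_i$ of $G$. Since $\cL_G \succeq 0$, we can then enlarge the scalar: $\frac{n}{\min w}\cL_G \succeq R_G(u,v)\cL_G \succeq \cL_{u,v}$.

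To bound the effective resistance, we use the fact that $u$ and $v$ are connected in $G$. Hence there is a simple path $u = x_0, x_1, \dots, x_t = v$ in $G$ with $t \le |V_i| - 1 \le n-1$ edges. We then apply the two items of \cref{prop:eff-resistance}:
\begin{itemize}
\item The triangle inequality (\cref{item:eff2}) gives $R_G(u,v) \le \sum_{s=1}^{t} R_G(x_{s-1},x_s)$.
\item Each step $\{x_{s-1},x_s\}$ is an edge of $G$, so \cref{item:eff1} bounds its resistance by $1/w_{x_{s-1}x_s} \le 1/\min w$.
\end{itemize}
Together these yield $R_G(u,v) \le \frac{n-1}{\min w} \le \frac{n}{\min w}$, which is the bound we need.

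There is no real obstacle here; the only care needed is a subtlety in the definition of $R_G$. The supremum is taken over $x$ with $\langle x,\1\rangle = 0$, whereas on a disconnected graph the natural quantity would restrict to vectors orthogonal to the indicators of the components. We must make sure the cited metric property and Chen's theorem use the same convention. Both are stated for general weighted graphs, with $R_G = \infty$ across components, so we simply invoke them as stated. If needed, one could instead apply everything to the induced subgraph $G[V_i]$ and lift via \cref{prop:extendVertexSet} together with $\cL_G \succeq \cL_{G[V_i]}$.
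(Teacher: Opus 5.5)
Your proposal is correct and matches the paper's proof: Chen's inequality (\cref{thm:chen}) reduces the claim to $R_G(u,v) \le \frac{n}{\min w}$, which follows from a path of at most $n-1$ edges, the triangle inequality for effective resistance, and the bound $R_G(e) \le 1/w_e$. Your remark on the convention for $R_G$ in disconnected graphs does no harm, but the argument does not need it.
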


\begin{proof}
By \cref{thm:chen}, it suffices to prove that $\frac{n}{\min w} \ge R_{G}(u, v)$. To see why, by \cref{prop:eff-resistance}, \cref{item:eff1}  we know that $R_G(e) \le \frac{1}{\min w}$ for all $e \in E$. Since $u$ and $v$ are in the same connected component, there is a chain of at most $n-1$ edges $e_1, \hdots, e_\ell$ connecting $u$ and $v$. By \cref{prop:eff-resistance}, \cref{item:eff2} we have that \[R_G(u, v) \le R_G(e_1) + \cdots + R_G(e_\ell) \le \frac{\ell}{\min w} \le \frac{n}{\min w},\] as desired.
\end{proof}

Second, we show that we can move one edge between connected components without affecting the sparsifier too much.

\begin{lemma}\label{lem:move-between}
Let $G = (V, E, w)$ be a weighted graph with connected components $V_1, \hdots, V_r$. Pick $1 \le i < j \le r$ as well as $u_i, v_i \in V_i$ and $u_j, v_j \in V_j$. For any $\delta > 0$, we have that
\[
    \frac{(1+\delta)n}{\delta \min w}\cL_{G} + (1+\delta)\cL_{u_i, u_j} \succeq \cL_{v_i, v_j}.
\]
\end{lemma}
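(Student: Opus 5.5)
The plan is to express the transposition $(v_i v_j)$ as a conjugate of $(u_i u_j)$ by transpositions inside the components, and then control the error terms using \cref{lem:remove-within}.

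\textbf{Step 1: conjugation identity.} Let $\tau_i := (u_i v_i)$ and $\tau_j := (u_j v_j)$, with the convention that $\tau_i$ is the identity if $u_i = v_i$, and likewise for $\tau_j$. Set $\sigma := \tau_i\tau_j$. Since $V_i$ and $V_j$ are disjoint, $\tau_i$ and $\tau_j$ commute, and
\[
(v_i v_j) = \sigma (u_i u_j) \sigma^{-1}.
\]
Since $A_\pi$ is a unitary representation of $\Sn$ on $\C^{2^{[n]}}$, we get $A_{v_iv_j} = A_\sigma A_{u_iu_j} A_\sigma^{*}$, and therefore $\cL_{v_i,v_j} = A_\sigma \cL_{u_i,u_j} A_\sigma^{*}$.

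\textbf{Step 2: a Cauchy--Schwarz-type inequality.} Fix $\psi$ and write $\phi := A_\sigma^*\psi$. Then
\[
\langle \psi, \cL_{v_iv_j}\psi\rangle = \langle \phi, \cL_{u_iu_j}\phi\rangle = \|\cL_{u_iu_j}^{1/2}\phi\|^2 .
\]
Write $\phi = \psi + (\phi-\psi)$ and apply $\|a+b\|^2 \le (1+\delta)\|a\|^2 + (1+1/\delta)\|b\|^2$. Using $\cL_{u_iu_j} \preceq 2\Id$, this gives
\[
\langle \psi, \cL_{v_iv_j}\psi\rangle \le (1+\delta)\langle\psi,\cL_{u_iu_j}\psi\rangle + (1+1/\delta)\,\|\cL_{u_iu_j}^{1/2}(\phi-\psi)\|^2 .
\]

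\textbf{Step 3: bounding the error term.} The error term is where constants must be handled carefully. We have $\psi - \phi = (\Id - A_{\tau_i}A_{\tau_j})\psi$, which splits as
\[
\psi - \phi = (\Id - A_{\tau_i})\psi + A_{\tau_i}(\Id - A_{\tau_j})\psi .
\]
Because $\cL_e^2 = 2\cL_e$ (eigenvalues $0$ and $2$), we have $\|(\Id - A_e)\psi\|^2 = 2\langle\psi,\cL_e\psi\rangle$. A naive bound using $\|\cL_{u_iu_j}\|\le 2$ and the triangle inequality therefore gives
\[
\|\cL_{u_iu_j}^{1/2}(\phi-\psi)\|^2 \le 8\bigl(\langle\psi,\cL_{\tau_i}\psi\rangle + \langle\psi,\cL_{\tau_j}\psi\rangle\bigr).
\]
This is a constant factor too lossy to reach the target coefficient $\frac{(1+\delta)n}{\delta\min w}$. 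I would first try to tighten it by exploiting orthogonality: $\cL_{u_iu_j}^{1/2}$ annihilates the parts of the state that are symmetric under the swap, which should give the sharper bound
\[
\|\cL_{u_iu_j}^{1/2}(\phi-\psi)\|^2 \le \langle\psi,\cL_{\tau_i}\psi\rangle + \langle\psi,\cL_{\tau_j}\psi\rangle .
\]
Alternatively, the same decomposition could be carried out at the level of $\nabla$ in $\C[\Sn]$ and transferred back via \cref{prop:IP-to-EP}. For the final combination, \cref{lem:remove-within} gives $\cL_{\tau_i} + \cL_{\tau_j} \preceq \frac{n_i + n_j}{\min w}\cL_G$, where $n_i, n_j$ are the sizes of the two components. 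This is sharper than $\frac{2n}{\min w}\cL_G$: since the components are disjoint, $n_i + n_j \le n$, and a path inside $V_i$ has length at most $|V_i| - 1$, so the effective-resistance sum in that lemma's proof is bounded by $(n_i-1)/\min w$. Plugging this into Step 2 yields exactly the coefficient $(1+1/\delta)\frac{n}{\min w}$.

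The main obstacle is the constant in Step 3. If the sharp orthogonality bound fails, I expect the lemma to still hold with a worse constant, which the downstream use in \cref{lem:aggregate} (with its slack $\frac{n^3}{\eta\lambda}$) can tolerate.
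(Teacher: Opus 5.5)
Your Steps 1 and 2 are correct. The gap is in Step 3: the ``sharper bound''
\[
\|\cL_{u_iu_j}^{1/2}(\phi-\psi)\|^2 \le \langle\psi,\cL_{\tau_i}\psi\rangle + \langle\psi,\cL_{\tau_j}\psi\rangle
\]
is false, and your claimed coefficient depends entirely on it.

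Take $u_i\neq v_i$ and $u_j\neq v_j$. Let $\psi$ live on the weight-one slice with $\psi(\{v_i\})=1$, $\psi(\{v_j\})=-1$, and $\psi=0$ elsewhere. Then $\langle\psi,\cL_{\tau_i}\psi\rangle=\langle\psi,\cL_{\tau_j}\psi\rangle=1$. But $\phi-\psi=A_\sigma\psi-\psi$ takes the values $1,-1,-1,1$ at $\{u_i\},\{v_i\},\{u_j\},\{v_j\}$, so $\langle\phi-\psi,\cL_{u_iu_j}(\phi-\psi)\rangle=|1-(-1)|^2=4>2$.

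The two pieces $(\Id-A_{\tau_i})\psi$ and $A_{\tau_i}(\Id-A_{\tau_j})\psi$ can add coherently under $\cL_{u_iu_j}^{1/2}$, so there is no orthogonality to exploit. Moving to $\C[\Sn]$ does not help either, since an inequality there would imply this one.

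What is true is that each piece alone has constant $1$. The pairs $(u_iu_j),(u_iv_i)$ and, after conjugating by $\tau_i$, $(v_iu_j),(u_jv_j)$ each share a vertex, and a direct $S_3$ computation gives $\|\cL_{ac}^{1/2}(\Id-A_{ab})\psi\|^2\le\langle\psi,\cL_{ab}\psi\rangle$. So the best available bound is $(1+\mu)\langle\psi,\cL_{\tau_i}\psi\rangle+(1+1/\mu)\langle\psi,\cL_{\tau_j}\psi\rangle$. With $\mu=1$ and your combination $\cL_{\tau_i}+\cL_{\tau_j}\preceq\frac{n_i+n_j}{\min w}\cL_G$, the coefficient can be nearly twice the stated one. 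Your fallback (``a worse constant'') therefore proves a weaker lemma, not this statement.

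Your route can be repaired. Apply Chen's theorem to the edge-disjoint component subgraphs separately: $\cL_{\tau_i}\preceq\frac{n_i-1}{\min w}\cL_{G[V_i]}$ and $\cL_{\tau_j}\preceq\frac{n_j-1}{\min w}\cL_{G[V_j]}$. Take $\mu=\frac{n_j-1}{n_i-1}$. Both weights then become $n_i+n_j-2\le n$, and you recover exactly $\frac{(1+\delta)n}{\delta\min w}$. The cases $u_i=v_i$ or $u_j=v_j$ follow from the single-piece bound.

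The paper avoids the cross term altogether. It applies Chen's theorem once to an auxiliary graph $G'$: $G$ scaled by $\frac{n}{\delta\min w}$, plus a unit-weight bridge $\{u_i,u_j\}$. It then bounds $R_{G'}(v_i,v_j)\le 1+\delta$ by the triangle inequality for effective resistance along $v_i\to u_i\to u_j\to v_j$. Subadditivity of effective resistance along a path is exactly what replaces your norm-splitting, and it loses no constant.
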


\begin{proof}
Define $G' = (V, E', w')$ to be a weighted graph where $E' = E \cup \{(u_i,u_j)\}$ and for all $e \in E'$,
\[
    w'(e) = \begin{cases}
    \frac{n}{\delta \min w}w(e) & e \in E\\
    1 & e = \{u_i, u_j\}.
    \end{cases}
\]
By definition, we have that $\cL_{G'} = \frac{2n}{\delta \min w}\cL_G + \cL_{u_i, u_j}$. By \cref{thm:chen}, we thus have that
\[
    R_{G'}(v_i, v_j) \left[\frac{n}{\delta \min w}\cL_{G} + \cL_{u_i, u_j}\right] \succeq \cL_{v_i, v_j}.
\]
Thus, it suffices to prove that $R_{G'}(v_i, v_j) \le 1 + \delta$.  Similar to the proof of \cref{lem:remove-within}, we make use of \cref{prop:eff-resistance} to argue this fact. In particular, by \cref{prop:eff-resistance}, \cref{item:eff1} we have for all $e \in E$ that
\[
R_{G'}(e) \le \frac{1}{w'(e)} = \frac{1}{\frac{2n}{\delta \min w} w(e)} \le \frac{1}{\frac{2n}{\delta \min w} \min w} = \frac{\delta}{n}. 
\]
Now, consider a shortest path from $v_i$ to $v_j$ in $G'$. This path will use at most $n-2$ edges $e \in E$ as well as the bridge edge $\{u_i, u_j\}$. By \cref{prop:eff-resistance}, \cref{item:eff2} we can thus deduce that
\[
    R_{G'}(v_i, v_j) \le R_{G'}(u_i, u_j) + (n-2) \cdot \frac{\delta}{n} \le 1 + \delta, 
\]
where the last step uses that $R_{G'}(u_i, u_j) \le 1$ by \cref{prop:eff-resistance}, \cref{item:eff1}.\end{proof}

We are now equipped to prove \cref{lem:aggregate}.

\begin{proof}[Proof of \cref{lem:aggregate}.]
If $n = 1$, the inequality is trivial, so we assume that $n \ge 2$. First, write $H = H_0 \uplus H_1$ where $H_0 = (V, E'_0, w'_0)$ and $H_1 = (V, E'_1, w'_1)$, where $E'_0 \subseteq E'$ is the set of edges of $H$ within some component of $G$ and $E'_1 \subseteq E'$ is the set of edges which cross two components of $G$. We define $H' = H'_0 \uplus H'_1$ analogously. Observe that by \cref{lem:remove-within}, we have that
\begin{align}
    \cL_{H'_0} = \sum_{e \in E''_0} w''(e) \cL_{e} \preceq \sum_{e \in E''_0} \frac{n w''(e)}{\min w} \cL_{G} \preceq \frac{n^3 \max w''}{\min w} \cL_G \preceq \frac{n^3}{\lambda} \cL_{G}.\label{eq:H'_0}
\end{align}

We now seek to relate $\cL_{H_1}$ with $\cL_{H'_1}$. To do this, we first form what we call a \emph{movement plan} between $H_1$ and $H'_1$. More precisely, we say that $\cM = \{(\lambda_i, e_i, f_i) : i \in [L]\}$ is a movement plan between $H_1$ and $H'_1$ if for all $i \in [L]$ we have that $e_i \in E'_1$, $f_i \in E''_1$ and that $e_i$ and $f_i$ bridge the same pair of connected components of $G$. We further require that
\[
    \cL_{H_1} = \sum_{i = 1}^L \lambda_i \cL_{e_i} \text{\ \ \ and \ \ \ } \cL_{H'_1} = \sum_{i = 1}^L \lambda_i \cL_{f_i}.
\]
By \cref{def:equivalent}, such a movement plan always exists and may take it to have size $L \le |E'_1| + |E''_1| \le 2\binom{n}{2} \le n^2$ by repeatedly picking arbitrary edges $e_i \in E'_1$ and $f_i \in E''_1$ whose movements have not been fully assigned and setting $\lambda_i$ to be as large as possible.

We now make crucial use of \cref{lem:move-between}. Set $\delta = \frac{n^3}{\eta \lambda}$. For any $i \in [L]$, we have the following inequality
\begin{align*}
\lambda_i\frac{(1+\delta)n}{\delta \min w}\cL_{G} + (1+\delta)\lambda_i\cL_{e_i} &\succeq \lambda_i\cL_{f_i}.
\end{align*}
Summing this inequality over all $i \in [L]$, we have that
\begin{align*}
\left(\sum_{i=1}^L \lambda_i\right)\frac{(1+\delta)n}{\delta \min w}\cL_{G} + (1+\delta)\cL_{H_i} &\succeq \cL_{H'_1}.
\end{align*}
Observe that $\sum_{i=1}^L \lambda_i = \sum_{e \in E'_1} w'(e) \le n^2 \max w'$. Thus, we actually have that
\[
    \left(\sum_{i=1}^L \lambda_i\right)\frac{(1+\delta)n}{\delta \min w} \le (1 + \delta) \frac{n^3 \max w'}{\delta \min w} = (1 + \delta) \frac{\eta \lambda \max w'}{\min w}\le (1+\delta)\eta.
\]
That is,
\begin{align}
    \left(\frac{n^3}{\lambda} + \eta\right)\cL_{G} + \left(1 + \frac{n^3}{\eta \lambda}\right)\cL_{H_1} \ge \cL_{H'_1}.\label{eq:H'_1}
\end{align}

We now seek to prove the RHS of \cref{eq:aggregate}. Observe that
\begin{align*}
    &\left(1 + \frac{n^3}{\eta\lambda} + \eta\right) \cL_{G \uplus H}\\
    &\succeq \cL_{G} + \left(\frac{n^3}{\lambda}\right)\left(\frac{1}{\eta} - 1\right) \cL_G + \left(\frac{n^3}{\lambda} + \eta\right) \cL_{G} + \left(1 + \frac{n^3}{\eta\lambda}\right)\cL_{H_1}\\
    &\succeq \cL_{G} + \left(\frac{n^3}{\lambda}\right)\cL_G + \left[\left(\frac{n^3}{\lambda} + \eta\right) \cL_{G} + \left(1 + \frac{n^3}{\eta\lambda}\right)\cL_{H_1}\right] & (\eta \le 1/2)\\
    &\succeq \cL_{G} + \cL_{H'_0} + \cL_{H'_1} & (\text{\cref{eq:H'_0}, \cref{eq:H'_1}})\\
    &= \cL_{G \uplus H'}.
\end{align*}
By symmetry, we know that $\left(1 + \frac{n^3}{\eta\lambda} + \eta\right) \cL_{G \uplus H'} \succeq \cL_{G \uplus H}$, from which we can deduce the LHS of \cref{eq:aggregate} by the identity that $\frac{1}{1+x} \ge 1 - x$ for all $x \ge 0$. Thus, \cref{eq:aggregate} holds.\end{proof}

\subsection{Building the Sparsifier}\label{subsec:building-weighted}

We now prove our main weighted sparsifier theorem.

\begin{theorem}\label{thm:WeightedSparsifierExistence}
    Let $G = (V, E, w)$ be a weighted graph on $n$ vertices, and let $\eps \in (0, 1)$. Then, there exists a re-weighted subgraph $G' = (V, E', w')$ such that 
    \[
    (1 - \eps) \cdot \mathcal{L}_G \preceq \mathcal{L}_{G'} \preceq (1 + \eps) \cdot \mathcal{L}_G
    \]
    and $|E'| \leq O(n \log^{11}(n) / \eps^2)$.
\end{theorem}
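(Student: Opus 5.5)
We will prove \cref{thm:WeightedSparsifierExistence} by combining weight bucketing with \cref{thm:WeightedSparsifierExistenceSloppy} inside each bucket and with \cref{lem:aggregate} across buckets. The goal is that the total size depends only on the number of vertices each bucket actually touches, not on the full aspect ratio.

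\textbf{Setup.} Normalize so that $\min w = 1$. Partition $E$ into geometric classes $C_j = \{e : w(e) \in [n^{j}, n^{j+1})\}$. Group these classes into buckets $B_t$, each consisting of $O(1)$ consecutive classes, and separate the buckets by gaps of $n^{20}$. Concretely, we can use the standard trick of taking every $\Theta(1)$-th class and choosing an offset. Alternatively, we define $B_t$ greedily: each time the weight jumps by more than a factor of $n^{20}$ between consecutive nonempty classes, we start a new bucket. A single bucket may then have an unbounded ratio, so the offset/alternating version is the safer choice. Within each bucket the aspect ratio is $n^{O(1)}$, so $\log(n \max w/\min w) = O(\log n)$ there.

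\textbf{Processing buckets from heavy to light.} Let $G_{\ge t}$ be the union of buckets heavier than $t$, and let $\mathcal{C}_t$ be the connected components of $G_{\ge t}$. Apply \cref{lem:aggregate} with $G = G_{\ge t}$ and $H = B_t$, where $\lambda \ge n^{20}/n^{O(1)}$ and $\eta = \eps/\log n$. This lets us replace $B_t$ by an equivalent graph $B'_t$ with the following shape: all edges inside a component of $G_{\ge t}$ are dropped, and all crossing edges between a pair of components are merged onto one representative edge.

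After this replacement, the edges of $B'_t$ form a graph on the contracted vertex set $\mathcal{C}_t$. We restrict further to the components of $G_{\ge t}\cup B_t$ that merge nontrivially. We then sparsify this contracted instance with \cref{thm:WeightedSparsifierExistenceSloppy}, which uses $O(k_t \,\mathrm{polylog}(n)/\eps^2)$ edges, where $k_t$ is the number of nonsingleton components of $\mathcal{C}_t$ touched by $B_t$. To transfer the sparsification back to $V$, we apply \cref{prop:extendVertexSet} together with \cref{prop:composeSparsifier}. This step needs care: a sparsifier of the contracted graph lifted via representative vertices must be a sparsifier of $B'_t$ on $V$. This holds because $B'_t$ is literally a graph on the representative vertices, so we sparsify $B'_t$ itself.

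\textbf{Charging and error.} Each time components merge, the number of components decreases, and $\sum_t k_t \le 2n$ by a standard forest-charging argument. A merge of $k$ components reduces the count by at least $k/2$, or rather by at least $k-1 \ge k/2$ when $k \ge 2$. Summing over buckets therefore gives $O(n\,\mathrm{polylog}(n)/\eps^2)$ edges in total. For the error, \cref{lem:aggregate} is applied once per bucket, and each application multiplies by a factor of $(1\pm O(\eta))$.

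\textbf{Main obstacle.} The errors compound over up to $n^2$ buckets, so $\eta$ must be $\eps/n^2$, and then we need $\lambda \ge n^{5}/\eps$. We can absorb this by using gaps of $n^{20}$ and $\eps \ge 1/n$; the case $\eps < 1/n$ is trivial, as in \cref{thm:WeightedSparsifierExistenceSloppy}. The deeper obstacle is that \cref{lem:aggregate} compares $G\uplus H$ to $G\uplus H'$ with the \emph{exact} heavy graph $G$, whereas we want to use its sparsifier. We will handle this by processing from light to heavy instead: we prove $\cL_{\text{all}} \approx \cL_{\text{sparsified heavy}} + \cL_{B'_t}$ inductively, and note that spectral approximation of $G_{\ge t}$ preserves its connected components, so the equivalence in \cref{def:equivalent} is unchanged. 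The remaining task is to get the exponent of $\log$ down to $11$. This requires running the unweighted expander argument (\cref{thm:KikuchiSparsifierExistence}) on the bucket-local graphs, whose aspect ratio is only $n^{O(1)}$, and I expect this bookkeeping to be the tedious part.
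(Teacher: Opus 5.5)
Your architecture matches the paper's. That covers well-separated weight buckets (the paper uses buckets of ratio $n^{20}$ and treats even and odd indices separately), contracting each component of the heavier part to one representative with crossing weights aggregated and internal edges dropped, \cref{thm:WeightedSparsifierExistenceSloppy} on the contracted graph, and charging sparsifier size to the drop in the number of components. The gap is in your ``transfer back to $V$'' step.

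\textbf{The transfer step.} Suppose $B'_t$ is a graph on the representatives $v_i$, which is what \cref{prop:extendVertexSet} and your $O(k_t)$ count require. Then its sparsifier consists of edges $\{v_i,v_j\}$ that are generally not edges of $G$, so the output is not a re-weighted subgraph as the theorem demands. Suppose instead each ``representative edge'' is a genuine crossing edge of $G$. Then $B'_t$ does not live on one vertex per component, and nothing you cite shows that lifting a sparsifier of the contracted graph sparsifies $B'_t$, since $\cL_{B'_t}$ depends on the actual endpoints and not just on which components they join. The paper closes this with a second use of \cref{lem:aggregate}. For every edge $\{v_i,v_j\}$ of the sparsifier $H'_{t+2}$ of the contracted graph $H_{t+2}$, it picks an actual edge of $G_{t+2}$ between $V_i$ and $V_j$ and gives it the same weight. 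The resulting $G'_{t+2}$ is equivalent to $H'_{t+2}$ in the sense of \cref{def:equivalent}, so the lemma swaps one for the other at cost $O(n^{-6})$.

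\textbf{Where the ``deeper obstacle'' sits.} That second application is exactly where it lives, because it must be made relative to the already-sparsified heavy part. No reversal of the processing order is needed; also, the invariant you sketch does not say what happens to the lighter buckets, so it does not yet give an induction. Write $G_{\le t}$ for the union of heavier buckets in one parity class and $\eps_t=\eps/2+M_t/n^5$. The paper goes heavy to light with the invariant that $G'_{\le t}$ is an $\eps_t$-sparsifier of $G_{\le t}$ with at most $f(n-C_t)$ edges. The inductive step is a three-link chain:
\begin{enumerate}
\item \cref{lem:aggregate} relative to the exact $G_{\le t}$ replaces $G_{t+2}$ by $H_{t+2}$.
\item Composability replaces $G_{\le t}\uplus H_{t+2}$ by $G'_{\le t}\uplus H'_{t+2}$. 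The two errors combine as a maximum, so the per-bucket $\eps/2$ never compounds.
\item \cref{lem:aggregate} relative to $G'_{\le t}$ replaces $H'_{t+2}$ by $G'_{t+2}$.
\end{enumerate}
The last link is legitimate because $G'_{\le t}$ has the same components as $G_{\le t}$ (your observation) and its weights remain polynomially larger than those of bucket $t+2$. More cleanly, apply the lemma to the graph $(1-\eps_t)G_{\le t}$ and add back the PSD remainder $\cL_{G'_{\le t}}-(1-\eps_t)\cL_{G_{\le t}}$.

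\textbf{Two smaller points.} With $\eta=\eps/n^2$ the lemma needs $\lambda$ of order $n^7/\eps^2$, not $n^5/\eps$; this is harmless given $n^{20}$ gaps and $\eps\ge 1/n$. The exponent $11$ needs no extra work: the contracted graphs have aspect ratio $n^{O(1)}$, so \cref{thm:WeightedSparsifierExistenceSloppy} already gives $O(k\log^{11}n/\eps^2)$ edges per merged group.
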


\begin{proof}
We may assume that $n$ is at least a sufficiently large constant. Without loss of generality, we may normalize the weights of $G$ so that $\max w = 1$. We also may assume without loss of generality that $\eps \ge 1/n$. Set $\kappa = n^{-20}$, for integers $i \ge 0$, we define the $i$th weight bucket to be $B_i = [\kappa^i, \kappa^{i+1}]$. Likewise, let $G_i$ be $G$ restricted to the weights in the $i$th bucket. Despite there being infinitely many buckets, at most $\binom{n}{2}$ have an edge. Let $I$ be the set of indices $i \in I$ for which $G_i$ is nonempty. Now define
\begin{align*}
G_{\mathrm{even}} = \biguplus_{\substack{i \in I\\i\text{ even}}} G_i \text{ \ \ \  and \ \ \ }
G_{\mathrm{odd}} = \biguplus_{\substack{i \in I\\i\text{ odd}}} G_i.
\end{align*}
Clearly to sparsify $\cL_{G}$ it suffices to sparsify each of $\cL_{G_{\mathrm{even}}}$ and $\cL_{G_{\mathrm{odd}}}$. We now demonstrate how to sparsify $\cL_{G_{\mathrm{even}}}$, with the sparsification of $\cL_{G_{\mathrm{odd}}}$ following by near-identical methods.

Given an even integer $t \ge 0$, define
\[
    G_{\mathrm{even}, \le t} = \biguplus_{i \in I \cap \{0, 2, 4, \hdots, t\}} G_i.
\]

Define $C_t$ and $M_t$ to be the number of connected components and edges which $G_{\mathrm{even}, \le t}$ has. We seek to prove by induction the following statement.

\begin{proposition}\label{prop:induct}
For all $t$, we have that $G_{\mathrm{even}, \le t}$ has an $\frac{\eps}{2} + \frac{M_t}{n^5}$-sparsifier $G_{\mathrm{even}, \le t}'$  with at most $f(n - C_t)$ edges, where $f(n) = \Theta(n \log^{11}(n) / \eps^2)$ is a subadditive function.
\end{proposition}

As a base case of $t = 0$, assume $G_0$ has components $V_1, \hdots, V_r$ where $r = C_0$. We can apply \cref{thm:WeightedSparsifierExistenceSloppy} on each component to get an $\eps/2$-sparsifier with at most $f(|V_i| - 1)$ edges (using the fact that $\log(1/\kappa) = 20 \log n$). When these components are aggregated, the total sparsifier size is
\[
    \sum_{i=1}^{r} f(|V_i|-1) \le f(n - r) = f(n - C_0),
\]
as desired.

Now consider $t \ge 2$ for which we assume by the inductive hypothesis that \cref{prop:induct} holds. If $G_{t+2}$ is empty, then the inductive step is trivial. Thus, assume that $G_{t+2}$ has at least one new edge. Let $V_1, \hdots, V_r$ (with $r = C_t$) be the connected components of $G_{\mathrm{even}, \le t}$, and pick arbitrary representatives $v_1 \in V_1, \hdots, v_r \in V_r$. Define a new graph $H_{t+2} = (V, E_H, w_H)$ such that for all $1 \le i < j \le n$ where is an edge between $\{v_i, v_j\} \in E_H$ with $w_H(v_i, v_j)$ being the total sum of weights of edges between $V_i$ and $V_j$ in $G_{t+2}$ (unless $w_H(v_i, v_j) = 0$ in which case we do not include $\{v_i,  v_j\}$ in $E_H$). One can verify that $H_{t+2}$ has exactly $C_{t+2}$ connected components.

Let $H'_{t+2} = (V, E'_H, w'_H)$ be an $(\eps/2)$-sparsifier of $H_{t+2}$ with $f(C_t - C_{t+2})$ edges. For every $\{v_i, v_j\} \in E'_H$, let $\{u^{ij}_i, u^{ij}_j\} \in E$ be an edge of $G_{t+2}$ which goes between $V_i$ and $V_j$.

Let $G'_{t+2} = (V, E'_G, w'_G)$ be a weighted graph where $E'_G = \{\{u^{ij}_i, u^{ij}_j\}  \mid \{v_i, v_j\} \in E'_H\}$ and $w'_G(u^{ij}_i, u^{ij}_j) = w'_H(v_i, v_j)$. We claim that $G'_{\mathrm{even}, \le t+2}$ can be chosen to be $G'_{\mathrm{even}, \le t} \uplus G'_{t+2}$. First, the edge count works out because $G'_{t+2}$ has the same number of edges as $H'_{t+2}$ which is $f(C_t - C_{t+2})$ so $G'_{\mathrm{even}, \le t+2}$ has at most
\[
    f(n - C_t) + f(C_t - C_{t+2}) \le f(n - C_{t+2})
\]
edges, as desired. To check the accuracy of this sparsifier, we invoke \cref{lem:aggregate} multiple times with $\eta = 1/n^6$ and $\lambda \ge \kappa^{t}/(2n^2\kappa^{t+2}) \ge n^{37}$. First, since $G_{t+2}$ and $H_{t+2}$ are equivalent, we may invoke the lemma to get
\[
        \left(1 - \frac{2}{n^6}\right) \cL_{G_{\mathrm{even}, \le t} \uplus G_{t+2}} \preceq \cL_{G_{\mathrm{even}, \le t} \uplus H_{t+2}} \preceq \left(1 + \frac{2}{n^6}\right) \cL_{G_{\mathrm{even}, \le t} \uplus G_{t+2}}.
\]
Likewise, since $G'_{t+2}$ and $H'_{t+2}$ are equivalent, we may invoke the lemma again with respect to the graph $G'_{\mathrm{even}, \le t}$ using $\eta = 1/n^6$ and\footnote{The choice of $\lambda$ uses the fact that no sparsifier of $G_{t+2}$ or $H_{t+2}$ can have any edge weight exceed $(1+\eps)n^2\kappa^{t+2} \le 2n^2\kappa^{t+2}$.} $\lambda \ge \kappa^t/(4n^2\kappa^{t+2}) \ge n^{37}$ to get that
\[
    \left(1 - \frac{2}{n^6}\right) \cL_{G'_{\mathrm{even}, \le t} \uplus H'_{t+2}} \preceq \cL_{G'_{\mathrm{even}, \le t} \uplus G'_{t+2}} \preceq \left(1 + \frac{2}{n^6}\right) \cL_{G'_{\mathrm{even}, \le t} \uplus H'_{t+2}}.
\]

Since $H'_{t+2}$ is an $(\eps/2)$-sparsifier of $H_{t+2}$ and $G'_{\mathrm{even}, \le t}$ is an $\frac{\eps}{2} + \frac{M_t}{n^5}$-sparsifier $G_{\mathrm{even}, \le t}$ we also have that
\[
    \left(1 - \frac{\eps}{2} - \frac{M_t}{n^5}\right)\cL_{G_{\mathrm{even}, \le t} \uplus G_{t+2}} \le \cL_{G'_{\mathrm{even}, \le t} \uplus H'_{t+2}} \preceq \left(1 + \frac{\eps}{2} + \frac{M_t}{n^5}\right)\cL_{G_{\mathrm{even}, \le t} \uplus G_{t+2}}.
\]
Combining these three equations, we get that
\[
    \left(1 - \frac{\eps}{2} - \frac{M_t}{n^5}\right)\left(1 - \frac{2}{n^6}\right)^2\cL_{G_{\mathrm{even}, \le t} \uplus G_{t+2}} \preceq \cL_{G'_{\mathrm{even}, \le t} \uplus G'_{t+2}} \preceq \left(1 + \frac{\eps}{2} + \frac{M_t}{n^5}\right)\left(1 + \frac{2}{n^6}\right)^2\cL_{G_{\mathrm{even}, \le t} \uplus G_{t+2}} 
\]
Since $M_t + 1 \le M_{t+2}$ and $n$ is sufficiently large, we get that \[
\left(1 - \frac{\eps}{2} - \frac{M_{t+2}}{n^5}\right)\cL_{G_{\mathrm{even}, \le t} \uplus G_{t+2}} \preceq \cL_{G'_{\mathrm{even}, \le t} \uplus G'_{t+2}} \preceq \left(1 + \frac{\eps}{2} + \frac{M_{t+2}}{n^5}\right)\cL_{G_{\mathrm{even}, \le t} \uplus G_{t+2}},
\]
completing the proof of the inductive hypothesis \cref{prop:induct}. From this, \cref{thm:WeightedSparsifierExistence} immediately follows.
\end{proof}

\section{Conclusion}

In this paper, we showed how to combine a harmonic decomposition of hypercube slices, a powerful operator inequality of Alon and Kozma, and an expander decomposition framework to efficiently compute near-optimal sparsifiers of all Quantum Cut Hamiltonians. We conclude with a few potential directions for future research.

\paragraph{Tightening \cref{thm:main}.}
Our current method of proving \cref{thm:main} necessarily pays a number of $\log n$ factors in the size of the sparsifier. On the other hand, in the classical setting, results like \cite{BatsonSS14} show that $O(n/\eps^2)$-sized sparsifiers are possible for spectral sparsifiers. Are $O_{\eps}(n)$-sized quantum cut sparsifiers possible as well?

\paragraph{Sparsifiers for Broader Families of Local Hamiltonians.} As pursued in \cite{BasuBP26}, sparisfying quantum cut is one example of a large variety of local Hamiltonians which one could hope to sparsify. Our \cref{thm:main} not only resolves Question 10.2 of~\cite{BasuBP26}, but our methods also give partial insight into their Question 10.1 on relating the non-redundancy of systems of local Hamiltonians to their ultimate sparsifiability. As a first goal, we believe that with tools like the Alon-Kozma inequality~\cite{AlonKozma20}, it should be feasible to classify the sparsifiability of all 2-local Hamiltonian systems.

\paragraph{Further Applications of the Octopus Inequality.} The key technical tool toward proving the Alon-Kozma inequality was the octopus inequality designed by Caputo, Liggett, and Richthammer~\cite{CLR10}. Given our paper is the first to apply this octopus inequality to sparsification, we hope that our work encourages the study of more applications of the octopus inequality to various problems in quantum physics and theoretical computer science.

Of note, if we wish to apply these ideas to $k$-local Hamiltonian systems with $k \geq 3$, we likely need to understand \emph{hypergraph} generalizations of the octopus inequality. Recent works by Alon, Kozma, and Puder~\cite{AlonKP25,AlonP26} have started laying the groundwork for understanding the octopus inequality in these hypergraph settings.

\section*{Acknowledgments}

The development in \cref{sec:sparsify-expander} follows from combining the ideas from~\cite{Kothari25} along with the powerful Alon-Kozma inequality~\cite{AlonKozma20}. The authors learned of this strategy from an answer provided by ChatGPT 5.5 Pro in an interactive prompting session on Kikuchi sparsification for expander graphs. 
The proofs of \cref{sec:expander-decomposition,sec:weighted} were done without the use of AI. The content of this paper is entirely written by the authors.

A.B. thanks Doron Puder and Irit Dinur for helpful discussions regarding the Alon-Kozma inequality. J.B. thanks Venkatesan Guruswami for valuable conversations and encouragement. A.P. would like to thank Harald Putterman, Fernando Brand\~{a}o, Yeongwoo Hwang, and Anurag Anshu for giving context and motivation for this problem in its early stages.

\bibliographystyle{alphaurl}
\bibliography{quantum}

\appendix

\end{document}